\algnewcommand{\LineComment}[1]{\Statex \hskip\ALG@thistlm \(\triangleright\) #1}
\algnewcommand{\Statey}{\Statex \hskip\ALG@thistlm}
\Crefname{remark}{Remark}{Remarks}
\Crefname{observation}{Observation}{Observations}
\theoremstyle{plain}
\newtheorem{theorem}{Theorem}[section]
\newtheorem{lemma}[theorem]{Lemma}
\newtheorem{proposition}[theorem]{Proposition}
\newtheorem{corollary}[theorem]{Corollary}
\theoremstyle{definition}
\newtheorem{definition}[theorem]{Definition}
\theoremstyle{plain}
\theoremstyle{remark}
\newcommand{\fB}{\mathcal{B}}
\newcommand{\fF}{\mathcal{F}}
\newcommand{\fR}{\mathcal{R}}
\newcommand{\imax}{i_{\max}}
\newcommand{\dlog}{d_{\mathsf{log}}}
\newcommand{\dlogsup}[1]{\dlog^{(#1)}}
\newcommand{\Otilde}{\widetilde{O}}
\DeclareMathOperator{\poly}{poly}
\DeclareMathOperator{\dist}{dist}
\DeclareMathOperator{\col}{col}
\DeclareMathOperator{\indeg}{indeg}
\DeclareMathOperator{\outdeg}{outdeg}
\newcommand{\algoname}[1]{\textnormal{{\textsc{#1}}}}\newcommand{\EliminationWalk}{\algoname{EliminationWalk}}
\newcommand{\RulingSubgraphs}{\algoname{RulingSubgraphs}}
\newcommand{\RulingOrchids}{\algoname{RulingOrchids}}
\newcommand{\DirectionalEliminationWalk}{\algoname{DirectionalEliminationWalk}}
\newcommand{\PausingEliminationWalk}{\algoname{PausingEliminationWalk}}
\newcommand{\OutdegreeReduction}{\algoname{OutdegreeReduction}}
\newcommand{\LDelEC}{L$\Delta$EC}
\newcommand{\NLDelEC}{NL$\Delta$EC}
   \newcommand{\Tlayer}{T_{\mathsf{layercol}}}  \newcommand{\Talg}[1]{T_{\hyperref[#1]{\textsf{Alg.\ref*{#1}}}}}
\newcommand{\Trsg}{\Talg{alg:rulingsubgraphs}}
\newcommand{\LOCAL}{LOCAL\xspace}
\newcommand{\CONGEST}{CONGEST\xspace}
\DeclarePairedDelimiter{\set}{\{}{\}}
\DeclarePairedDelimiter{\card}{\lvert}{\rvert}
\DeclarePairedDelimiter{\floor}{\lfloor}{\rfloor}
\DeclarePairedDelimiter{\ceil}{\lceil}{\rceil}
\DeclarePairedDelimiter{\parens}{(}{)}
\newenvironment{mycover}
{\list{}{\listparindent 0pt
        \itemindent    \listparindent
        \leftmargin    1cm
        \rightmargin   1cm
        \parsep        0pt}\raggedright
    \item\relax}
{\endlist}
\newcommand{\myorcid}[1]{{\small \orcidlink{#1}}}
\newcommand{\myaff}[1]{\,$\cdot$\, {\small #1}\par\smallskip}
\definecolor{darkgreen}{rgb}{0,0.5,0}
\definecolor{darkred}{rgb}{0.4,0,0}
\begin{document}

\setcounter{page}{0}
\thispagestyle{empty}

\begin{mycover}
    {\huge\bfseries\boldmath Faster Distributed $\Delta$-Coloring via Ruling Subgraphs\par}
    \bigskip
    \bigskip
    \bigskip
    \textbf{Yann Bourreau}
    \myorcid{0009-0001-1819-8348}
    \myaff{CISPA Helmholtz Center for Information Security}

    \textbf{Sebastian Brandt}
    \myorcid{0000-0001-5393-6636}
    \myaff{CISPA Helmholtz Center for Information Security}
    
    \textbf{Alexandre Nolin}
    \myorcid{0000-0002-3952-0586}
    \myaff{CISPA Helmholtz Center for Information Security}
\end{mycover}
\bigskip

\begin{abstract}
	Brooks' theorem states that all connected graphs but odd cycles and cliques can be colored with $\Delta$ colors, where $\Delta$ is the maximum degree of the graph. Such colorings have been shown to admit non-trivial distributed algorithms [Panconesi and Srinivasan, Combinatorica 1995]\nocite{panconesi95delta} and have been studied intensively in the distributed literature.
    In particular, it is known that any deterministic algorithm computing a $\Delta$-coloring requires $\Omega(\log n)$ rounds in the LOCAL model [Chang, Kopelowitz, and Pettie, FOCS 2016], and that this lower bound holds already on constant-degree graphs.
    In contrast, the best upper bound in this setting is given by an $O(\log^2 n)$-round deterministic algorithm that can be inferred already from the works of [Awerbuch, Goldberg, Luby, and Plotkin, FOCS 1989] and [Panconesi and Srinivasan, Combinatorica 1995] roughly three decades ago, raising the fundamental question about the true complexity of $\Delta$-coloring in the constant-degree setting.

    We answer this long-standing question almost completely by providing an almost-optimal deterministic $O(\log n \log^* n)$-round algorithm for $\Delta$-coloring, matching the lower bound up to a $\log^* n$-factor.
    Similarly, in the randomized LOCAL model, we provide an $O(\log \log n \log^* n)$-round algorithm, improving over the state-of-the-art upper bound of $O(\log^2 \log n)$ [Ghaffari, Hirvonen, Kuhn, and Maus, Distributed Computing 2021]\nocite{GHKM_dc21} and almost matching the $\Omega(\log \log n)$-round lower bound by [BFHKLRSU, STOC 2016].

    Our results make progress on several important open problems and conjectures.
    One key ingredient for obtaining our results is the introduction of \emph{ruling subgraph families} as a novel tool for breaking symmetry between substructures of a graph, which we expect to be of independent interest.
\end{abstract}

\clearpage
\setcounter{page}{0}
\thispagestyle{empty}

\tableofcontents
\clearpage

\section{Introduction}
Since the dawn of the theoretical study of distributed algorithms in the 80s \cite{linial1987distributive}, two central goals of this area have been to improve the known complexity bounds for fundamental problems and develop generic techniques for designing faster and faster algorithms.
In this work, we will contribute to both of these goals by developing a new and faster technique for an essential objective in distributed computation---breaking symmetry between large substructures of the input instance---and use this technique to design distributed algorithms that obtain improved complexity bounds for the fundamental problem of $\Delta$-coloring (where $\Delta$ denotes the maximum degree of the input graph).

\paragraph{Model of computation.}
We consider the standard LOCAL model of distributed computation~\cite{linial1992locality,peleg2000}. In the LOCAL model, a network is modeled by a graph, whose nodes are computational entities that communicate with each other to solve a given graph problem.
In the beginning of the computation, each node is only aware of its own degree and a globally unique identifier assigned to it.
The distributed computation then proceeds in synchronous rounds: in each round each node can send an arbitrarily large message to each of its neighbors, and, after receiving the messages from its neighbors, perform some internal computation.
Each node has to decide to terminate at some point, upon which it ceases participation in the communication and outputs its local part of the computed solution; if, for instance, the task is to compute a proper coloring, then each node has to output a color such that the resulting coloring is indeed proper.
The complexity of such a distributed algorithm is simply the number of rounds until the last node terminates; the complexity of a problem is the complexity of an optimal algorithm for the respective problem.
We will use $n$ and $\Delta$ to refer to the number of nodes and the maximum degree of the input graph, respectively.
The related CONGEST model~\cite{peleg2000}, which we will mention on a few rare occasions, additionally imposes a maximum message size of $O(\log n)$ bits.

\paragraph{A fundamental dichotomy.}
In the LOCAL model, arguably the vast majority of research has focused on the class of so-called \emph{locally checkable problems}~\cite{naor1995what,brandt2019automatic}---problems that can be described via local constraints, such as coloring problems, which can be described by the local constraint that adjacent nodes receive different colors. 
Locally checkable problems contain most of the commonly studied problems, including maximal independent set, maximal matching, coloring problems, orientation problems, the algorithmic Lov\'asz local lemma problem (LLL), and many more.

By a seminal result of~\cite{chang2019exponential}, it is known that locally checkable problems satisfy a remarkable dichotomy: each problem either is solvable in $O(f(\Delta) \cdot \log^* n)$ deterministic rounds\footnote{$\log^* n$ is the minimum number of times the $\log$-function has to be applied iteratively to $n$ to obtain a value $\leq 1$.} for some function $f$ or exhibits a deterministic $\Omega(\log n)$-round complexity lower bound that already holds on constant-degree graphs.
In other words, the dependency of the deterministic complexity of any locally checkable problem on the size of the input instance is either very, very small or at least logarithmic, and both classes of problems have been studied in great detail over the past decades.
While the results of \cite{chang2019exponential} imply a ``normal form algorithm'' for solving any problem in the first category (such as maximal independent set, maximal matching, or $(\Delta + 1)$-coloring) in $O(f(\Delta) \cdot \log^* n)$ rounds, it is a major open question how to obtain algorithms with a runtime that depends\footnote{To be precise, whenever we talk about the dependency of the complexity of a problem (or of the runtime of an algorithm) on $n$, we mean the dependency on $n$ \emph{when admitting an arbitrary dependency on $\Delta$}. In other words, we are interested in functions $g(n)$ such that the considered complexity/runtime bounds are of the form $f(\Delta) \cdot g(n)$ for some function $f$.}
only logarithmically on $n$ for the problems in the second category.
Of course, many problems in this second category may not admit such fast algorithms at all, but there are a variety of important problems in this category, such as sinkless orientation, $\Delta$-coloring,
or LLL, for which algorithms with a polylogarithmic dependency on $n$ are known~\cite{ghaffari2017distributed,rozhon2020polylogarithmic}, making it plausible that a logarithmic dependency on $n$ might be achievable.

\paragraph{Logarithmic dependencies on $n$?}
Nevertheless, to the best of our knowledge, the only natural problems in the second category for which such a tight logarithmic dependency on $n$ is known are sinkless orientation~\cite{brandt2016lower} and closely related problems such as degree splitting problems~\cite{ghaffari2017distributed,ghaffari2020improved}.

Sinkless orientation is the problem of orienting the edges of the input graph such that each node of degree at least $3$ has at least one outgoing incident edge.
On a high level, a natural way to solve this problem is to find a suitable collection of non-overlapping cycles and nodes of degree at most $2$, orient, for each cycle $C$, the edges of $C$ consistently, and then orient all other edges towards the closest such cycle or node of degree at most $2$.
This guarantees that each node of degree at least $3$ has at least one outgoing edge and can be performed in $O(\log^2 n)$ rounds\footnote{See the discussion following \cref{thm:rulingsubgraphs} for more details.} in the LOCAL model.
Moreover, the underlying intuition---finding structures in the graph on which we can satisfy the local constraints specified by the studied problem (i.e., locally solve the problem) even if the output on the remaining graph is chosen adversarially---is sufficiently \emph{generic} that it arguably should also be applicable to other problems.

Unfortunately, the sinkless orientation algorithm that achieves a \emph{logarithmic} dependency on $n$~\cite{ghaffari2017distributed} makes use of a strong property that is specific to sinkless orientation---in particular the fact that a single outgoing edge suffices per node, which allows the authors to avoid having to consistently orient the aforementioned cycles.
As such, the approach from~\cite{ghaffari2017distributed} is considerably less generic than the $O(\log^2 n)$-round one and does not easily transfer to other problems.
Similarly, the approaches with a logarithmic runtime dependency on $n$ for the aforementioned splitting problems~\cite{ghaffari2017distributed,ghaffari2020improved} suffer from the same problem due to their close relation to sinkless orientation (e.g., the presented approaches often rely on sinkless orientation as a crucial subroutine).
This raises the following fundamental question.
\vspace{0.05cm}
\begin{tcolorbox}
	\textbf{Question 1}
    
    \noindent How can we develop a generic technique for designing deterministic algorithms exhibiting an only logarithmic runtime dependency on $n$?
\end{tcolorbox}

We remark that there is a close relation between deterministic and randomized complexities of locally checkable problems.
As shown by~\cite{chang2019exponential}, the deterministic and randomized complexities of any locally checkable problem differ at most exponentially, while the \emph{shattering} technique introduced in the distributed context by~\cite{barenboim2016locality} allows to achieve such an exponential improvement via randomness for a variety of problems.
As such, while we state some important questions only in the deterministic setting, there are closely connected randomized questions of a similar flavor.

\paragraph{$\Delta$-coloring.}
Perhaps the best-studied and most natural problem falling into the aforementioned second category (i.e., exhibiting an $\Omega(\log n)$ complexity lower bound) is the $\Delta$-coloring problem.
$\Delta$-coloring is a classic graph problem that has been studied in numerous computational models; the task is to color each vertex of the input graph with a color from $\{ 1, \dots, \Delta \}$ such that any two adjacent vertices receive different colors.
As opposed to the similarly well-studied problem of coloring with $\Delta + 1$ colors, there are graphs on which a proper $\Delta$-coloring does not exist; however, Brooks' Theorem~\cite{brooks1941colouring} provides a complete characterization of which graphs are $\Delta$-colorable: a graph $G$ can be properly $\Delta$-colored if and only if no connected component of $G$ is 1) a clique consisting of $\Delta+1$ nodes or 2) an odd cycle and $\Delta = 2$.
Consequently, when studying $\Delta$-coloring in an algorithmic context, the standard assumption is that any considered input graph $G$ has maximum degree $\Delta \geq 3$ and does not contain a clique of $\Delta+1$ nodes, i.e., $\Delta \geq 3$ and $G$ is $\Delta$-colorable.

In the LOCAL model, besides receiving plenty of attention in general, in particular in recent years \cite{panconesi95delta,brandt2016lower,chang2019exponential,ghaffari2021deterministic,fischer2023fast,balliu2022distributed}, $\Delta$-coloring has been tremendously important for the development of various research directions as we outline in the following.

\paragraph{The importance of $\Delta$-coloring.}
Many of the exciting developments of the last 8 years have their roots in two works published in 2016, which proved an $\Omega(\log \log n)$-round randomized~\cite{brandt2016lower} and an $\Omega(\log n)$-round deterministic~\cite{chang2019exponential} lower bound for $\Delta$-coloring (and sinkless orientation).
In~\cite{chang2019exponential}, the authors used these lower bounds to show the first (non-trivial) \emph{exponential separation} between the deterministic and randomized complexity of a locally checkable problem (namely, $\Delta$-coloring), and complemented this result by showing that a larger-than-exponential separation cannot exist for any locally checkable problem.
Moreover, the aforementioned lower bounds initiated a long line of research \cite{Grunau2022TheLO,Balliu2018AlmostGP,Balliu2017NewCO,brandt2016lower,chang2019exponential,ChangP19,fischer2017deterministic,ghaffari2021improved,ghaffari2018derandomizing,ghaffari2017distributed,linial1987distributive,Naor1991,linial1987distributive,rozhon2020polylogarithmic} studying the \emph{complexity landscape} of so-called \emph{LCL problems} (which are nothing else than locally checkable problems on constant-degree graphs), culminating in a complete classification of the possible deterministic complexities an LCL problem can exhibit, and an almost-complete classification for randomized complexities.

Perhaps even more importantly, the aforementioned randomized lower bound (on which the deterministic lower bound builds) was the first to be proved with a new lower bound technique called \emph{round elimination}, which since then has resulted in numerous lower bounds for many of the problems central to the LOCAL model \cite{brandt2016lower,ChangP19,chang2019distributed,brandt2019automatic,balliu2021lower,balliu2019hardness,brandt2020truly,binary,balliu2022ruling,balliu2021improved,balliu2022distributed}.
Fascinatingly, as shown in~\cite{brandt2021local}, another recent distributed lower bound technique, called \emph{Marks' technique}, based on ideas from the field of descriptive combinatorics~\cite{DetMarks,Marks_Coloring}, was also developed in the context of $\Delta$-coloring, providing the same deterministic $\Omega(\log n)$-round lower bound for $\Delta$-coloring that round elimination achieves in a different manner.
In fact, as stated in~\cite{brandt2021local}, it is an exciting open problem to find provable connections between round elimination and Marks' technique---something where $\Delta$-coloring, as one of the few natural problems to which both techniques apply, can be expected to play a prominent role.

However, $\Delta$-coloring has not only been instrumental for the development of new techniques, intriguingly it also appears \emph{structurally} in proofs for entirely different problems: as shown in~\cite{balliu2022distributed} (aptly called ``Distributed $\Delta$-Coloring Plays Hide-and-Seek''), a relaxation of $\Delta$-coloring used in~\cite{balliu2022distributed} to prove the aforementioned $\Omega(\log \log n)$-round randomized and $\Omega(\log n)$-round deterministic lower bounds for $\Delta$-coloring\footnote{To avoid confusion, we remark that even though~\cite{balliu2022distributed} and~\cite{brandt2016lower} both use round elimination to obtain the stated $\Delta$-coloring lower bounds, the approaches in the two papers are not the same and yield different proofs for the same bounds.} is also a crucial ingredient for obtaining the state-of-the-art lower bounds for fundamental problems like maximal independent set on trees, ruling sets or arbdefective colorings provided in the same work.
Currently, no way for obtaining these lower bounds that does not inherently make use of $\Delta$-coloring is known.
We remark that problems like maximal independent set and ruling sets fall into the first category in the discussed dichotomy (i.e., problems that can be solved in $O(f(\Delta) \cdot \log^* n)$ deterministic rounds for some function $f$), showing that, surprisingly, $\Delta$-coloring is highly relevant also for such problems, despite the fact that $\Delta$-coloring itself falls into the second category.

While the above discussion shows that $\Delta$-coloring has been one of the most influential problems for the development of the area of distributed graph algorithms in recent years, there is also little reason to believe that this trend will stop.
For instance, besides for the already mentioned open problem of the relation between round elimination and Marks' technique, $\Delta$-coloring is also an important problem for distributed complexity-theoretic questions:
Open Problem 7 in~\cite{balliu2022distributed} asks whether there exists a locally checkable problem that has a deterministic complexity of $\omega(\log n)$ and a randomized complexity of $O(\poly(\log n))$ on constant-degree graphs.
One of the natural candidates for such a problem is $\Delta$-coloring, with state-of-the-art bounds of $O(\log^2 n)$ and $\Omega(\log n)$ deterministic rounds and $O(\log^2 \log n)$ and $\Omega(\log \log n)$ randomized rounds previous to our work.
Moreover, Open Problem 7 (as stated above) could be answered in the negative by proving the Chang-Pettie conjecture~\cite[Conjecture 5.1]{ChangP19} stating\footnote{Technically, the Chang-Pettie conjecture states that the randomized complexity of a concrete problem---the LLL problem---is $O(\log \log n)$ (and its deterministic complexity $O(\log n)$, which again would follow by the separation result from~\cite{chang2019exponential}). However, as the LLL problem is a \emph{complete} problem for sublogarithmic randomized complexity (i.e., any LCL problem solvable in sublogarithmic randomized rounds has an asymptotic randomized complexity equal to or smaller than the complexity of the LLL problem), our phrasing of the conjecture is equivalent.} that any LCL problem with a sublogarithmic randomized complexity can be solved in $O(\log \log n)$ randomized rounds (which, by the aforementioned at-most-exponential-separation result from~\cite{chang2019exponential} provides the negative answer to Open Problem 7).
For proving the Chang-Pettie conjecture (which would resolve the last uncharted regime in the randomized LCL complexity landscape, completing this line of research spanning almost a decade), one natural obstacle is, again, $\Delta$-coloring.
Improving the state-of-the-art complexity bounds for $\Delta$-coloring on constant-degree graphs (or, equivalently, improving the dependency of the complexity bounds on $n$ on arbitrary graphs) would constitute a considerable (and necessary) step towards proving the Chang-Pettie conjecture.
We note that improving the best known bounds on the \emph{randomized} complexity of $\Delta$-coloring would require an improvement of the known bounds of the \emph{deterministic} complexity (due to the aforementioned separation result~\cite{chang2019exponential}), making both the randomized and the deterministic setting highly relevant.

In conclusion, $\Delta$-coloring is prototypical for the obstacles to making progress on various fundamental questions; hence obtaining a better understanding of (the complexity of) $\Delta$-coloring is of utmost importance.
In particular the following question is essential.
\vspace{0.05cm}
\begin{tcolorbox}
	\textbf{Question 2}
    
    \noindent What is the (randomized and deterministic) complexity of $\Delta$-coloring on constant-degree graphs, or, equivalently, what is the dependency of the complexity of $\Delta$-coloring on $n$ on arbitrary graphs?
\end{tcolorbox}

Unsurprisingly, answering this question will also be very useful for improving the state-of-the-art bounds for the complexity of $\Delta$-coloring purely as a function of $n$, without any dependencies on $\Delta$.

\subsection{Our Contributions}
\paragraph{A generic approach.}
As already hinted at in our discussion of sinkless orientation, a generic way\footnote{For more details, see \cref{sec:tech_overview}.} of deterministically solving problems like $\Delta$-coloring or sinkless orientation is, on an informal level, the following: compute a suitable (and not too ``sparse'') collection of non-overlapping substructures that are \emph{good} in the sense that they can be solved locally correctly (according to the local constraints of the problem) even if the output on the rest of the graph is adversarially chosen,
and then make use of the fact that all difficulties in solving the parts outside of the substructures can be ``pushed'' towards the substructures.
An essential technical ingredient of this general approach is the computation of a symmetry-breaking subroutine, called a ruling set, on $G^k$, the $k$th power of the input graph $G$, for some $k \in \Theta(\log n)$.
The ruling set can then be used to compute the aforementioned collection of substructures; however, already on constant-degree graphs, computing the ruling set itself takes $\Theta(\log^2 n)$ rounds, where one $\log$-factor comes from the time it takes to compute the ruling set \emph{if $G^k$ were the input graph}, and the other $\log$-factor from simulating this computation on $G^k$ on the actual input graph $G$.
In fact, as the authors of~\cite{balliu2022distributed} point out, the lower bound for ruling sets in their work ``implies that in order to improve\footnote{While this is not explicitly stated in the respective part of~\cite{balliu2022distributed} it is clear from the context that this is to be understood as improving the current best \emph{dependency on $n$}, or equivalently, the current best runtime on \emph{constant-degree} graphs.} the current best algorithm for $\Delta$-coloring~\cite{GHKM_dc21}, we need to find a genuinely different algorithm, which is not based on ruling sets.''
They even explicitly state as an open problem~\cite[Open Problem 8]{balliu2022distributed} to find a genuinely different algorithm that (at least) \emph{matches} the state-of-the-art complexity bound.

\paragraph{Ruling subgraph families.}
As our first contribution, we introduce the notion of \emph{ruling subgraph families} as a tool for solving problems from the aforementioned second category. 
Given a collection of subgraphs of the input graph, a ruling subgraph family is a subset of said collection such that the selected members are sufficiently far from each other and every non-selected member is close to a selected member (thereby generalizing the notion of ruling sets).
As one of our main technical contributions, we show how to compute such ruling subgraph families fast.
In fact, our main theorem about ruling subgraphs is somewhat more general than the above informal description suggests, allowing for a wider applicability. 

\begin{restatable}{theorem}{RulingSubgraphsThm}
    \label{thm:rulingsubgraphs}
    Let $\fR$ be a family of connected subgraphs of the input graph $G$, and let $k$ be an integer such that each member of $\fR$ has at most $k$ vertices. Let $t \leq k$ be an integer, and, for each member $H\in\fR$, let $X(H)$ be an induced connected subgraph of $H$ containing at most $t$ vertices.
    Let $d$ be an arbitrary nonnegative integer.
    Then there exists a deterministic algorithm running in $O((t+d) (d \log \Delta + \log k + \log^* n) + k \log^* n)$ rounds of LOCAL that computes a subfamily $\fR' \subseteq \fR$ such that
    \begin{itemize}
\item for every two distinct subgraphs $H_1',H'_2\in \fR'$, the distance between $H_1'$ and $X(H_2')$ is at least $d + 1$ and
        \item for every subgraph $H \in \fR \setminus \fR'$, there is a subgraph $H' \in \fR'$ such that the distance between $H$ and $H'$ is in $O((t+d)(d \log \Delta + \log k) + k \log^* k)$.\end{itemize}
\end{restatable}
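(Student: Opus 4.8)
The plan is to view the statement as a generalized ruling-set problem --- members of $\fR$ play the role of ``nodes'', and selecting a member $H'$ forbids every other member from coming within distance $d$ of the core $X(H')$ --- and to solve it by the classical ``palette-halving'' reduction from a proper coloring to a ruling set, run on a conflict structure that is simulated on top of a Linial coloring of a bounded power of $G$.

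First I would preprocess. Since every $H \in \fR$ is connected on at most $k$ vertices, it has diameter less than $k$, so in $O(k)$ rounds each member learns its own vertex and edge set together with the subset $X(H)$. Pick for each $H$ a canonical representative $\rho(H) \in X(H)$, say the vertex of minimum identifier. Two members with the same representative are interchangeable for our purposes (keeping one dominates the other at distance $0$), so we may collapse them and assume distinct representatives; in fact the separation requirement already forces the members of $\fR'$ to have pairwise distinct representatives, as $\rho(H) \in X(H)$ and $d \ge 0$. Next, observe that selecting $H'$ ``blocks'' precisely the members $H$ with $\dist(H, X(H')) \le d$ or $\dist(H', X(H)) \le d$; this is a symmetric relation, and two members in conflict have representatives at distance at most $d + 2k$ (using $X(\cdot) \subseteq H$ and $\diam(H) < k$). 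Finally, run Linial's algorithm on $G^{\Theta(d+k)}$ to obtain a proper coloring of this power graph with $\poly(\Delta^{d+k})$ colors; this costs $O((d+k)\log^* n)$ rounds of $G$ and in particular assigns distinct colors to any two conflicting representatives.

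The heart of the construction is the recursion that is classical for ordinary ruling sets: given a palette $P$ for the conflict structure, split $P$ into two halves of size $|P|/2$, recurse in parallel on the two (disjoint, hence non-interfering) sets of members whose representatives are colored from each half, obtaining subfamilies $\fR_L, \fR_R$, and then merge by keeping all of $\fR_L$ and adding a member of $\fR_R$ only if no kept member lies within two conflict-hops of it. This keeps the selected family conflict-free while the domination distance grows by only $O(d+k)$ per recursion level. The new ingredient compared with the textbook version is the asymmetry between $H$ and $X(H)$: a selected member blocks the non-uniform region $\{v : \dist(v, X(H')) \le d\}$, so the conflict test and the merge test must be phrased in terms of cores; one checks that domination still degrades only additively, since a kept $H'$ that knocks out $H$ has $\dist(H, X(H')) \le d$, hence $\dist(H, H') \le d$. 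To obtain a runtime that scales with $t + d$ rather than the naive $d + k$, I would not apply this recursion as a black box to the full distance-$(d+2k)$ conflict graph --- that pays $\Theta((d+k)\log\Delta)$ recursion levels, each simulated in $\Theta(d+k)$ rounds of $G$. Instead I would decouple the two scales: use the coarse scale-$\Theta(d+k)$ coloring once to neutralize all short-range conflicts (those caused by bodies that overlap or nearly touch, which live at scale $O(k)$), and then carry out the distance-$d$ separation by a layered/pipelined sweep of $O(t+d)$ cheap steps --- each a constant-depth palette-halving step at scale $d$ with only $\poly(\Delta^d, k)$ active colors, hence of cost $O(d\log\Delta + \log k + \log^* n)$ --- exploiting that each core spans fewer than $t$ vertices, so that a radius-$O(t+d)$ region around every core is resolved after $O(t+d)$ layers.

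The main obstacle is exactly this $(t+d)$-versus-$(d+k)$ gap. A direct ruling set on the distance-$(d+2k)$ conflict graph is too slow whenever the cores are much smaller than the members, so one genuinely needs to argue that, once body-overlap conflicts have been removed by the coarse coloring, only the radius-$(d+t)$ neighborhoods of the cores remain relevant, and that a pipelined sweep over these neighborhoods both certifies the separation and keeps the accumulated domination distance within $O((t+d)(d\log\Delta + \log k) + k\log^* k)$ --- the last term being the one-time contribution of the coarse-scale step. Granting this, the two bullet points follow immediately: the merge rule guarantees that $\fR'$ is conflict-free, which unfolds to $\dist(H_1', X(H_2')) \ge d+1$ for all distinct $H_1', H_2' \in \fR'$; and the additive accumulation of the domination distance over the $O(t+d)$ layers plus the coarse step yields, for every $H \in \fR \setminus \fR'$, a selected $H'$ with $\dist(H, H') \in O((t+d)(d\log\Delta + \log k) + k\log^* k)$, as required.
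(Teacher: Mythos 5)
Your proposal correctly identifies the crux of the problem --- the mismatch between the $(t+d)$-scale that the final round complexity must charge per ``level'' and the $(d+k)$-scale at which conflicts are actually detected --- but you do not resolve it; you name it and then grant yourself the resolution. Concretely, the step ``use the coarse scale-$\Theta(d+k)$ coloring once to neutralize all short-range conflicts \ldots\ and then carry out the distance-$d$ separation by a layered/pipelined sweep of $O(t+d)$ cheap steps'' rests on the claim that ``once body-overlap conflicts have been removed \ldots\ only the radius-$(d+t)$ neighborhoods of the cores remain relevant.'' That claim is not argued, and it is hard to see how it could hold: a conflict between a kept $H'$ and another $H$ arises from an arbitrary vertex $u \in H$ with $\dist(u, X(H')) \le d$, and $u$ can be as far as $k$ from $\rho(H)$ and from $X(H)$. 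Restricting attention to radius-$(d+t)$ balls around cores therefore misses exactly those conflicts, and any palette-halving style merge still needs to probe to depth $\Omega(k)$ in $G$ to certify that no such $u$ exists. Since the conflict graph has maximum degree as large as $\Delta^{\Theta(d)} \cdot \poly(k)$, a halving recursion that checks conflicts honestly costs $\Omega\bigl((d+k)(d\log\Delta + \log k)\bigr)$ rounds, which is the bound you are trying to beat; the proposed decoupling does not explain how to avoid this.

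The paper takes a genuinely different route that avoids paying $\Omega(k)$ per level of a ruling-set recursion: it introduces token-based \emph{elimination walks} (tokens walk along their subgraphs/stems and collide, deleting all but one per collision), which drive down conflict degrees in $O(k)$ or $O(t)$ rounds with no $\log\Delta$ or $\log^*n$ overhead per step. This is combined with a \emph{contention digraph} whose in- and out-degrees are reduced alternately --- outdegree by a ruling set on a ``sibling graph'' whose edges span only $O(t+d+k/(\log^{(i)}k)^2)$ hops, indegree by a \emph{pausing} elimination walk --- and the iterated-logarithm decay of these degree bounds is what yields the $O(k\log^*n)$ rather than $O(k\log k)$ term. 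Your outline contains none of these ingredients, and without an explicit replacement mechanism for the elimination walks, the proposed algorithm cannot reach the claimed complexity. You would need to make precise what ``neutralize short-range conflicts'' produces (a subfamily? a partition? a coloring?) and then show that after that single coarse step, every remaining conflict really can be detected and broken by looking only $O(t+d)$ deep into $G$; as written, that is the theorem restated, not a proof step.
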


Here, as usual, the distance between two subgraphs $H, H'$ of a graph $G$ is defined as the minimum distance between a vertex of $H$ and a vertex of $H'$, taken over all pairs of vertices from $V(H) \times V(H')$.

By setting, for instance, $d := 1$ and choosing $k$ and $t$ suitably from $\Theta(\log_{\Delta} n)$, \cref{thm:rulingsubgraphs} asserts that a ruling subgraph family (with the respective parameters) can be computed in $\Otilde(\log n)$ rounds.\footnote{We will use $\Otilde (\cdot)$ to hide $\log \log n$-factors in deterministic algorithms and $\log \log \log n$-factors in randomized algorithms.}
This constitutes a considerable step towards answering Question 1, by providing a method to perform the generic approach outlined for $\Delta$-coloring and sinkless orientation without resorting to the costly computation of a ruling set on a power graph.
In fact, for sinkless orientation, \cref{thm:rulingsubgraphs} directly yields an algorithm\footnote{Informally, start by letting each node $v$ select $3$ incident edges and disconnecting $v$ from all other incident edges (adding a new virtual endpoint for the edge instead of $v$). On the obtained graph (of maximum degree $3$), let each node select a cycle of length $O(\log n)$ or a node of degree $\leq 2$ in its $O(\log n)$-hop neighborhood (one of the two must exist). Then apply \cref{thm:rulingsubgraphs} on the family of selected objects, orient the selected cycles consistently, and orient all edges not contained in any of the selected cycles towards the closest selected object.} running in $\Otilde (\log n)$ rounds, providing evidence for our claim that the outlined approach is generic.
As our next contribution (providing further evidence), we use the developed tool to improve on the state-of-the-art bounds for the complexity of $\Delta$-coloring.

\paragraph{Deterministic $\Delta$-coloring in LOCAL.}
First, we improve the complexity in the deterministic LOCAL model, parameterized by $n$ and $\Delta$.
\begin{restatable}{theorem}{deterministicLOCAL}
    \label{thm:detLOCAL}
        There exists an $O(\log^4 \Delta + \log^2\Delta \log n \log^* n)$-round algorithm for $\Delta$-coloring graphs of maximum degree $\Delta \geq 3$ in the deterministic LOCAL model.
\end{restatable}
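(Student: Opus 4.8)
The plan is to realize, for $\Delta$-coloring, the generic ``good-substructures'' strategy sketched in the introduction, using \cref{thm:rulingsubgraphs} in place of the costly ruling-set-on-a-power-graph step. For $\Delta$-coloring the relevant ``good'' substructure is a \emph{degree-choosable} (or \emph{Brooks}) subgraph: an induced connected subgraph $H$ of $G$ that is list-colorable from any system of lists whose sizes equal the degrees in $H$. By the classical characterization, $H$ has this property unless each of its blocks is a clique or an odd cycle; and, crucially, whenever $H$ is degree-choosable, \emph{any} proper $\Delta$-coloring of $G-V(H)$ extends to $H$, since each $v\in V(H)$ retains at least $\Delta-(\deg_G(v)-\deg_H(v))\ge \deg_H(v)$ admissible colors (a single vertex of degree less than $\Delta$ is a degenerate good substructure). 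So once we have a sparse-but-covering family of pairwise-far degree-choosable subgraphs, we orient everything towards them and push all coloring difficulties onto these subgraphs, resolving them last.

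The first and conceptually hardest ingredient is a structural lemma: in every $\Delta$-colorable graph with $\Delta\ge 3$, each vertex lies within distance $r=O(\log_\Delta n+\poly(\log\Delta))$ of either a vertex of degree less than $\Delta$ or a degree-choosable induced connected subgraph on $O(\log_\Delta n+\poly(\log\Delta))$ vertices, and such a witness can be found in $O(r)$ rounds. I would prove this by examining the ball $B_r(v)$: if it contains a low-degree vertex we are done; otherwise $B_r(v)$ is internally $\Delta$-regular, hence---by a Moore-type counting bound, since $r$ exceeds $\log_\Delta n$---not acyclic, so it contains a short cycle, and a short even cycle, or a short cycle together with a chord or a second nearby cycle, is degree-choosable. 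The genuinely delicate case is the ``dense'' regime, where $B_r(v)$ is made up of (almost-)cliques and short degree-choosable subgraphs may be scarce; here I would first compute an almost-clique decomposition in $\poly(\log\Delta)\cdot\log^* n$ rounds and exploit $K_{\Delta+1}$-freeness of $G$ (every almost-clique contains two non-adjacent vertices, and every vertex outside all almost-cliques has a non-complete neighborhood) to exhibit a small witness nearby in either case. Establishing this locality uniformly over \emph{all} $\Delta$-colorable graphs is, I expect, the main obstacle.

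Given the structural lemma, let $\fR$ be the family of witness subgraphs (one within distance $r$ of each vertex), take $X(H):=H$ so that $t=k=O(\log_\Delta n+\poly(\log\Delta))$, and invoke \cref{thm:rulingsubgraphs} with $d:=1$; this returns $\fR'\subseteq\fR$ whose members are pairwise at distance at least $2$ and which covers every vertex within distance $R=\Otilde(\log n)+\poly(\log\Delta)$, in $\Otilde(\log n)+\poly(\log\Delta)$ rounds for these parameter choices. Orient every edge outside $\bigcup_{H\in\fR'}H$ towards the nearest member of $\fR'$ (ties by identifier). It then remains to color $G$: sweep the BFS layers of this orientation from the layer farthest from $\fR'$ inward; a vertex at distance $\ell$ always has an uncolored neighbor at distance $\ell-1$, so its surviving color list strictly exceeds its degree \emph{inside its own layer}, which makes each layer solvable; finally complete each $H\in\fR'$ using degree-choosability (in $O(\lvert V(H)\rvert)$ rounds and without interference, since distinct members of $\fR'$ are at distance at least $2$, so their external neighborhoods are disjoint and already colored).

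The remaining step is the round-complexity accounting, and this is where the $\log^2\Delta$ factor enters. Coloring one BFS layer is a $(\deg{+}1)$-list-coloring instance, and handling each layer by a separate deterministic list-coloring call would incur an extra $\Theta(\log n)$ per layer; instead, mirroring the structure of state-of-the-art deterministic $(\deg{+}1)$-list-coloring algorithms---which proceed in roughly $\log^2\Delta$ color-space-reduction stages---the coloring should be organized into $O(\log^2\Delta)$ (up to $\log^*$ factors) such stages, the key point being that the orientation together with \cref{thm:rulingsubgraphs} lets each stage be implemented \emph{without} a network decomposition, at a cost of $\Otilde(\log n)+\poly(\log\Delta)$ per stage; summing over the $O(\log^2\Delta)$ stages and adding the preprocessing yields the claimed $O(\log^4\Delta+\log^2\Delta\log n\log^* n)$ bound. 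Making this staging precise---so that each stage genuinely sidesteps the $\log n$-deep bottleneck while producing a valid partial coloring, and so that the hidden $\log\log n$ factors collapse to the $\log^* n$ appearing in the statement---is the second technical point to get right.
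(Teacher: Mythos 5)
Your high-level strategy (find degree-choosable witnesses near every vertex, run \cref{thm:rulingsubgraphs}, sweep in from the far layers, finish on the witnesses) is indeed the paper's strategy, and you correctly identify the two places that need care. But you leave both of them open, and one of them is a genuine gap that makes your parameter choice fail to deliver the stated bound.

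The critical point is your choice $X(H):=H$, which forces $t=k=\Theta(\log_\Delta n)$. Plugging these into \cref{thm:rulingsubgraphs} with $d=1$ gives a running time of $O(k(\log\Delta+\log k+\log^* n)+k\log^* n)$, and the $k\log k$ term is $\Theta(\log_\Delta n\cdot\log\log n)$. On bounded-degree graphs this is $\Theta(\log n\log\log n)$, not $O(\log n\log^* n)$, so the theorem as stated is \emph{not} obtained; you flag this yourself (``the hidden $\log\log n$ factors collapse to $\log^* n$ \ldots is the second technical point to get right'') but do not resolve it. The paper's fix is exactly the piece you are missing: rather than generic degree-choosable subgraphs, it works with \emph{nice LDCCs} (\cref{def:nice-ldcc}), namely LDCCs with precisely two degree-$3$ vertices and all others of degree $2$. \Cref{lem:sizeLDCC} shows that every LDCC of strong diameter $k$ contains an induced nice LDCC of at most $4k$ vertices. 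A nice LDCC has a canonical $O(1)$-size ``natural stem'' (the highest-ID degree-$3$ vertex together with its three $H$-neighbors, \cref{def:natural-orchid}), so the paper invokes \cref{thm:rulingsubgraphs} with $t\le 4$, $d=1$, $k=O(\log_\Delta n)$; then the $(t+d)(\log\Delta+\log k+\log^* n)$ term is $O(\log\Delta)$ and the whole ruling-subgraph step costs $O(\log\Delta+\log_\Delta n\cdot\log^* n)$. The price is that distinct $H_1',H_2'\in\fR'$ are only guaranteed to be far from each other's \emph{stems}, not from each other, so the final re-coloring of $\fR'$ cannot proceed on each $H\in\fR'$ in isolation as you propose. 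The paper handles this via the flexible-node argument (\cref{lem:flexible-node-LDCC,lem:coloring-nldcc}): each $H\in\fR'$ colors at most one stem node so that its stem contains a vertex with a repeated neighbor-color, and the rest of $\bigcup_{H\in\fR'}V(H)$ is finished as another layered-graph instance ordered by distance to the flexible nodes.

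Two smaller remarks. First, your existence lemma is sketched more loosely than the paper's: the paper derives it from GHKM's expansion lemma (a ball without a DCC or low-degree vertex has $(\Delta-1)^{\Omega(r)}$ vertices, \cref{lem:DCCorexpand}) plus a cycle-contraction argument to upgrade DCCs to LDCCs (\cref{lem:LDCCorexpand}); it deliberately avoids almost-clique decompositions, whereas you hedge between a Moore-bound argument and an ACD-based one, neither fully spelled out. Second, your claim that distance $\ge 2$ between members of $\fR'$ makes ``their external neighborhoods \ldots disjoint'' is not literally true (two such members may share an external neighbor), though the weaker fact you actually need---that once everything outside $\bigcup\fR'$ is colored, each $H$ sees only fixed colors on $N(H)\setminus H$---does hold. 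Finally, you re-derive a staged list-coloring of the layered graph from scratch; the paper simply cites the deterministic layered-graph coloring routine (\cref{lem:layered_graphs}), which already gives the $O(\log^2\widehat\Delta\log n+h\log^3\widehat\Delta)$ bound and, with $h=O(\log\Delta+\log_\Delta n\log^* n)$, yields $O(\log^4\Delta+\log^2\Delta\log n\log^* n)$.
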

While, at a superficial glance, this might appear as a modest improvement over the previous state of the art of $O(\log^2 \Delta \log^2 n)$ rounds~\cite{ghaffari2021deterministic}, \cref{thm:detLOCAL} is in fact fundamentally important from several perspectives.

First of all, it almost resolves the deterministic LOCAL complexity of $\Delta$-coloring on constant-degree graphs, improving
the previous state-of-the-art bound of $O(\log^2 n)$ rounds~\cite{GHKM_dc21} to $O(\log n \log^* n)$, which, due to the known lower bound of $\Omega(\log n)$ rounds~\cite{chang2019exponential}, is tight up to the razor-thin $\log^* n$-factor.
It seems unlikely that minor tweaks to our algorithm could remove
this $\log^* n$-term, as at its core is some symmetry breaking between $\Theta(\log n)$-large structures.
Achieving this specific complexity in this way reframes the question of the complexity of $\Delta$-coloring in a new light: could it be that such symmetry breaking between large structures is essential to $\Delta$-coloring, and $\Theta(\log n \log^* n)$ is its complexity on constant-degree graphs, or does there exist a more efficient algorithm that circumvents the limitations inherent to our approach?

\Cref{thm:detLOCAL} also results in more subtle improvements  over previous and current state-of-the-art in larger regimes of $\Delta$ (see \cref{sec:high-delta-optimization}).

We also would like to point out that while ruling subgraph families do have obvious similarities to ruling sets, the approach based on ruling subgraph families is novel and therefore can arguably be seen as answering Open Problem 8 from~\cite{balliu2022distributed}.

\paragraph{Randomized $\Delta$-coloring in LOCAL.}
Using the same technique, we also obtain improvements in the randomized LOCAL model, where we obtain an $O(\sqrt{\Delta \log\Delta} (\log_\Delta \log n \cdot \log^* n + \log \log n))$-round algorithm, improving on a previous complexity bound of $O(\sqrt{\Delta \log\Delta} \log^2 \log n )$ \cite{GHKM_dc21}.

\begin{restatable}{theorem}{randomizedLOCAL}
    \label{thm:randLOCAL}
    There exist randomized LOCAL algorithms for $\Delta$-coloring graphs of maximum degree $\Delta \geq 3$ w.h.p.\ with complexities $O(\sqrt{\Delta \log\Delta} (\log_\Delta \log n \cdot \log^* n + \log \log n))$ or $\Otilde (\log^{8/3} \log n)$.
\end{restatable}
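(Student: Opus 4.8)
The plan is to derive both bounds from the standard shattering paradigm~\cite{barenboim2016locality}, in the form tailored to $\Delta$-coloring by~\cite{GHKM_dc21}, using \cref{thm:detLOCAL} as the post-shattering subroutine. Concretely, we would run the randomized phase of the $\Delta$-coloring algorithm of~\cite{GHKM_dc21} essentially verbatim: after deterministic pre-processing (as in~\cite{GHKM_dc21}) that sets up the structural slack required by Brooks' theorem, it performs $O(\sqrt{\Delta\log\Delta})$ rounds of randomized color trials (plus lower-order deterministic overhead), after which the usual shattering analysis guarantees that, with high probability, the vertices that are still uncolored and still constrained induce connected components each with at most $N = \poly(\log n)$ vertices. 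The one change is the finishing step: rather than coloring each such component with the $O(\log^2 N)$-round deterministic algorithm of~\cite{panconesi95delta}, we color it with \cref{thm:detLOCAL} --- more precisely, with its list-coloring version, since the residual task on a component is a list-coloring instance sitting at the Brooks threshold (each vertex's list has size at least its residual degree, the list analogue of having exactly $\Delta$ colors). Because \cref{thm:detLOCAL} is itself built on the ruling-subgraph machinery of \cref{thm:rulingsubgraphs}, this is exactly the ``same technique'', now carried into the randomized regime.

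\textbf{Assembling the bound.}
On a component of $N \leq \poly(\log n)$ vertices, \cref{thm:detLOCAL} runs in $O(\log^4\Delta + \log^2\Delta\cdot\log N\cdot\log^* n) = O(\log^4\Delta + \log^2\Delta\,\log\log n\,\log^* n)$ rounds (the factor is $\log^* n$, not $\log^* N$, because Linial-type color reduction is charged to the global identifier range; re-indexing inside a component would lower it to $\log^*\log n$, which we do not need). Adding the $O(\sqrt{\Delta\log\Delta})$-round pre-processing and simplifying gives the claimed $O(\sqrt{\Delta\log\Delta}\,(\log_\Delta\log n\cdot\log^* n + \log\log n))$: using $\sqrt{\Delta\log\Delta}\,\log_\Delta\log n = \sqrt{\Delta/\log\Delta}\,\log\log n$, one has $\log^2\Delta \leq \sqrt{\Delta/\log\Delta}$ and $\log^4\Delta \leq \sqrt{\Delta\log\Delta}$ as soon as $\Delta$ exceeds an absolute constant (e.g.\ via $\log^7\Delta \leq \Delta$), and for the finitely many smaller $\Delta$ both sides are $\Theta(\log\log n\,\log^* n)$ anyway. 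For constant $\Delta$ this is $O(\log\log n\,\log^* n)$, matching the $\Omega(\log\log n)$ lower bound of~\cite{brandt2016lower} up to the $\log^* n$ factor. The second bound, $\Otilde(\log^{8/3}\log n)$ and free of any explicit $\Delta$-dependence, would follow from a refinement of the same scheme --- invoking the deterministic finish of \cref{thm:detLOCAL} only at the smallest scale by applying the shattering recursively, and handling intermediate scales with state-of-the-art randomized list-coloring subroutines --- after optimizing the level parameters, which we defer to the full version.

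\textbf{Main obstacle.}
The crux is not the arithmetic but verifying that \cref{thm:detLOCAL} really is a legitimate drop-in replacement for the post-shattering subroutine, and this is a genuine point because $\Delta$-coloring, unlike ordinary coloring, does not decompose under vertex deletion: deleting the already-colored vertices can leave a component that, together with its residual palettes, is not colorable at all --- e.g.\ a clique on $\Delta$ vertices all of whose lists coincide, the list analogue of a forbidden $K_{\Delta+1}$. Hence the randomized pre-processing must be arranged --- as it is in~\cite{GHKM_dc21} --- so that every surviving component carries a certificate of colorability, i.e.\ is not a Gallai tree with respect to its lists; and one must check that the algorithm underlying \cref{thm:detLOCAL} --- ruling subgraph families together with the contraction/elimination procedure layered on top --- solves precisely this class of degree-choosable list-coloring instances within the stated running time. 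A secondary, routine point is that a component learns its own size in $O(\log N)$ rounds, so the running time can honestly be stated in terms of $N = \poly(\log n)$ rather than $n$; the only care required is to confirm that no step reintroduces a $\log n$ factor and that the symmetry-breaking contribution stays at $\log^* n$, exactly as in the statement.
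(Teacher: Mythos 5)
Your proposal misreads both the structure of the GHKM algorithm you are trying to imitate and the source of the $\sqrt{\Delta\log\Delta}$ factor in the claimed bound, and as a result it proposes a genuinely different (and incompletely justified) route than the paper takes.

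First, GHKM does \emph{not} perform ``$O(\sqrt{\Delta\log\Delta})$ rounds of randomized color trials'' followed by a post-shattering clean-up. Its randomized ingredient is $T$-node sampling (\cref{alg:Tnodes}), which creates sparse ``havens'' --- nodes with a repeated color in their neighborhood --- together with the observation that every node w.h.p.\ has a short uncolored path to either a $T$-node or a DCC (\cref{lem:lemm18}). There is no phase in which vertices are randomly tried against palettes and the surviving uncolored subgraph is argued to have $\poly(\log n)$-size components. The reduction to a ``$\log n \to \log\log n$'' scale happens because havens are $O(\log_\Delta\log n)$-close, not because components have shrunk. Consequently, the paper's proof keeps the full GHKM skeleton --- $T$-node sampling, then layering by distance to the nearest $T$-node or ruling NLDCC, then coloring layer by layer --- and the \emph{only} thing swapped out is the symmetry-breaking step: instead of a classical ruling set over a conflict graph of large radius, it runs the ruling-subgraph-family machinery (\cref{thm:rulingsubgraphs} with a modified \RulingOrchids) to cut the number of layers down to $O(\log_\Delta\log n\log^*n + \log\log n)$.

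Second, the $\sqrt{\Delta\log\Delta}$ term is \emph{per layer}, not a one-time pre-processing cost: it is the Maus--Tonoyan degree+1-list-coloring cost applied to each of the $O(\log_\Delta\log n\log^*n + \log\log n)$ layers, which is exactly why the theorem states a \emph{product}. Your accounting treats $\sqrt{\Delta\log\Delta}$ as additive and then the post-shattering cost as $O(\log^4\Delta + \log^2\Delta\log\log n\log^*n)$. That does not reproduce the stated bound; if your scheme worked as described it would in fact give something strictly smaller than the theorem in the large-$\Delta$ regime, which is a symptom that the derivation is not the one behind the statement. The second claimed complexity, $\Otilde(\log^{8/3}\log n)$, is obtained in the paper simply by substituting the $\Otilde(\log^{5/3}\log n)$ randomized degree+1-list-coloring subroutine as the per-layer engine --- nothing recursive or multi-scale, which your proposal defers entirely.

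Finally, the route you propose carries unfilled obligations that the paper's route never incurs. Feeding a shattered instance into \cref{thm:detLOCAL} requires (i) that random color trials leave every residual component degree-choosable with respect to its residual lists --- a fact you flag as ``the main obstacle'' and source to GHKM, but GHKM never runs random trials, so there is no such lemma to inherit; and (ii) a list-coloring generalization of \cref{thm:detLOCAL}, which the paper does not state and which is not automatic: the algorithm behind \cref{thm:detLOCAL} is built around locating NLDCCs in $G$ at distance $O(\log_\Delta n)$ (\cref{lem:exists-ldcc,lem:sizeLDCC}), and an analogous structural guarantee for arbitrary degree-choosable list instances would have to be proved. The paper avoids all of this by never producing small list-coloring instances: the $T$-nodes and ruling NLDCCs together serve as the sinks of the layering, so the final step is the same layered $\deg{+}1$-list-coloring over $G$ as in the deterministic case, just with $O(\log_\Delta\log n\log^*n + \log\log n)$ layers instead of $O(\log_\Delta n\log^*n)$.
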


In the $\Delta \in O(1)$ regime, our algorithm achieves an almost-optimal complexity of $O(\log \log n \log^* n)$ compared to the lower bound which is $\Omega(\log \log n)$.
This makes progress on the Chang-Pettie conjecture and on Open Problem 7 from~\cite{balliu2022distributed} (together with \cref{thm:detLOCAL}), by almost ruling out $\Delta$-coloring as a (counterexample) candidate.
Moreover, \cref{thm:detLOCAL} and \cref{thm:randLOCAL} make substantial progress towards answering Question 2, posed above.

We remark that while the gap between the best known randomized complexities in LOCAL and CONGEST remains substantial, a recent result provided an $O(\poly(\log \log n))$ algorithm for $\Delta$-coloring in CONGEST \cite{HM_disc24}. For $\Delta$ large enough ($\Delta \in \Omega(\log^{21} n)$), the complexity is known to collapse to $O(\log^* n)$ in both LOCAL and CONGEST~\cite{fischer2023fast}.

\subsection{Organization of the Paper}
\label{sec:organization}

We give an overview of all our results and the key ideas behind them in \cref{sec:tech_overview}, with detailed proofs and formal statements in later sections.

Our results on computing ruling subgraph families (including \cref{thm:rulingsubgraphs}) are the topic of \cref{sec:ruling-subgraphs}. In the subsequent \cref{sec:detresults}, we give our main application of the technique, a faster deterministic algorithm for $\Delta$-coloring in \LOCAL (\cref{thm:detLOCAL}). The implications for $\Delta$-coloring in randomized \LOCAL are presented in \cref{sec:randomized} (\cref{thm:randLOCAL}).

We lay out the notation we use throughout the paper and important definitions in \cref{sec:preliminaries}.
\Cref{sec:complexity-in-n,sec:sota-congest} provide an exposition of the state-of-the-art.
\Cref{sec:nldcc-proof} contains some proofs deferred from the main text.

\section{Preliminaries and Notation}
\label{sec:preliminaries}

We will use the notation $[k] = \set{1,\dots,k}$. We represent a current partial coloring by the function $\col: V \to [\Delta+1] \cup \set{\bot}$. $\col(v)$ is the current color of node $v$; $\bot$ represents the uncolored state.

Throughout the paper, $G=(V,E)$ refers to both the communication graph and the graph to $\Delta$-color, of vertex set $V$ and edge set $E$. $n$ is the number of nodes of $G$, and $\Delta$ its maximum degree, both assumed to be known by all nodes.
Throughout the paper\footnote{The results from \cref{sec:ruling-subgraphs} still apply when $\Delta < 3$, but we are not aware of any interesting applications of its results with $\Delta \in \set{1,2}$. $\Delta$-coloring requires $\Delta \geq 3$ to guarantee the existence of useful subgraphs within moderate distance from each node.},
we assume $\Delta \geq 3$.
When considering another graph $H$, we denote by $V(H)$, $E(H)$, and $\Delta(H)$, its vertex set, edge set, and maximum degree.

We denote by $N(v)$ the neighborhood in $G$ of a node $v \in V$. When considering a node $v \in V(H)$ in a graph $H \neq G$, we denote by $N_H(v)$ its neighborhood in $H$. For two nodes $v,v'$ in a graph $H$ (possibly $H=G$), their distance in $H$, denoted by $\dist_H(v,v')$, is the length of the shortest path in $H$ between $v$ and $v'$ ($+\infty$ if no such path exists), that is, the smallest $k$ s.t.\ there exist $k+1$ vertices $v_0,v_1,\ldots,v_k$ with $v_0 = v$ and $v_k = v'$ such that $v_i \in N_H(v_{i+1})$ for each $i \in [k]$. For any positive integer $k$, the power graph $G^k$ is defined as $V(G^k) = V(G)$ and $E(G^k) = \set{uv: u \in V(G), v \in V(G) \setminus \set{u}, \textrm{s.t.\ } \dist_G(u,v) \leq k}$. The distance-$r$ neighborhood of $v$ is $N^r(v) = N_{G^r}(v)$. For two sets of nodes $S,S'\subseteq V$, we define their distance as $\dist_G(S,S') = \min_{v\in S, v'\in S'} \dist_G(v,v')$. We similarly define $\dist_G(H,H') = \dist_G(V(H),V(H'))$ for two subgraphs $H,H'$ of $G$. For a subset $S \subseteq V$, the subgraph induced by $S$ in $G$, denoted by $G[S]$, is the graph with $S$ as vertex set and the edge set $\set{uv \in E: u\in S \wedge v \in S}$. $H$ is an induced subgraph of $G$ if $H = G[V(H)]$.

We will use the following results from prior work.

\paragraph{Structural graph theoretical results.}

\begin{definition}[Degree-choosable component]
\label{def:dcc}
    A graph is \emph{degree-choosable} if, given a list $L(v)$ of colors with $|L(v)| \geq \deg(v)$ for every node $v$, it is always possible to properly color the graph s.t.\ each node $v$ uses a color from $L(v)$.
    A subgraph $G'$ of a graph $G$ is called a \emph{degree-choosable component (DCC)} if $G'$ is a $2$-connected induced subgraph that is degree-choosable.
\end{definition}

\begin{definition}[$\Delta$-extendable]
    \label{def:delta-extendable}
    An induced subgraph $H$ of $G$ is \emph{$\Delta$-extendable} if for any partial coloring of the nodes $V(G) \setminus V(H)$, the partial coloring can always be extended to the nodes of $H$ using only colors from $[\Delta]$.
    DCCs and nodes of degree $< \Delta$ are $\Delta$-extendable.
\end{definition}

$2$-connected graphs in particular include odd cycles and cliques of size at least $2$, which are not degree-choosable. All other types of $2$-connected graphs are degree-choosable, including even cycles.
There is a natural decomposition of any given connected graph into its maximal $2$-connected subgraphs.

\begin{definition}[Blocks and block graph]
\label{def:block-graph}
    The \emph{blocks} of a graph $G=(V,E)$ are its maximal $2$-connected subgraphs. Since blocks are necessarily induced subgraphs by their maximality, we also refer to their vertex sets as being blocks. 
    
    Let $B\subseteq 2^V$ be the blocks of a graph $G =(V,E)$, the \emph{block graph} of $G$ is the bipartite graph $\fB(G)=(B \sqcup V,E')$ where for each block $S \in B$ and vertex $v\in V$ of the original graph $G$, $vS \in E'$ iff $v\in S$. 
\end{definition}

\begin{proposition}[See e.g.\ {\cite[Section 3.1]{Diestel_GT_5th_ed}}]
\label{prop:block-tree}
    The block graph of a connected graph is a tree.
\end{proposition}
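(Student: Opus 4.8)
The plan is to verify directly that $\fB(G)$ is both connected and acyclic. Throughout I would use the paper's convention that $K_2$ counts as $2$-connected, so that every edge of $G$ lies in some $2$-connected subgraph and hence, by finiteness (extend it to a maximal one), in a block; consequently every vertex of a connected graph on at least two vertices lies in a block (the single-vertex case being trivial, as then $\fB(G)$ is just one node). The only structural input needed is the folklore lemma: \emph{if $H_1,H_2$ are $2$-connected subgraphs of $G$ with $|V(H_1)\cap V(H_2)|\ge 2$, then $H_1\cup H_2$ is $2$-connected}. I would prove this in two lines: $H_1\cup H_2$ is connected, and for any vertex $x$ each $H_i-x$ is connected while $(V(H_1)\cap V(H_2))\setminus\{x\}\ne\emptyset$, so $(H_1-x)\cup(H_2-x)=(H_1\cup H_2)-x$ is connected; thus $H_1\cup H_2$ has no cut vertex. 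As an immediate corollary, by maximality of blocks, two distinct blocks share at most one vertex.

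For connectivity of $\fB(G)$ (assuming $|V|\ge 2$): given $u,w\in V$, take a $u$–$w$ path $u=x_0,x_1,\dots,x_\ell=w$ in $G$, pick for each edge $x_ix_{i+1}$ a block $S_i$ containing it, and note that $u,S_0,x_1,S_1,x_2,\dots,S_{\ell-1},w$ is a walk in $\fB(G)$ since $x_i\in S_{i-1}\cap S_i$. Together with the edges joining every block to its vertices, this shows $\fB(G)$ is connected.

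For acyclicity, suppose $\fB(G)$ contains a cycle. Since $\fB(G)$ is bipartite with the blocks $B$ on one side and $V$ on the other, such a cycle reads $S_0,v_0,S_1,v_1,\dots,S_{m-1},v_{m-1}$ (indices mod $m$) with $m\ge 2$, the $S_i$ pairwise distinct blocks, the $v_i$ pairwise distinct vertices, and $v_i\in S_i\cap S_{i+1}$. Put $H:=S_0\cup\cdots\cup S_{m-1}\subseteq G$. I claim $H$ is $2$-connected: it is connected, because consecutive $S_i$ share $v_i$ and the indices form a cycle; and it has no cut vertex, because for any $x\in V(H)$ each $S_i-x$ is connected, and consecutive pieces $S_i-x$, $S_{i+1}-x$ still intersect whenever $v_i\ne x$, which fails for at most one index $i$ (the $v_i$ being distinct) — deleting at most one link from the cyclic chain $S_0$–$S_1$–$\cdots$–$S_{m-1}$–$S_0$ leaves it connected, hence $H-x$ is connected. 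But $H$ properly contains $S_0$ (e.g.\ since $S_1\ne S_0$ and both are maximal $2$-connected, $S_1$ has an edge outside $S_0$), contradicting the maximality of the block $S_0$. So $\fB(G)$ is acyclic, and, being connected, it is a tree.

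The main obstacle is this last step — showing that the union $H$ of the blocks appearing along a cycle of $\fB(G)$ is itself $2$-connected. The delicate point is arguing that deleting a single vertex cannot disconnect $H$, which works precisely because such a deletion can sever at most one of the $m\ge 2$ cyclic links between consecutive blocks; the bipartite, two-sided structure of $\fB(G)$ is what makes the degenerate $m=1$ case vacuous and reduces everything else to this chaining argument. (Alternatively, one may simply invoke the standard block-tree theory, e.g.\ \cite[Section 3.1]{Diestel_GT_5th_ed}, but the argument above is self-contained.)
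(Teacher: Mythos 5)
Your proof is correct and self-contained. The paper gives no proof of its own here (it simply cites Diestel), and your argument is the standard textbook one: connectivity of $\fB(G)$ by lifting a $G$-path to an alternating walk through blocks, and acyclicity by showing that a cycle in $\fB(G)$ would force the union $H=S_0\cup\cdots\cup S_{m-1}$ of the blocks along it to be a $2$-connected subgraph strictly containing $S_0$, contradicting maximality. The one place where care is genuinely needed is that the paper's $\fB(G)$ attaches \emph{all} vertices of $G$ (not just cut vertices, as in the usual block-cut tree) and treats $K_2$ as $2$-connected; you handle both correctly — the extra non-cut vertices only add leaves to the tree, and the $K_2$ convention is what lets every edge (hence every vertex of a connected graph on $\ge 2$ vertices) lie in a block. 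The chaining lemma and the observation that distinct blocks share at most one vertex (which, as you note, already kills the $m=2$ case directly) are exactly the right supporting facts, and your bipartiteness remark correctly rules out the degenerate $m=1$ case. No gaps.
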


This property of block graphs underpins the name of Gallai ``tree'' in the next definition. 

\begin{definition}
\label{def:gallai-tree}
    A \emph{Gallai tree} is a connected graph whose blocks are each a clique or an odd cycle.
\end{definition}

\begin{lemma}[\cite{KSW_dm96,Thomassen_jct97a}]
    A connected graph is degree-choosable if and only if it is not a Gallai tree.
\end{lemma}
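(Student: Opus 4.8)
The plan is to prove the two implications separately, each by an induction following the block decomposition, together with an elementary fact about list-colouring cycles.

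\textbf{Not a Gallai tree $\Rightarrow$ degree-choosable.} The engine is a \emph{peeling} lemma: if a connected graph $G$ has a nonempty induced subgraph $G[S]$ that is degree-choosable, then $G$ is degree-choosable. Given lists $L$ with $|L(v)|\ge\deg_G(v)$, order $V(G)\setminus S$ by non-increasing distance to $S$ and colour them in that order; a vertex $u\notin S$ being coloured still has an uncoloured neighbour at distance $\dist_G(u,S)-1$ from $S$ (possibly in $S$), hence at most $\deg_G(u)-1<|L(u)|$ coloured neighbours, so a colour remains. Afterwards each $v\in S$ has all its neighbours outside $S$ coloured, so deleting those colours from $L(v)$ leaves a list of size $\ge\deg_{G[S]}(v)$, and degree-choosability of $G[S]$ completes the colouring. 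If $G$ is connected and not a Gallai tree, some block $B$ is neither a clique nor an odd cycle; since $K_2$ is a clique, $B$ is $2$-connected with $\ge 3$ vertices, not complete, and not an odd cycle, and it is induced. So everything reduces to showing such a $B$ is degree-choosable.

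\textbf{A $2$-connected non-complete non-odd-cycle graph $B$ is degree-choosable} (the technical heart). I would argue by induction on $|V(B)|$, using the elementary fact that a cycle with lists of size $\ge 2$ is list-colourable unless it is odd with all lists equal (in particular, even cycles are $2$-choosable): colour one endpoint of an edge with unequal lists by a colour missing from the other's list, then colour the rest of the resulting path greedily, ending at that other endpoint. Since a $2$-connected graph with no even cycle has all blocks equal to $K_2$ or an odd cycle, $B$ contains an even cycle. If $B$ contains an \emph{induced} even cycle $C$, the peeling lemma applied inside $B$ with core $C$ finishes. Otherwise $B$ has no induced even cycle, and one shows that either $B$ contains an induced $K_4$ minus an edge --- which is degree-choosable (checked directly) --- and we again peel, or $B$ has a vertex $v$ such that $B-v$ is still $2$-connected, not complete and not an odd cycle, and the induction hypothesis for $B-v$ plus peeling off $v$ apply; the only irreducible case, $K_4$ minus an edge, is the base case. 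Locating the right induced substructure (and pinning down the finitely many base cases) is the main obstacle of the whole proof.

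\textbf{Gallai tree $\Rightarrow$ not degree-choosable.} I would prove the stronger statement: for every connected Gallai tree $H$ and every $r\in V(H)$ there is a list assignment $L_r$ with $|L_r(v)|=\deg_H(v)$ for $v\ne r$ and $|L_r(r)|=\deg_H(r)+1$ for which there is at most one colour $c\in L_r(r)$ such that $H$ has a proper $L_r$-colouring with $\col(r)=c$; deleting that colour from $L_r(r)$ then yields degree-sized lists with no proper colouring. Base cases: a single vertex (trivial); a clique $K_m$, with all non-$r$ lists equal to a fixed $(m-1)$-set and $L_r(r)$ one extra colour, so that only the extra colour can be completed; and an odd cycle, with all non-$r$ lists $\{1,2\}$ and $L_r(r)=\{1,2,3\}$, so that $1$ and $2$ force a parity contradiction on the remaining even path while $3$ succeeds. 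Inductive step: choose a leaf block $B$ of $H$ with cut vertex $x$ such that $r\notin V(B)$ or $r=x$ (possible whenever $H$ is not a single block), set $H':=H-(V(B)\setminus\{x\})$, apply the hypothesis to $(H',r)$ with lists $L'$, pick a set $S_B$ of $\deg_B(x)$ colours unused in $L'$, give every interior vertex of $B$ the list $S_B$ (so $B$ extends a colouring precisely when $\col(x)\notin S_B$, exactly as in the clique and odd-cycle base cases), and set $L_r(x):=L'(x)\cup S_B$, $L_r(r):=L'(r)$, and $L_r:=L'$ elsewhere; since every proper $L'$-colouring of $H'$ colours $x$ from $L'(x)$, which is disjoint from $S_B$, it always extends over $B$, so the colours of $r$ extending to all of $H$ are exactly those extending in $H'$ --- at most one. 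Combining the two directions gives the claimed characterisation; this second direction is routine once one hits on the invariant and the fresh-colour bookkeeping at cut vertices.
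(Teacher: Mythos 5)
The paper states this lemma with a citation only and gives no proof, so there is nothing in-paper to compare against; I assess your argument on its own merits. The implication ``Gallai tree $\Rightarrow$ not degree-choosable'' is correct and complete: the strengthened invariant (at most one colour at the root extends), the three base cases, and the inductive step that adjoins a leaf block at a cut vertex with a set $S_B$ of fresh colours are all sound, and deleting the one extending colour from $L_r(r)$ does yield degree-sized lists with no proper colouring. The peeling lemma is also correct, and the reduction of the converse direction to a single $2$-connected block $B$ that is neither complete nor an odd cycle is fine, as is the sub-case where $B$ has an induced even cycle.

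The genuine gap is the sub-case where $B$ has no induced even cycle. You assert, without proof, that such a $B$ always contains an induced $K_4$ minus an edge, or a vertex $v$ with $B-v$ still $2$-connected, non-complete and not an odd cycle, with $K_4$ minus an edge as the sole terminal case --- and you acknowledge this is ``the main obstacle.'' This is not established and is not obviously true: even-hole-free graphs form a structurally rich class, and pinning down such a reduction would require a structural analysis comparable in difficulty to the theorem itself. As written, the hard half of the proof is therefore incomplete. The standard way to close it avoids induced cycles entirely: after shrinking lists so that $|L(v)|=\deg_B(v)$ for every $v$, either all lists are equal --- then $B$ is $d$-regular with $d=|L|$ and Brooks' theorem (citable, or reprovable by the Lov\'asz $a,b,c$ ordering) gives a proper $d$-colouring from the common list --- or there are adjacent $u,w$ with $L(u)\neq L(w)$; colour $u$ with a colour in $L(u)\setminus L(w)$, delete it (the remainder is still connected by $2$-connectivity), observe $w$ now has strict surplus $|L(w)|>\deg_{B-u}(w)$, and finish by a greedy colouring of $B-u$ in order of decreasing distance from $w$, with $w$ last. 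Replacing your ``no induced even cycle'' branch with this two-case split would make the proof complete.
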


\begin{lemma}[{\cite[Lemma 9]{GHKM_dc21}}]
    \label{lem:exists-dcc}
    In a graph $G$ of maximum degree $\Delta \geq 3$, the distance-$2 \log_{\Delta-1} n$ neighborhood of every node $v$ contains a node of degree less than $\Delta$ or a degree-choosable component.
\end{lemma}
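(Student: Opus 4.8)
The plan is to establish \cref{lem:exists-dcc} by a counting argument on the structure of the ball around a node $v$, distinguishing between the case where all vertices within a moderate distance have degree exactly $\Delta$ and the case where one of them does not. If some node within distance $2\log_{\Delta-1} n$ of $v$ has degree less than $\Delta$, we are immediately done, so the entire argument concentrates on the complementary situation: every vertex within distance roughly $2\log_{\Delta-1} n$ has degree exactly $\Delta$. I would work inside the subgraph $H$ induced by $N^{r}(v)$ with $r = 2\log_{\Delta-1} n$, in which (away from the boundary at distance $r$) every vertex has degree $\Delta \geq 3$.

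The key step is to locate a cycle close to $v$ and then argue that a suitable subgraph containing it is a degree-choosable component. First I would observe that since every vertex on a BFS path from $v$ has degree $\Delta \geq 3$, the number of vertices at distance exactly $j$ from $v$ grows at least like $(\Delta-1)^{j-1}$ as long as the BFS tree has not yet ``closed up'' on itself; but the whole graph has only $n$ vertices, so by the time $j$ reaches about $\log_{\Delta-1} n$ the BFS layers can no longer all be disjoint trees — there must be a non-tree edge, hence a cycle, within distance $O(\log_{\Delta-1} n)$ of $v$. (The factor $2$ in the statement gives comfortable slack: one $\log_{\Delta-1} n$ to reach a cycle, another to have room to work.) Once we have a cycle $C$ within distance $\approx \log_{\Delta-1} n$ of $v$, I would look at the $2$-connected block $S$ of $H$ containing $C$, together with enough of the surrounding graph. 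If $S$ is degree-choosable (i.e., it is $2$-connected, induced, and not a Gallai tree), we are done by \cref{def:dcc} and the Gallai-tree characterization. Otherwise $S$ is a clique or odd cycle, and here is where one exploits that $\Delta \geq 3$ and that the ball is large: a vertex $u$ of $S$ has degree $\Delta$ in $G$, and if $S$ is a clique $K_t$ with $t \leq \Delta$ (it cannot be $K_{\Delta+1}$ since $G$ is $\Delta$-colorable, or one can just take $t\le \Delta$ as $S\subseteq H$), then $u$ has a neighbor outside $S$; following such exit edges one builds a connected subgraph whose block graph (a tree by \cref{prop:block-tree}) is not a single Gallai tree, because attaching another block at a cut vertex of high degree destroys the Gallai-tree property — concretely, in a Gallai tree every cut vertex lies in at most... and the degree/block bookkeeping fails once degrees are $\Delta\ge 3$ uniformly. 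The net effect is that within one more $\log_{\Delta-1} n$ worth of distance one can always extend a clique/odd-cycle block to a non-Gallai-tree connected subgraph, which is then degree-choosable; taking its $2$-connected ``core'' (or invoking the standard fact that a connected non-Gallai-tree graph contains a DCC) yields the required degree-choosable component within distance $2\log_{\Delta-1} n$ of $v$.

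The main obstacle I anticipate is the second half: going from ``there is a cycle nearby'' to ``there is an \emph{induced, $2$-connected} degree-choosable component nearby.'' A short cycle found by BFS need not be induced, and its block in $H$ might be an odd cycle or a clique (hence not degree-choosable on its own); the delicate part is to show that, \emph{still within distance $2\log_{\Delta-1} n$}, one can always enlarge such a block into a $2$-connected subgraph that is not a Gallai tree, using only the hypothesis that all nearby degrees equal $\Delta\ge 3$. This is essentially a local structural lemma about Gallai trees: a $\Delta$-regular-in-the-ball Gallai tree would have to keep branching (each block being a small clique or odd cycle attached at cut vertices), and a counting argument analogous to the BFS one shows it cannot remain a Gallai tree for $\log_{\Delta-1} n$ levels without exhausting $n$ vertices — so somewhere a ``bad'' (degree-choosable) block or configuration appears. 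I would also double-check the boundary issue: we must ensure the DCC we construct lies strictly inside $N^{r}(v)$ so that the ``degree $\Delta$'' guarantee genuinely applies to all its vertices, which is exactly what the generous choice $r = 2\log_{\Delta-1} n$ (rather than $\log_{\Delta-1} n$) is there to buy. Everything else — $2$-connectivity, inducedness via passing to $H = G[N^r(v)]$, and the translation ``not a Gallai tree $\Leftrightarrow$ degree-choosable'' — follows from the cited \cref{prop:block-tree} and the Gallai-tree lemma.
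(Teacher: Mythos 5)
The paper does not prove this lemma; it imports it verbatim from~\cite{GHKM_dc21} (their Lemma~9), and the underlying engine in that paper, restated here as \cref{lem:DCCorexpand}, is precisely the quantitative statement that a ball with no node of degree $<\Delta$ and no DCC must expand at rate $(\Delta-1)^{\lfloor k/2\rfloor}$. Your sketch points in the same direction (BFS-style counting), but as written it contains both a false structural claim and an uncompleted core argument.

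The false claim is the sentence ``attaching another block at a cut vertex of high degree destroys the Gallai-tree property \dots the degree/block bookkeeping fails once degrees are $\Delta\ge 3$ uniformly.'' This is not true: there exist (arbitrarily large) Gallai trees in which \emph{every} vertex has degree exactly $\Delta\ge 3$. For $\Delta=3$, take a collection of triangles and join them by pendant $K_2$ blocks so that each triangle vertex has its third edge going to a vertex of another triangle; more generally, use $K_\Delta$ blocks with one $K_2$ block hanging off each vertex. Every block is a clique, the block graph is a tree, and all vertex degrees are $\Delta$. So the Gallai-tree property is \emph{not} destroyed by uniform degree $\Delta$; what happens instead is that such a configuration is forced to grow exponentially in the radius. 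Your extension step (``following such exit edges one builds a connected subgraph whose block graph is not a single Gallai tree'') therefore cannot be justified as a local structural fact --- it simply isn't one.

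The real argument, which you allude to in your second paragraph but never carry out, is purely quantitative: if no DCC and no low-degree vertex appears within radius $k$, then the ball is a Gallai tree whose blocks are cliques and odd cycles, and one shows by induction on the block decomposition (this is the content of \cref{lem:DCCorexpand}) that there are at least $(\Delta-1)^{\lfloor k/2\rfloor}$ vertices at distance $k$ from $v$; setting $k=2\log_{\Delta-1}n$ gives more than $n$ vertices, a contradiction. That induction is where all the work is, and your sketch gestures at it (``a counting argument analogous to the BFS one'') without providing it. There is also the inducedness/boundary issue you flag yourself: passing to $H=G[N^r(v)]$ can lower degrees near the boundary and can make a subgraph ``induced in $H$'' but not ``induced in $G$,'' so the DCC you extract must live strictly in the interior of the ball --- this too needs to be threaded through the counting argument rather than deferred. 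In short, your high-level plan is compatible with the cited proof, but the step that would make it a proof is exactly the expansion lemma, and the structural shortcut you propose in its place is incorrect.
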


\begin{lemma}[{\cite[Lemma 8]{GHKM_dc21}}]
    \label{lem:DCCorexpand}
    For every node $v$ in a graph $G$, there exist either a node of degree less than $\Delta$ in distance $k$ of $v$ or a degree choosable component of radius at most $k$ in distance $k$ of $v$, or there are at least $(\Delta-1)^{\floor{k/2}}$ nodes at distance $k$ of $v$.
\end{lemma}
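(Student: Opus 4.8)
The plan is to run a breadth-first search from $v$, follow the sizes of the distance layers, and extract a degree-choosable component wherever the layers fail to grow. Write $L_i=\set{u:\dist_G(v,u)=i}$ and $B_j=\bigcup_{i\le j}L_i$. If some vertex within distance $k$ of $v$ has degree $<\Delta$ we are done, so assume every vertex of $B_k$ has degree exactly $\Delta$ in $G$; we also use the standard standing assumption for $\Delta$-coloring that $G$ contains no $K_{\Delta+1}$ (without it the statement fails, e.g.\ for $G=K_{\Delta+1}$ and $k=2$), so no induced subgraph of $G$ on $B_k$ is a $K_{\Delta+1}$. If $G[B_k]$ is a tree, then every $u\in L_i$ with $1\le i\le k$ has a unique neighbor in $L_{i-1}$, hence exactly $\Delta-1$ neighbors in $L_{i+1}$, all distinct across $L_i$; so $|L_i|=\Delta(\Delta-1)^{i-1}$ and $|L_k|\ge(\Delta-1)^{k}\ge(\Delta-1)^{\floor{k/2}}$, giving the third alternative.

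Otherwise $G[B_k]$ is not a tree; set $H:=G[B_k]$ and split on whether $H$ is a Gallai tree. If $H$ is not a Gallai tree, then by the characterization of degree-choosability some block $D$ of $H$ is $2$-connected, induced in $G$ (blocks are induced and $H$ is induced in $G$), and neither a clique nor an odd cycle, i.e.\ a degree-choosable component (\cref{def:dcc}); as $D\subseteq B_k$, it lies within distance $k$ of $v$. Its radius is at most $k$ as well: every shortest path in $G$ from $v$ to a vertex of $B_k$ stays inside $B_k$, so distances from $v$ are the same in $H$ as in $G$; and a block is geodesically closed (a detour outside it would merge with it into a larger $2$-connected subgraph), so distances inside $D$ agree with distances in $H$. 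Hence if $v\in D$ then $D$ has radius $\le k$ from $v$, and if $v\notin D$ then every path from $v$ to a vertex $x\in D$ passes through the vertex $p$ of $D$ nearest $v$, so $\dist_D(p,x)\le\dist_G(v,x)\le k$. Thus the second alternative holds.

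If instead $H$ is a Gallai tree, we derive the third alternative. The structural point is that every vertex $u\in B_{k-1}$ lies in at least two blocks of $H$: if $u$ were in a single block $S$, then $\deg(u)=|S|-1$ with $S$ a clique would force $S=K_{\Delta+1}$ (excluded), while $S$ an odd cycle would give $\deg(u)=2<\Delta$. Equivalently, in the block graph $\fB(H)$ --- a tree by \cref{prop:block-tree}, which we root at $v$ --- every leaf block has all its non-cut vertices in the boundary layer $L_k$, so the number of leaf blocks is at most $|L_k|$. Since the rooted block tree must reach out to $L_k$ while the degree condition forces branching, it has $(\Delta-1)^{\Omega(k)}$ leaves; to pin down the exponent, I would prove the two-step estimate $|L_{i+2}|\ge(\Delta-1)\,|L_i|$ for all $i\le k-2$ --- two BFS levels outward from any vertex of $L_i$, routed through its incident blocks, reach at least $\Delta-1$ vertices of $L_{i+2}$ that no other vertex of $L_i$ reaches in as few steps --- which together with $|L_0|=1$ and $|L_1|=\Delta$ yields $|L_k|\ge(\Delta-1)^{\floor{k/2}}$.

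I expect the main obstacle to be this last two-step growth estimate. It calls for a case analysis over the block types incident to a vertex (cliques $K_m$ with $3\le m\le\Delta$, and odd cycles of various lengths) and their placement relative to the BFS levels, showing that however the degree budget $\Delta$ at a vertex splits between the block toward $v$ and the blocks away from $v$, two levels still multiply the layer size by at least $\Delta-1$; the extremal configuration --- a tree of triangles, or of triangles with pendant edges when $\Delta=3$ --- shows the exponent $\floor{k/2}$ cannot be improved. The remaining ingredients (the tree case, the Gallai-tree characterization, and the radius bound above) are routine.
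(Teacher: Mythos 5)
The paper itself does not prove this lemma; it is imported verbatim as Lemma~8 from the cited work of Ghaffari, Hirvonen, Kuhn, and Maus, so there is no ``paper's own proof'' to compare against. Your proposal therefore has to be judged on its own merits.

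The overall structure is sound and is what one would expect: dispatch the degree-$<\Delta$ case, then split $H=G[B_k]$ by whether it is a tree, a non--Gallai-tree, or a Gallai tree. The tree case is correct: since every vertex of $B_{k-1}$ has degree $\Delta$ in $H$ and a tree admits neither multiple parents nor within-level edges, each layer multiplies by $\Delta-1$, giving $|L_k|\ge(\Delta-1)^k$. The non--Gallai-tree case is also essentially right: by the characterization of degree-choosability, $H$ has a block that is neither a clique nor an odd cycle, and since a block is itself $2$-connected and is its own unique block, it is not a Gallai tree, hence degree-choosable, hence a DCC. Your radius bound (BFS balls are distance-preserving from $v$; blocks are geodesically closed; any path from $v\notin D$ into $D$ enters through a unique cut vertex $p$) is correct as well, so this block is a DCC of radius $\le k$ within distance $k$ of $v$.

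The gap is exactly where you flag it: the Gallai-tree case. This is the heart of the lemma --- without it, the statement is close to trivial --- and you only sketch it. The structural observation that every vertex of $B_{k-1}$ lies in at least two blocks (using that no block is $K_{\Delta+1}$ and $\Delta\ge 3$) is correct, and the reduction to leaf-block counting is a reasonable reformulation, but neither of these alone gives $(\Delta-1)^{\floor{k/2}}$ leaves. The proposed two-step estimate $|L_{i+2}|\ge(\Delta-1)|L_i|$ would indeed close the argument, but the stated mechanism --- ``two BFS levels outward from any vertex of $L_i$ reach at least $\Delta-1$ vertices of $L_{i+2}$ that no other vertex of $L_i$ reaches in as few steps'' --- is not obviously true as written (vertices of $L_i$ inside the same large odd-cycle block, or cycle vertices adjacent to a shared $L_{i+1}$ vertex whose pendant structure is thereby double-counted, need to be handled), and you say yourself that the case analysis over block types is the main remaining obstacle. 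As it stands the crux of the Gallai-tree case is an unproven claim, so the proof is incomplete precisely in the one case that carries the quantitative content of the lemma. One remark in your favor: your observation that blocks are entered through a unique cut vertex (so two $L_i$-vertices cannot both be ``entry points'' of the same block) simplifies the case analysis you would need to do, and I believe the estimate is ultimately correct --- but it still needs to be proven rather than asserted.
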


We make use of more specific subgraphs than DCCs for our $\Delta$-coloring algorithms, which we call \emph{locally degree-choosable components} and \emph{nice} locally degree-choosable components. We show a lemma analogous to \cref{lem:exists-dcc} about their existence in \cref{sec:nice-ldccs}.

\begin{definition}[Locally degree-choosable component]
\label{def:ldcc}
    A DCC is a \emph{locally degree-choosable component (LDCC)} if it is not an even cycle.
\end{definition}

\begin{restatable}{definition}{defNLDCC}[Nice locally degree-choosable component]
    \label{def:nice-ldcc}
    An LDCC is a \emph{nice locally degree-choosable component} (NLDCC) if it has exactly two nodes of degree $3$, while all its other nodes have degree $2$.
\end{restatable}

\begin{definition}[Nice locally $\Delta$-extendable components]
    \label{def:nice-extendable}
     NLDCCs and nodes of degree less than $\Delta$ are nice locally $\Delta$-extendable components (\NLDelEC).
\end{definition}

\paragraph{Algorithmic subroutines relevant for our algorithms.}

\begin{lemma}[Linial's algorithm, Theorem 4.2 in \cite{linial1992locality}]
    \label{lem:linial}
    There is an algorithm for computing a $O(\Delta^2)$-coloring of a graph $G$ in $O(\log^*n)$ rounds.
\end{lemma}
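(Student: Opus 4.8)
The plan is to reduce the number of colors iteratively, starting from the trivial proper coloring given by the unique identifiers (colors in $[\poly(n)]$, available with no communication) and repeatedly applying a combinatorial gadget that shaves off colors exponentially fast. The heart of the argument is a \emph{one-round color-reduction step}: if the graph currently carries a proper coloring with $m$ colors, then in a single communication round one can recolor it properly using only $O(\Delta^2 \log m)$ colors.

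To implement this step, fix a ground set $[m']$ with $m' = \Theta(\Delta^2 \log m)$ and a family of sets $S_1,\dots,S_m \subseteq [m']$ that is \emph{$\Delta$-cover-free}: for every index $i$ and every choice of at most $\Delta$ other indices $j_1,\dots,j_\Delta$, the set $S_i$ is not contained in $\bigcup_\ell S_{j_\ell}$. Such a family exists, either by a union bound over the $\le m^{\Delta+1}$ potential violations when each element of $[m']$ is placed into each $S_i$ independently with probability $1/\Delta$, or explicitly via a Reed--Solomon-type construction (identify colors with low-degree polynomials over a field of size $\Theta(\Delta \log m)$ and let $S_i$ be the graph of the $i$-th polynomial, using that two distinct low-degree polynomials agree in few places). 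Crucially this family depends only on $m$ and $\Delta$, both globally known, so every node can compute the same canonical such family locally with no communication (unbounded local computation is free in the \LOCAL model; with the explicit construction it is not even needed). Each node sends its current color $c(v)$ to all neighbors; a node $v$ with color $c$ then picks any element of $S_c \setminus \bigcup_{u \in N(v)} S_{c(u)}$ — nonempty by cover-freeness — and adopts it. If neighbors $u,v$ ended with the same new color $x$, then $x \in S_{c(v)}$ and $x \in S_{c(u)}$, contradicting that $v$ chose $x \notin S_{c(u)}$; hence the new coloring is proper, with at most $O(\Delta^2 \log m)$ colors.

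Iterating, and writing $x_i = \log m_i$ for the (log) color count after $i$ rounds, yields the recursion $x_{i+1} \le \log x_i + O(\log \Delta)$ starting from $x_0 = O(\log n)$, which drives $m_i$ down to $\poly(\Delta)$ after $O(\log^* n)$ rounds (the additive $O(\log \Delta)$ slack never matters once $m_i$ is already polynomial in $\Delta$). At that point one performs one final reduction round with a tighter gadget: since $m = \poly(\Delta)$, the Reed--Solomon construction with a constant polynomial degree and a prime field of size $\Theta(\Delta)$ gives a $\Delta$-cover-free family over a ground set of size $O(\Delta^2)$, bringing the color count to $O(\Delta^2)$ as claimed. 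The total is $O(\log^* n)$ communication rounds.

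The one place that needs genuine care — the main obstacle — is confirming that the iteration terminates in $O(\log^* n)$ rounds rather than, say, $O(\log^* n \cdot \poly(\log \Delta))$: one must check that the $\log$-shrinkage dominates the additive $O(\log\Delta)$ term at every step until $\log m$ has fallen to $\Theta(\log \Delta)$, and then argue separately that the explicit final round reaches $O(\Delta^2)$. Everything else — existence of the cover-free families with the stated parameters, and correctness of a single recoloring round — is routine once the gadget is in place.
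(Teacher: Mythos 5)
The paper does not prove this lemma but cites Linial (Theorem 4.2 of \cite{linial1992locality}); your proposal is a correct reconstruction of Linial's original argument via $\Delta$-cover-free families, with the right parameters for both the probabilistic existence bound and the polynomial (Reed--Solomon) construction, the right one-round recoloring step, and a sound analysis of the $\log^*$-iteration including the final clean-up round to land at $O(\Delta^2)$.
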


A subset $S$ of the nodes of a graph $G$ is an $(\alpha,\beta)$-ruling set~\cite{ColeV86,Awerbuch89} of $G$ iff the following two properties hold: any two nodes in $S$ are at distance at least $\alpha$ from each other, and each node in the graph $G$ is within distance $\beta$ from some node in $S$. Ruling sets generalize maximal independent sets (MIS), which are $(2,1)$-ruling sets.

\begin{lemma}[Computing ruling sets]
    \label{lem:ruling-set}
    There exist algorithms in \LOCAL for the following tasks
    \begin{itemize}
        \item Deterministically, given a $d$-coloring of a graph $G$, computing a $(2,c)$-ruling set of $G$ in $O( c \cdot d^{1/c})$ rounds~\cite[Theorem 3]{schneider2013symmetry}.
        \item Deterministically, computing a $(2,O(\log \log \Delta))$-ruling set in $\Otilde(\log n)$ rounds~\cite[Theorem 4.1 in arxiv version]{GG_focs24}.
        \item With high probability (randomized), computing a $(2,O(\log \log n))$-ruling set in $O(\log \log n)$ rounds \cite{Gfeller07,schneider2013symmetry}.
    \end{itemize}
\end{lemma}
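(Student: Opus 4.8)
All three items are quoted from prior work, so the plan is to recall the relevant constructions — in full for the classical first item, and as black boxes for the latter two.

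\textbf{Item 1 (given a $d$-coloring, a $(2,c)$-ruling set in $O(c\cdot d^{1/c})$ rounds).} The plan is a recursion on the ruling parameter $c$. For $c=1$, a $(2,1)$-ruling set is an MIS, obtained from the $d$-coloring by sweeping through the color classes $1,2,\dots,d$ and, at step $j$, letting every node of color $j$ with no already-selected neighbor select itself; this costs $O(d)$ rounds. For $c\ge 2$, put $b:=\lceil d^{1/c}\rceil$, partition $[d]$ into $g=O(d/b)=O(d^{(c-1)/c})$ groups of at most $b$ colors each, and let $V_1,\dots,V_g$ be the induced partition of $V$ (each $v$ lands in the group of $\col(v)$). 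The point that makes this efficient: since $V_1,\dots,V_g$ are \emph{vertex-disjoint}, the base $c=1$ routine can be run on $G[V_1],\dots,G[V_g]$ \emph{simultaneously}, and as each $G[V_i]$ carries a proper coloring with $\le b$ colors this costs only $O(b)=O(d^{1/c})$ rounds in total (not $O(g\cdot b)$), producing an MIS $M_i$ of each $G[V_i]$. Set $S':=\bigcup_i M_i$. Then (i) every $v\in V$ lies in some $V_i$ and is within distance $1$ of $M_i$ in $G[V_i]$, hence within distance $1$ of $S'$ in $G$; and (ii) assigning color $i$ to the nodes of $M_i$ is a proper $g$-coloring of $G[S']$, since two adjacent nodes of $S'$ sharing an $M_i$ would contradict the independence of $M_i$ in the induced subgraph $G[V_i]$. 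Now recurse on $G[S']$ with this $g$-coloring and parameter $c-1$ to obtain a $(2,c-1)$-ruling set $S''$ of $G[S']$: being independent in the induced subgraph $G[S']$, $S''$ is independent in $G$, and every node of $S'$ is within distance $c-1$ of $S''$ along a path in $G[S']\subseteq G$, hence within distance $c-1$ in $G$; combined with (i), every node of $V$ is within distance $c$ of $S''$. The running time satisfies $T(d,c)=O(d^{1/c})+T\bigl(O(d^{(c-1)/c}),\,c-1\bigr)$ with $T(d,1)=O(d)$; each of the $c$ levels reduces the color count by a factor $\Theta(d^{1/c})$ and costs $O(d^{1/c})$ rounds, so this unrolls to $O(c\cdot d^{1/c})$. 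One keeps a small amount of bookkeeping about ceilings, but these only add lower-order terms per level.

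\textbf{Items 2 and 3.} These bounds are imported as black boxes from \cite{GG_focs24} and \cite{Gfeller07,schneider2013symmetry}, respectively. For item 3, the underlying idea is to maintain an active subgraph whose maximum degree is driven from (essentially) $n$ down to a constant over $O(\log\log n)$ phases, each phase sampling the active nodes at the density that makes the sampled set's maximum degree roughly the square root of the current one (w.h.p.) while keeping every removed node within distance $O(1)$ of a growing independent set; one finishes with an $O(1)$-round MIS, and the covering radius accumulates to $O(\log\log n)$. Item 2 is the recent deterministic counterpart: \cite{GG_focs24} derandomizes such a sampling process — run inside small-radius balls, which is what yields covering radius $O(\log\log\Delta)$ rather than $O(\log\log n)$ — via deterministic network decomposition. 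We do not reproduce its proof.

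\textbf{Expected main obstacle.} There is essentially no obstacle for items 2 and 3 beyond correctly invoking the cited statements. For item 1, the only genuine subtlety is the parallel-simulation step: recognizing that the vertex-disjointness of the $G[V_i]$ lets the $g$ group-MISs run concurrently in $O(b)$ rather than $O(gb)$ rounds, together with the routine checks that the induced $g$-coloring of $G[S']$ feeds correctly into the recursion and that the ceiling losses remain negligible across the $c$ levels.
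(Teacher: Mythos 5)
Your proposal is correct: the paper itself offers no proof of this lemma (it is purely a citation of prior work), and your reconstruction of the first item is precisely the recursive color-grouping construction of Schneider--Elkin--Wattenhofer that the paper cites, with the recursion, the parallel per-group MIS sweeps, and the accounting of domination radius and round complexity all handled correctly. Treating the second and third items as black boxes from the cited references is exactly what the paper does as well.
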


In particular, starting from an $O(\Delta^2)$-coloring, a $(2,\log \Delta)$-ruling set can be computed in $O(\log \Delta)$ rounds.

\begin{lemma}[Computing MIS]
    \label{lem:mis_local}
    An MIS can be computed deterministically in the following runtimes in \LOCAL:
    \begin{itemize}
        \item $O(\Delta + \log^*n)$ rounds~\cite{barenboim14distributed}.\item $O(\log^2 \Delta \log n)$~\cite{faour2023local}.
        \item $\Otilde(\log^{5/3} n)$~\cite{GG_focs24}.
    \end{itemize}
\end{lemma}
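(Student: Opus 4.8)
The statement collects three known deterministic \LOCAL algorithms for Maximal Independent Set, so the plan is to recall each argument at the level appropriate for a preliminaries section rather than reprove everything from scratch; nothing here is new, and the bulk of the work for the strongest bound is deferred to the cited reference.

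For the $O(\Delta + \log^* n)$ bound I would reduce MIS to low-color proper coloring. First run Linial's algorithm (\cref{lem:linial}) to obtain an $O(\Delta^2)$-coloring in $O(\log^* n)$ rounds, then invoke the defective-coloring / color-reduction sparsification of \cite{barenboim14distributed} to bring the palette down to $O(\Delta)$ colors in $O(\Delta + \log^* n)$ rounds. Given a proper $c$-coloring $\chi$ with $c = O(\Delta)$, an MIS is produced in $c$ further rounds by a single sweep over the color classes: in phase $i$, every still-uncolored node of color $i$ with no neighbor already in the set joins the set; this terminates with an independent set that is maximal. Summing the three parts gives $O(\Delta + \log^* n)$.

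For the $O(\log^2 \Delta \, \log n)$ bound I would start from the $O(\log n)$-phase randomized process underlying Luby's algorithm — in which each phase removes, in expectation, a constant fraction of the remaining nodes together with their incident edges — and derandomize a single phase using the local rounding framework of \cite{faour2023local}. The desired progress of one phase can be written as a sum of utility and cost functions whose dependency is confined to the $O(\Delta)$-hop neighborhood of each node; the framework then rounds the associated fractional solution one bit at a time over its $O(\log \Delta)$ bits of precision, spending roughly $O(\log \Delta)$ rounds per bit, hence $O(\log^2 \Delta)$ rounds per phase. Over the $O(\log n)$ phases this totals $O(\log^2 \Delta \, \log n)$ rounds.

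The $\Otilde(\log^{5/3} n)$ bound is the most involved part, and it is where I expect the real difficulty to lie: it is exactly the deterministic MIS algorithm of \cite{GG_focs24}, which I would invoke as a black box. Its proof combines the local rounding machinery above with a faster, recursively constructed (weak) network-decomposition-type object, and the $5/3$ exponent emerges from optimizing the trade-off between the cost of building this object and the cost of exploiting it; the remaining $\poly(\log \log n)$ overhead is absorbed into the $\Otilde(\cdot)$. Since our applications only use the stated running time, reproducing this argument is unnecessary and I would simply cite it.
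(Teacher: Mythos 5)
Your proposal is correct and matches the paper's treatment: \cref{lem:mis_local} is a preliminaries statement that the paper does not prove but simply attributes to \cite{barenboim14distributed}, \cite{faour2023local}, and \cite{GG_focs24}, exactly the sources you invoke. Your sketches of the three underlying arguments (Linial plus color reduction plus a sweep over color classes; derandomizing Luby via local rounding; and the Ghaffari--Grunau algorithm as a black box) are accurate summaries of those cited works.
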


\begin{lemma}
\label{lem:d1lc_local}
    A $\deg+1$-list-coloring with colors of order $O(\Delta)$ can be computed in the following runtimes in \LOCAL:
\begin{itemize}
    \item Deterministically, in $O(\sqrt{\Delta \log \Delta} + \log^* n)$ rounds \cite[Theorem 3]{MausTonoyan_dc22}, following work by \cite{fraigniaud2016local,BEG_jacm22}. The algorithm only takes $O(\sqrt{\Delta \log \Delta})$ rounds if given a $\Delta^{O(1)}$-coloring.
    \item Deterministically, in $O(\log^2 \Delta \log n)$ rounds \cite{ghaffari2021deterministic}
\item Deterministically, in $\Otilde(\log^{5/3} n)$ rounds \cite[Theorem 1.3 in arxiv version]{GG_focs24}
\item With high probability (randomized), in $\Otilde(\log^{5/3} \log n)$ rounds \cite{HalldorssonKNT22near,GG_focs24} 
\end{itemize}
\end{lemma}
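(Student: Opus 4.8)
The plan is to assemble the four claims from existing results, using the hypothesis that the color space has size $O(\Delta)$ only to handle the $\log^*$-type terms cleanly. The starting observation is that in any $\deg+1$-list-coloring instance every list has size at most $\Delta+1$, so all the general-purpose $\deg+1$-list-coloring algorithms cited below apply verbatim; moreover, since each node outputs a color from its own list, the output automatically lies in the given color space, so the restriction to colors of order $O(\Delta)$ is only a restriction on the \emph{input} and can only make the problem easier. (It does, however, matter for one thing: it keeps the ambient color universe small, so that color-reduction preprocessing stays cheap.)

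For the first bullet I would invoke \cite[Theorem 3]{MausTonoyan_dc22} (building on \cite{fraigniaud2016local,BEG_jacm22}), which solves $\deg+1$-list-coloring in $O(\sqrt{\Delta\log\Delta})$ rounds once a $\Delta^{O(1)}$-coloring of $G$ is available; such a coloring (indeed an $O(\Delta^2)$-coloring) is produced in $O(\log^* n)$ rounds by Linial's algorithm (\cref{lem:linial}), and composing the two yields $O(\sqrt{\Delta\log\Delta}+\log^* n)$ rounds, with the parenthetical statement about a pre-supplied $\Delta^{O(1)}$-coloring being immediate. The second bullet is exactly the deterministic $\deg+1$-list-coloring routine of \cite{ghaffari2021deterministic}, which runs in $O(\log^2\Delta\log n)$ rounds, and the third is \cite[Theorem 1.3 in arxiv version]{GG_focs24}, running in $\Otilde(\log^{5/3} n)$ rounds. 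For the randomized bullet I would use the standard graph-shattering template: run the near-optimal randomized $\deg+1$-list-coloring algorithm of \cite{HalldorssonKNT22near}, whose post-shattering deterministic phase is carried out on components of $\poly(\log n)$ nodes; plugging in the current-best deterministic $\Otilde(\log^{5/3} n)$-round algorithm of \cite{GG_focs24} for that phase gives $\Otilde(\log^{5/3}\log n)$ rounds w.h.p.

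Since this lemma is essentially a compilation of prior work tailored to the form needed later, there is no substantive new argument, and the only thing requiring care is bookkeeping: one must verify, in each item, that the preprocessing needed to reach a $\log^*$-free (resp.\ $\poly(\log\log n)$-free) runtime fits within the stated bound, and that the color universe never grows beyond $[O(\Delta)]$ when the routine is later used as a black box inside the $\Delta$-coloring algorithms of \cref{sec:detresults,sec:randomized}. Because the universe is bounded by hypothesis and the preprocessing is exactly Linial's color reduction (\cref{lem:linial}) for the deterministic items and a black-box deterministic subroutine for the randomized one, this is routine; the ``hardest'' part is merely tracking which cited theorem states its bound with which kind of additive lower-order term.
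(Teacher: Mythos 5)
Your proposal is correct and matches what the paper does: \cref{lem:d1lc_local} is stated in the preliminaries as a summary of prior work with citations and no proof, and your reconstruction (Linial's $O(\Delta^2)$-coloring to discharge the $\log^* n$ term for the Maus--Tonoyan bound, direct citation of Ghaffari--Kuhn and Ghaffari--Grunau, and shattering via Halld\'orsson et al.\ with the Ghaffari--Grunau deterministic routine on the $\poly(\log n)$-sized post-shattering components) is exactly the intended justification. The only thing worth noting is that the $O(\Delta)$ color-universe hypothesis plays no real role in any of the four bullets beyond the bookkeeping you already flagged.
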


\begin{definition}[Layered graph]
    \label{def:layered_graph}
    A \emph{layered graph} of height $h$ is a graph in which the set of nodes $V$ is partitioned into $h$ subsets (layers), $V = \bigsqcup_{i=1}^h V_i$. For any node $v$ in some layer $V_i$, its \emph{up degree} $\widehat{\delta}(v)$ is the number of neighbors $v$ has in layers of equal or lesser index, i.e., $\card{N(v) \cap \bigsqcup_{j \leq i} V_j}$. We denote by $\widehat{\Delta}$ any known global bound on the maximum up degree $\max_v \widehat{\delta}(v)$.
\end{definition}

Each algorithm of complexity $T(n,\Delta)$ for coloring general graphs of size $n$ and maximum degree $\Delta$ mentioned in \cref{lem:d1lc_local} implies an algorithm of complexity $O(h \cdot T(n,\widehat{\Delta}))$ for coloring layered graphs of height $h$, size $n$ and maximum up degree $\widehat{\Delta}$, by coloring layers in order of decreasing index.
We will also use the following result that is specifically about coloring layered graphs:

\begin{lemma}[{\cite[Lemma 5.4 in arxiv version]{ghaffari2021deterministic}}]
\label{lem:layered_graphs}
    Consider a layered graph where each node $v \in V$ receives a list of colors $L_v$ of size $|L(v)| \geq \widehat{\delta}(v)+1$. There exists a deterministic \LOCAL algorithm that list-colors the graph in
    $O(\log^2 \widehat{\Delta} \cdot \log n + h \cdot \log^3 \widehat{\Delta})$ rounds.
\end{lemma}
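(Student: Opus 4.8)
The plan is to color the layers one at a time in order of \emph{increasing} index, reducing each step to a $\deg+1$-list-coloring instance on a graph of maximum degree at most $\widehat{\Delta}$. Suppose $V_1,\dots,V_{i-1}$ have already been properly list-colored. For a node $v\in V_i$, exactly $\widehat{\delta}(v)-|N(v)\cap V_i|$ of its neighbors lie in $V_1\cup\dots\cup V_{i-1}$ (directly from the definition of the up degree in \cref{def:layered_graph}), so after deleting from $L_v$ the at most that many colors used by those already-colored neighbors, $v$ retains a list of size at least $(\widehat{\delta}(v)+1)-(\widehat{\delta}(v)-|N(v)\cap V_i|)=|N(v)\cap V_i|+1$. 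Hence coloring $V_i$ with the residual lists is precisely a $\deg+1$-list-coloring instance on the induced subgraph $G[V_i]$, which has maximum degree at most $\widehat{\Delta}$, and any valid output is automatically consistent with the colors of the lower layers. Iterating over $i=1,\dots,h$ and invoking the deterministic $O(\log^2\Delta\log n)$-round algorithm of \cref{lem:d1lc_local} already yields an $O(h\cdot\log^2\widehat{\Delta}\log n)$-round algorithm — this is the bound mentioned just before the lemma.

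To pull the $\log n$ factor out of the per-layer cost, the key observation is that the partition $V=\bigsqcup_i V_i$ is part of the input, so the graphs $G[V_1],\dots,G[V_h]$ — and hence any purely topological structure used by the coloring subroutine — are known from round $0$ and can be computed for all layers \emph{simultaneously}; only the residual lists of layer $i$ depend on the coloring of earlier layers. I would therefore peel the deterministic $\deg+1$-list-coloring routine into (i) a preprocessing step depending only on the graph — a Linial coloring plus a hierarchical, network-decomposition-style clustering of a bounded power of $G[V_i]$ whose parameters (number of color classes, cluster radius) are $\poly\log\widehat{\Delta}$ and independent of $n$ — which is exactly the part carrying the $\Theta(\log n)$ dependence and costs $O(\log^2\widehat{\Delta}\log n)$ rounds; and (ii) a list-dependent step that, given the structure from (i), runs $O(\log\widehat{\Delta})$ degree-halving phases, each implementable in $O(\log^2\widehat{\Delta})$ rounds when it may consult the precomputed clustering, for a total of $O(\log^3\widehat{\Delta})$ rounds. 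Since the layers partition $V$ and the computation on each $G[V_i]$ stays inside $G[V_i]$, the $h$ preprocessing runs do not interfere and together take only $O(\log^2\widehat{\Delta}\log n)$ rounds; then running step (ii) for $i=1,\dots,h$ in sequence — which is unavoidable, since layer $i$'s residual lists depend on the \emph{final} colors of the earlier layers, so the coloring cannot be pipelined — costs $O(h\log^3\widehat{\Delta})$ rounds, and adding the two contributions gives the claimed bound.

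The main obstacle is the second half: certifying that a deterministic $\deg+1$-list-coloring algorithm can actually be organized as a topology-only preprocessing step plus a $\poly\log\widehat{\Delta}$ list-dependent remainder, with all of the $\Theta(\log n)$ cost confined to the former. Concretely one must (a) exhibit the hierarchical clustering and bound its parameters by $\poly\log\widehat{\Delta}$ independently of $n$, (b) check that each degree-halving phase is implementable in $O(\log^2\widehat{\Delta})$ rounds once the clustering is available, and (c) verify that feeding the \emph{residual} lists into this machinery is legitimate — i.e., processing clusters color class by color class never exhausts a residual list, which holds because within any cluster a node's residual list still strictly exceeds its number of not-yet-colored neighbors. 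A lighter point to confirm is the parallel-precomputation claim: running the preprocessing on $G[V_1],\dots,G[V_h]$ at once incurs no more than the single-instance cost, which follows since each node lies in exactly one layer and induced-subgraph computations do not cross layer boundaries.
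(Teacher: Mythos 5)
This lemma is not proved in the paper; it is imported verbatim as Lemma~5.4 from the arXiv version of Ghaffari--Kuhn, so there is no in-paper argument to compare against, only the cited source.

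Your first paragraph is fine and, in fact, slightly more careful than the paper's surrounding prose: you correctly compute the residual lists and observe that coloring in \emph{increasing} layer index makes each layer a clean $\deg{+}1$-list-coloring instance on $G[V_i]$ with degree bound $\widehat\Delta$. (With the paper's definition of up-degree as neighbors in layers of \emph{equal or lesser} index, the increasing order is the one that makes the residual list of $v\in V_i$ have size at least $|N(v)\cap V_i|+1$; the ``decreasing index'' phrasing just before \cref{lem:layered_graphs} does not give that inequality in general, so your direction is the right one.) This yields the $O(h\log^2\widehat\Delta\log n)$ baseline.

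The gap is exactly where you flag it, and it is not a small one. The entire content of the lemma is that the $\log n$-dependent work of the Ghaffari--Kuhn $\deg{+}1$-list-coloring routine can be hoisted out of the per-layer loop, and your proposal simply asserts that the routine decomposes into ``(i) topology-only preprocessing carrying all the $\Theta(\log n)$ cost'' and ``(ii) a $\poly\log\widehat\Delta$ list-dependent remainder.'' That assertion is essentially a restatement of the lemma; nothing in the write-up establishes it. Moreover, the concrete description you give of step (i) --- ``a Linial coloring plus a hierarchical, \emph{network-decomposition-style} clustering'' --- contradicts the structure of the cited algorithm: \cite{ghaffari2021deterministic} is explicitly a network-decomposition-free construction (that is its whole point, and its title says so), built on iterated arbdefective/defective colorings and derandomized rounding. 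The $\log n$ factor there arises from the number of bit-fixing iterations in the rounding, not from a clustering whose parameters are $\poly\log\widehat\Delta$, and the rounding objective is defined in terms of the residual lists, so it is not obviously list-independent. The per-step costs you quote ($O(\log^2\widehat\Delta\log n)$ for the preprocessing, $O(\log^3\widehat\Delta)$ per layer) are reverse-engineered to sum to the target bound rather than derived from the algorithm. Items (a)--(c) that you list as ``to confirm'' are precisely the proof; without them the argument does not go through. To close the gap you would need to open up the GK rounding machinery and show which pieces genuinely depend only on the graph and the Linial coloring, which pieces depend on the lists, and that the latter fit in $O(\log^3\widehat\Delta)$ rounds per layer given the former --- or else find a different route (e.g., running the whole degree-reduction cascade once on the layered graph and only finalizing colors layer by layer), which is closer to how I would expect the cited proof to actually go.
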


\section{Technical Overview}
\label{sec:tech_overview}

\paragraph{Approaches from prior work: deterministic setting}
In the deterministic \LOCAL model, the general approach to computing $\Delta$-coloring in prior work~\cite{panconesi95delta,GHKM_dc21,ghaffari2021deterministic} has been a three step process of computing a ruling set $S$, coloring nodes outside $S$ in order of decreasing distance from $S$, and finally coloring each node in $S$ by finding a nearby subgraph with enough flexibility to color it using only colors from $[\Delta]$ regardless of colors taken outside the subgraph.

Underpinning this approach are two important ideas. First, that when coloring nodes outside $S$, each node has at least one neighbor closer to $S$, and thus at most $\Delta-1$ neighbors that are at higher or equal distance from $S$. This makes it easy to color nodes outside $S$ without the color $\Delta+1$ by treating them in order of decreasing distance from $S$. Second, any graph without $\Delta$-extendable subgraphs (i.e., subgraphs that can be colored last without using color $\Delta+1$) necessarily \emph{expands}, i.e., contains $\Delta^{\Omega(r)}$ nodes at distance $r$ from each node. Therefore, an easy-to-recolor subgraph needs to be within distance $O(\log_\Delta n)$ of each ruling set node. Having a sufficient $\Omega(\log_\Delta n)$ distance between nodes from the ruling set allows them to identify such a subgraph and recolor themselves in parallel, without interfering with one another.

\paragraph{Approaches from prior work: randomized setting}
As many other problems in distributed computing, the state-of-the-art randomized complexity of $\Delta$-coloring is almost exponentially smaller than its state-of-the-art deterministic complexity. An intuitive reason for this speedup is that in $\Delta$-coloring as in other problems, solving an $n$-sized instance in the randomized setting can essentially be reduced to solving $O(\poly(\log n))$-sized instances deterministically, using a technique known as \emph{shattering} \cite{barenboim2016locality}.

In the case of $\Delta$-coloring, this drastic reduction of the size of the largest connected component of unsolved nodes is achieved by the creation of ``havens'' (formally called $T$-node) in the expanding parts of the graph, through random sampling. Also used in the analysis of \cite{panconesi95delta}, $T$-nodes in \cite{GHKM_dc21} are randomly sampled across the graph, maintaining some minimal constant distance between them while making their sampling likely enough that any large subgraph of $G$ contains a large number of $T$-nodes. As alluded to when we referred to them as havens, the role of $T$-nodes is to provide locations in the graphs where the $\Delta$-coloring can be completed last. This is achieved by having each $T$-node select two random neighbors which it colors with the same color. This repetition of a color around each $T$-node means that no more than $\Delta-1$ colors will be used in each $T$-node's neighborhood as we complete the coloring outside of them, and thus, $T$-nodes are always guaranteed to be able to color themselves with a color in $[\Delta]$. In a careful analysis of this process, Ghaffari, Hirvonen, Kuhn, and Maus~\cite{GHKM_dc21} proved that in parts of the graph that expand, every node is guaranteed to have an $O(\log_\Delta \log n)$ uncolored path to a $T$-node. Conversely, parts of the graph with less expansion are guaranteed to contain a $\Delta$-extendable subgraph of diameter $O(\log_\Delta \log n)$. Put together, those two observations allow to reduce $\Delta$-coloring to a few instances of degree+1-list coloring. Prior work, due to their use of ruling sets to break symmetries between $\Delta$-extendable subgraphs, ended with an algorithm of complexity $O(\log^2 \log n)$ on constant degree graph.

Random sampling plays a similarly crucial role 
in the state-of-the-art (and tight), $O(\log_\Delta \log n)$ algorithm for sinkless orientation \cite{ghaffari2017distributed}.

We note that some recent papers~\cite{fischer2023fast,HM_disc24} deviate from this approach, relying instead on graph decomposition results known as \emph{almost-clique decompositions} (ACD). At a high level, an almost-clique decomposition is a partition of the nodes of the graph instead sparse parts, i.e., with vertices whose neighborhoods contain significantly less than $\binom{\Delta}{2}$ edges, and highly-connected clusters called \emph{almost-cliques}.
Approaching $\Delta$-coloring through the lens of almost-clique decompositions has also appeared in the streaming setting~\cite{AssadiKM_theoretics23}.

Our own approach is in the continuity of the sequence of works which did not use almost-clique decompositions \cite{panconesi95delta,GHKM_dc21,ghaffari2021deterministic}.

\paragraph{Main technical idea: Ruling Subgraphs}
The previous approach relying on a ruling set
with large distance between its members 
so as to recolor them independently currently incurs a cost of $\Theta(\log^2 n / \log \Delta)$ for computing said ruling set. Improving on the current state-of-the-art $O(\log^2 n)$ deterministic complexity in \LOCAL on constant degree graphs requires one of three things: improving the complexity of computing ruling sets, finding a way of efficiently completing the coloring even if ruling set nodes are closer to each other, or taking a different approach that does not rely on ruling sets in the same manner. Our new algorithm takes this third option.

When relying on coloring nodes from a ruling set last, its nodes are taken to be at a sufficient $\Omega(\log_\Delta n)$ distance from each other to ensure the coloring can be completed on each of them independently of other nodes in the ruling set. 
A critical observation is that such a large distance between ruling set nodes is chosen so as to guarantee that subgraphs used by ruling set nodes to recolor themselves \emph{neither touch nor overlap}. 
We take this idea one step further by computing \emph{ruling subgraph families} instead of ruling sets. That is, we compute a set of non-interfering $\Delta$-extendable subgraphs, each of size at most $O(\log_\Delta n)$, such that each node in the graph is within distance $\Otilde(\log n) \ll \Theta(\log^2 n)$ of one of them. Organizing nodes outside those subgraphs according to their distance to the closest such subgraph thus defines less layers than in previous algorithms, and allows us to compute a coloring faster, saving an $\Otilde(\log n)$ factor.

To understand the difference between the previous approach and our own, it can be useful to cast the earlier algorithm in our paradigm of ruling subgraphs.
Ensuring a distance of $\Omega(\log_\Delta n)$ between nodes in a ruling set is equivalent to breaking symmetry between balls of radius-$\Omega(\log_\Delta n)$ in our paradigm.
We improve over this by breaking symmetry between sparser structures -- not just of diameter $O(\log_\Delta n)$, but also containing only $O(\log_\Delta n)$ nodes.
Our algorithm can also be understood as reversing two steps in the prior approach. Instead of breaking symmetry between large balls, before looking for useful structures within them that allows to color with only $\Delta$ colors, we instead first look for such structures, and break symmetry between them.

An important ingredient of our algorithm for computing ruling subgraph families are subroutines we call \emph{elimination walks}. At a high level, our algorithm starts by discovering all locally $\Delta$-extendable subgraphs of size $O(\log_\Delta n)$ -- subgraphs s.t.\ any partial $\Delta$-coloring outside them can be extended to them.
We gradually eliminate many subgraphs of this family until the survivors no longer interfere with one another. 
Considering the virtual graph whose nodes are members of a current subgraph family, and whose edges represent adjacency between said subgraphs, we give an $O(\log^* n)$-phase procedure, where each phase reduces the maximum degree of the virtual graph to a polylogarithm of its previous value. Each phase consists of two subroutines, one of which is an elimination walk.
The technique has each subgraph introduce a token, which performs a walk on its nodes, sometimes making a detour when it gets close to another subgraph. Whenever multiple tokens are on the same node during the walk, all but one of them is eliminated. Coupled with another subroutine based on ruling sets, this simple technique is able to achieve dramatic degree reductions in the virtual graph in a short amount of time.

The computing of ruling subgraph families is presented in \cref{sec:ruling-subgraphs}, and the overall deterministic \LOCAL algorithm in \cref{sec:detlocal}.

\paragraph{Consequences in the randomized setting}

As occurs frequently in the theory of distributed algorithms, due to the existence of techniques like shattering connecting the complexity in the randomized setting to that of the deterministic setting, our deterministic algorithm also improves the randomized complexity of $\Delta$-coloring. We explain the framework of prior work for randomized $\Delta$-coloring and what results from plugging our deterministic algorithms into it in \cref{sec:randomized}.

\section{Ruling Subgraph Family}
\label{sec:ruling-subgraphs}

In this section, we show how to compute a generalization\footnote{In fact, setting $t=k=1$ and $\fR = \set{\set{v}:v \in V(G)}$ in \cref{thm:rulingsubgraphs} results in the computation of a $(d+1,O(d^2 \log\Delta))$-ruling set of $G$.} of ruling sets, where the ruling/ruled objects are (connected) \emph{subgraphs} instead of nodes, efficiently.
Concretely, the objective of this section is to prove \cref{thm:rulingsubgraphs}, which we recall for convenience.

\RulingSubgraphsThm*

We note that if the family $\fR$ is sufficiently dense in $G$, then every node in $G$ also has a member of $\fR'$ nearby.
More precisely, we obtain the following simple corollary of~\cref{thm:rulingsubgraphs}.

\begin{corollary}
    \label{cor:short-distance}
    Let $\fR$ and $\fR'$ be as in \cref{thm:rulingsubgraphs} and $b$ a nonnegative integer.
    If every node of the input graph $G$ is within distance $b$ of a member of the subgraph family $\fR$, then every node of $G$ is within distance $b + O((t+d)(d \log \Delta + \log k) + k \log^* k)$ from a member of the family $\fR'$.
\end{corollary}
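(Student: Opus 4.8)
The plan is to deduce \cref{cor:short-distance} directly from the second bullet of \cref{thm:rulingsubgraphs} via the triangle inequality for the subgraph distance. First I would fix an arbitrary node $v \in V(G)$. By hypothesis there is a member $H \in \fR$ with $\dist_G(v, V(H)) \leq b$. If $H \in \fR'$ we are already done, since then $v$ is within distance $b \leq b + O((t+d)(d\log\Delta + \log k) + k\log^* k)$ of a member of $\fR'$. Otherwise $H \in \fR \setminus \fR'$, and the second bullet of \cref{thm:rulingsubgraphs} supplies a member $H' \in \fR'$ with $\dist_G(H, H') \in O((t+d)(d\log\Delta + \log k) + k\log^* k)$.

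It then remains to combine these two distance bounds. The key step is the observation that for any node $v$ and any two subgraphs $H, H'$ one has
\[
\dist_G(v, V(H')) \;\leq\; \dist_G(v, V(H)) + \dist_G(V(H), V(H')),
\]
which follows because a shortest $v$-to-$H$ path concatenated with a shortest $H$-to-$H'$ path (through a common vertex of $H$) yields a walk from $v$ to $H'$; more precisely, letting $u \in V(H)$ realize $\dist_G(v,V(H))$ and letting $x \in V(H)$, $y \in V(H')$ realize $\dist_G(H,H')$, the triangle inequality in $G$ gives $\dist_G(v,y) \leq \dist_G(v,u) + \dist_G(u,x) + \dist_G(x,y)$, and since $u,x$ lie in the connected subgraph $H$ we have $\dist_G(u,x) \leq \diam(H) \leq k$. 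Plugging in the two bounds above yields
\[
\dist_G(v, V(H')) \;\leq\; b + k + O((t+d)(d\log\Delta + \log k) + k\log^* k),
\]
and since $k = O(k\log^* k)$ the $+k$ term is absorbed into the error term, giving the claimed bound $b + O((t+d)(d\log\Delta + \log k) + k\log^* k)$.

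I do not expect any genuine obstacle here; the statement is a straightforward corollary and the only mild subtlety is bookkeeping about which notion of distance is being used (node-to-subgraph versus subgraph-to-subgraph) and making sure the internal diameter of the connecting subgraph $H$, which is at most $k$ by the hypothesis of \cref{thm:rulingsubgraphs}, is harmlessly swallowed by the asymptotic term. One could also phrase the proof even more cleanly by noting that $\dist_G(\{v\}, V(H'))$ is bounded by $\dist_G(\{v\}, V(H)) + \dist_G(V(H), V(H'))$ plus the diameter of $H$, which is exactly the triangle-inequality-type statement for the set distance $\dist_G(\cdot,\cdot)$ restricted to (the vertex set of) a connected subgraph as an intermediate ``point.''
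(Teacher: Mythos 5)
Your proof is correct and is the natural argument; the paper itself states the corollary without a proof, treating it as an immediate consequence of the second bullet of \cref{thm:rulingsubgraphs}. However, the first displayed inequality,
\[
\dist_G(v, V(H')) \;\leq\; \dist_G(v, V(H)) + \dist_G(V(H), V(H')),
\]
is false in general (take $G$ a long path, $v$ one endpoint, $H$ the middle stretch, and $H'$ the other endpoint): the vertex of $H$ closest to $v$ and the vertex of $H$ closest to $H'$ need not coincide, so there is no ``common vertex of $H$'' through which the two shortest paths can be concatenated. Your ``more precisely'' clause correctly inserts the missing $\dist_G(u,x)\leq\diam(H)$ term, and your final line $b+k+O(\cdots)$ uses this corrected version --- so the argument you actually run is sound, but you should repair or remove the first display so that the announced ``key step'' is a true statement. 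A small refinement: a connected graph on at most $k$ vertices has strong diameter at most $k-1$, and using this (rather than $\leq k$ together with the claim $k=O(k\log^*k)$) makes the absorption into the $O(\cdot)$ term clean even in the degenerate case $k=1$, where $\log^* k = 0$.
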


By choosing $X(H) = H$ for each $H \in \fR$ in~\cref{thm:rulingsubgraphs}, we also obtain the following corollary which, compared to~\cref{thm:rulingsubgraphs}, provides a slightly less powerful but easier-to-understand statement about the computation of a generalization of ruling sets to subgraphs (instead of nodes).

\begin{corollary}
    \label{cor:rulingsubgraphPrev}
    Let $\fR$ be a family of connected subgraphs of the input graph $G$, and let $k$ be an integer such that each member of $\fR$ has at most $k$ vertices.
    Let $d$ be an arbitrary nonnegative integer.
    Then there exists a deterministic algorithm running in $O((k+d) (d \log \Delta + \log k + \log^* n))$ rounds of LOCAL that computes a subfamily $\fR' \subseteq \fR$ such that
    \begin{itemize}
        \item the distance between any two subgraphs in $\fR'$ is at least $d + 1$, and
        \item for every subgraph $H \in \fR \setminus \fR'$, there is a subgraph $H' \in \fR'$ such that the distance between $H$ and $H'$ is in $O((k+d)(d\log \Delta + \log k))$.
    \end{itemize}
\end{corollary}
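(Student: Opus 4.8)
The plan is to read off \cref{cor:rulingsubgraphPrev} from \cref{thm:rulingsubgraphs} by taking $X(H):=H$ for every $H\in\fR$. Since each member has at most $k$ vertices, this is exactly the case $t=k$ of the theorem: the first bullet then reads $\dist_G(H_1',X(H_2'))=\dist_G(H_1',H_2')\ge d+1$, which is the first bullet of the corollary, and the two asymptotic bounds simplify routinely. The running time $O((t+d)(d\log\Delta+\log k+\log^* n)+k\log^* n)$ becomes $O((k+d)(d\log\Delta+\log k+\log^* n))$, absorbing the stray $k\log^* n$ since $k\log^* n\le(k+d)\log^* n$ and $\log^* n$ is already one of the three summands of the first factor. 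The domination distance $O((t+d)(d\log\Delta+\log k)+k\log^* k)$ becomes $O((k+d)(d\log\Delta+\log k))$, absorbing $k\log^* k\le k\log k\le(k+d)\log k$. These two absorptions are the only ``work'' specific to the corollary; there is no separate obstacle, and everything of substance lies in \cref{thm:rulingsubgraphs}, whose proof I would structure as follows.

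For the theorem, the plan is to reduce to an iterated symmetry-breaking problem on an auxiliary \emph{conflict graph} $\fC$: its vertices are the members of $\fR$, and $H_1\sim_{\fC}H_2$ iff $\dist_G(H_1,X(H_2))\le d$ or $\dist_G(H_2,X(H_1))\le d$. A subfamily $\fR'\subseteq\fR$ meets the first bullet precisely when it is independent in $\fC$, and it meets the second bullet whenever every vertex outside $\fR'$ is close \emph{in $G$} to $\fR'$. Two preliminary facts drive the analysis: (i) any $H_2$ conflicting with a fixed $H_1$ is contained in the $G$-ball of radius $O(k+d)$ about any vertex of $H_1$ (a member has at most $k$ vertices and is connected), so $\fC$ has maximum degree $\Delta^{O(k+d)}$, i.e.\ $\log(1+\deg_{\fC}(\cdot))=O((k+d)\log\Delta)$; and (ii) an edge of $\fC$ corresponds to a $G$-path of length $O(k+d)$, so one communication round in $\fC$ costs $O(k+d)$ rounds of $G$, and a radius-$\rho$ ball in $\fC$ is gathered in $O(\rho(k+d))$ rounds of $G$. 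I would give each member a unique $O(k\log n)$-bit identifier (e.g.\ the sorted tuple of the $G$-identifiers of its vertices) and, using Linial's algorithm (\cref{lem:linial}) simulated on $G$, compute a $\poly(\Delta^{k+d})$-coloring of $\fC$ in $O(k\log^* n)$ rounds; this coloring seeds all later steps and accounts for the $k\log^* n$ term (the $(t+d)\log^* n$ term will come from the phases).

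The core is an $O(\log^* n)$-phase degree reduction producing a nested chain $\fR=\fR_0\supseteq\fR_1\supseteq\cdots$ with $\fC_i:=\fC[\fR_i]$ of maximum degree $\delta_i$. Each phase runs two subroutines. First, an \EliminationWalk: each surviving $H\in\fR_i$ launches a token that over $O(t+d)$ steps sweeps the (at most $t$) vertices of the core $X(H)$ and takes detours of length $\le d$ toward the cores of nearby conflicting members; whenever two tokens occupy the same $G$-vertex at the same time, all but the one whose member has the smallest current color are deleted, together with the corresponding members. Second, a ruling-set cleanup (via \cref{lem:ruling-set} or \cref{lem:mis_local} on top of the current coloring) is run on the now much sparser graph to obtain $\fR_{i+1}$. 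I would then prove: \emph{domination} --- a member deleted in phase $i$ clashed with a surviving member at a $G$-vertex reached within $O(t+d)$ steps of each token, hence lies within $G$-distance $O(t+d)$ of a member of $\fR_{i+1}$, and the color-based tie-breaking is chosen so that composing these per-phase bounds with the bounded radius of the final cleanup yields the stated total radius $O((t+d)(d\log\Delta+\log k)+k\log^* k)$ rather than something growing with $n$; and \emph{degree drop} --- a single length-$O(t+d)$ walk, whose length does not depend on $\delta_i$, forces the surviving conflict degree down to $\poly(\log\delta_i)$, so that $\log\delta_{i+1}=O(\log\log\delta_i)$ up to lower-order terms and, after $O(\log^*(\Delta^{k+d}))=O(\log^* n)$ phases, $\fC_m$ has degree polynomial in $k,d,\Delta$. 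Finally I would compute a maximal independent set of $\fC_m$, simulated on $G$ in $O((t+d)(d\log\Delta+\log k))$ rounds; it is independent in $\fC$ and rules $\fR_m$ within bounded $\fC$-distance, and unrolling the distance bookkeeping over all phases gives the theorem.

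The main obstacle --- and the genuinely new ingredient --- is the degree-drop claim for a single phase: showing that one short elimination walk, of length $O(t+d)$ and oblivious to the current degree $\delta_i$, nevertheless reduces the conflict degree to a polylog of $\delta_i$, while still keeping every discarded member within $O(t+d)$ of a survivor and without letting the composed domination radius pick up a $\log^* n$ factor. Reconciling ``very short walk'' with ``strong degree drop'' and ``bounded total domination'' is precisely where the directional/pausing refinements of the naive walk and the careful exploitation of the running $\poly(\delta_i)$-coloring (to decide detours and to break ties) will be needed. Everything else --- setting up the conflict graph and its degree/distance parameters, the $O(k+d)$ simulation overhead, the Linial, ruling-set and MIS black boxes, and the geometric-series accounting of distances and running times across the $O(\log^* n)$ phases --- should be routine.
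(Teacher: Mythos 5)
Your first paragraph is precisely the paper's proof: take $X(H):=H$ (so $t=k$) in \cref{thm:rulingsubgraphs}, note that $\dist_G(H_1',X(H_2'))=\dist_G(H_1',H_2')$, and absorb the stray terms via $k\log^* n\le(k+d)\log^* n$ and $k\log^* k\le(k+d)\log k$; both absorptions are correct. The rest of your write-up sketches \cref{thm:rulingsubgraphs} itself, which is not needed for the corollary and is a coarser account than the paper's four-step algorithm, but it does not affect the correctness of the corollary derivation.
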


Before proving~\cref{thm:rulingsubgraphs}, we collect some useful definitions.
We start by defining a special type of subgraph that we will rely on heavily in the proof of~\cref{thm:rulingsubgraphs}.
\begin{definition}[Orchid]
\label{def:orchid}
Let $t$ be a positive and $d$ a nonnegative integer. 
A \emph{$(t,d)$-orchid} $F(H)$ of a subgraph $H$ of the input graph $G$ is a connected induced subgraph of $G$ that contains
\begin{itemize}
\item a designated node of $H$ called the \emph{root} and
\item a connected induced subgraph of $H$ called the \emph{stem} such that
\begin{itemize}
    \item the stem consists of at most $t$ nodes,
    \item the stem contains the root as a node, and
    \item $F(H)$ is the induced graph whose node set consists precisely of all nodes at distance at most $d$ of the stem (in $G$).
\end{itemize}
\end{itemize}
In particular, while the root and the stem of $F(H)$ are part of $H$, the orchid $F(H)$ itself is not necessarily a subgraph of $H$.
We may omit the parameters $t,d$, and just refer to orchids when the parameters are clear from the context (or irrelevant).
While for a given orchid, there might be more than one choice for the root and the stem, we assume that each orchid comes with a fixed root and stem.
Moreover, we will frequently consider a family $\fR$ of subgraphs; when considering $(t,d)$-orchids in this context, we will implicitly assume that for each graph $H \in \fR$ a $(t,d)$-orchid $F(H)$ of $H$ is fixed.
(Note that from the description of a $(t,d)$-orchid it follows directly that for any choice of $t$, $d$, and $H$, a $(t,d)$-orchid $F(H)$ exists.)
\end{definition}

\begin{figure}[ht]
    \centering
\begin{tabular}{*{3}{c}}
\includegraphics[width=0.25\linewidth,page=1]{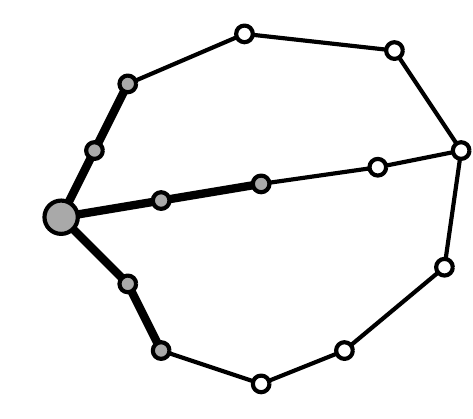}
\hspace{0.05\linewidth}\includegraphics[width=0.25\linewidth,page=2]{stem_orchid_branches_example_v3.pdf}
\hspace{0.05\linewidth}\includegraphics[width=0.25\linewidth,page=3]{stem_orchid_branches_example_v3.pdf}
\end{tabular}
\caption{Three representations of $H$ and its orchid. Left: the graph with its stem highlighted and its root node represented as the bigger node. Middle: the orchid is a zone surrounding and including the stem. Right: the stem with highlighted \emph{branch nodes}, which we introduce later (\cref{def:branch-nodes}).} 
    \label{fig:stem_orchid_branches}
\end{figure}

Next, we define the notion of a distance-$d$ conflict graph. 

\begin{definition}[distance-$d$ conflict graph]
    For a family $\fR$ of subgraphs $H_1,\ldots,H_{\card{\fR}}$ of a graph $G$, the \emph{distance-$d$ conflict graph (of $\fR$)} is the graph $C_\fR$ with node set $V(C_\fR) := \fR$ and edge set $E(C_\fR) := \set{\set{H,H'} \mid H, H' \in \fR, H \neq H', \text{ and } \dist_G(H,H') \leq d}$.
\end{definition}

Note that we do not require that the subgraphs $H_i$ are disjoint.
In particular, the distance-$0$ conflict graph (which connects subgraphs at distance $0$ from each other, i.e., intersecting subgraphs) does not necessarily consist of isolated nodes.

We also define the notion of a contention digraph for our families of special subgraphs with orchids.

\begin{definition}[Contention digraph]
    Consider a family of subgraphs $\fR$ in some graph $G$, and assume that for each member $H \in \fR$ (and parameters $t$, $d$) a $(t,d)$-orchid $F(H)$ of $H$ has been fixed. We denote by $C(\fR,G)$ the \emph{contention digraph (of $G$)}, defined via
    \begin{enumerate}
        \item $V(C(\fR,G)) = \fR$ and
        \item $E(C(\fR,G)) = \{ (H, H') \mid H, H' \in \fR, H \neq H', \text{ and } V(H)\cap V(F(H')) \neq \emptyset \}$.
\end{enumerate}
    In other words, for a family of subgraphs $\fR$, whenever a subgraph $H\in \fR$ overlaps with the (fixed) orchid  $F(H')$ of another subgraph $H'\in \fR$, the directed edge $(H,H')$ exists in the contention digraph $C(\fR,G)$.

    We denote by $\indeg_{\max}(\fR,G)$ (resp.\ $\outdeg_{\max}(\fR,G)$) the maximum indegree (resp.\ outdegree) of the contention digraph of a family $\fR$ in some graph $G$. When clear from context, we omit $G$ from the notation.  
\end{definition}

Throughout the remainder of the section, assume that the parameters $t$, $d$, and $k$ appearing in~\cref{thm:rulingsubgraphs} have been fixed, and that for each member $H \in \fR$, a $(t,d)$-orchid $F(H)$ has been fixed.

We now turn our attention towards proving~\cref{thm:rulingsubgraphs}.
To this end, we fix as $(t,d)$-orchid for each subgraph $H$ the one whose stem is the set $X(H)$ mentioned in~\cref{thm:rulingsubgraphs}. 
Thereby, in order to obtain the property stated in the first bullet of~\cref{thm:rulingsubgraphs}, it suffices to ensure that no subgraph contained in $\fR$ nontrivially intersects the orchid of a different subgraph contained in $\fR$.
Observe that this sufficient condition is equivalent to the statement that the contention digraph $C(\fR,G)$ does not contain any edges.
As such, on a high level our overall approach to proving~\cref{thm:rulingsubgraphs} can be summarized as follows: iteratively decrease the degree of the nodes in the contention digraph by removing subgraphs from $\fR$ while simultaneously making sure that we do not remove too many subgraphs from $\fR$ to not violate the property stated in the second bullet of~\cref{thm:rulingsubgraphs}.

The algorithm achieving \cref{thm:rulingsubgraphs} is described in \cref{alg:rulingsubgraphs} (\RulingSubgraphs).
It proceeds in four steps that, roughly speaking, can be described as follows.
First, we remove subgraphs from $\fR$ in a way that ensures that, for each orchid of a remaining subgraph, its $d$-hop neighborhood does not intersect too many orchids of other remaining subgraphs.
This enables the efficient computation of a suitable ruling set on the distance-$d$ conflict graph in the second step (as the degree of this conflict graph is sufficiently small due to the first step).
We show that, by removing the subgraphs not corresponding to the computed ruling set nodes from (the remainder of) $\fR$, we decrease the maximum outdegree of the contention digraph to $O(k/d)$.
Similarly, in the third step, we reduce the maximum indegree of the contention digraph to $O(k\cdot t)$ by removing further subgraphs from (the remainder of) $\fR$.
In the fourth step (which comprises the bulk of the algorithm in terms of runtime), we iteratively switch between further reducing the maximum outdegree and the maximum indegree of the contention digraph.
The reductions of the maximum outdegree are obtained by a ruling set computation similar to the one in the second step; the reductions of the indegree are obtained by a variation of the approach for the third step.
The crucial point of this iterative procedure is that each reduction of the maximum indegree ensures that the subsequent ruling set computation (for reducing the maximum outdegree) can be performed efficiently on a virtual graph representing relations of \emph{larger distance} between orchids than in the previous iteration, while each reduction of the maximum outdegree ensures that the subsequent indegree-reduction step can be performed with a similarly improved parameter that guarantees that the maximum indegree indeed shrinks further.
When the distance of the relations represented by the aforementioned virtual graph exceeds $k$, the contention digraph has finally lost all of its edges, guaranteeing the property stated in the first bullet of~\cref{thm:rulingsubgraphs}.
By carefully balancing the parameters in the above approach, we ensure that for each member of $\fR$ that is removed, the distance to the closest remaining member of $\fR$ does not increase too much, thereby guaranteeing the property stated in the second bullet of~\cref{thm:rulingsubgraphs}.

\begin{algorithm}[ht!]
\caption{\RulingSubgraphs, for a family $\fR$ of connected subgraphs of at most $k$ nodes each}
\label{alg:rulingsubgraphs}
\begin{algorithmic}[1]
\For{each subgraph $H_i \in \fR$ (in parallel)}
    \State Put a token $T_i$ on the root $v_i$ of the $(t,d)$-orchid $F(H_i)$ of $H_i$, labeled with the ID of $v_i$, the orchid $F(H_i)$, and the subgraph $H_i$.
    \State Compute a walk $\sigma_i$ of length at most $2k$ for token $T_i$ on $H_i$ that visits each node of $H_i$ and returns to $v_i$ (e.g., using a DFS traversal of $H_i$).
    \State Similarly, compute a walk $\rho_i$ of length at most $2t$ for token $T_i$ on the stem of $F(H_i)$ that visits each node of the stem of $F(H_i)$ and returns to $v_i$. 
\EndFor
\State {$\EliminationWalk$} (\cref{sec:break-overlap}) \Comment{computes $\fR_1 \subseteq \fR$}

\State {$\RulingOrchids$} (\cref{sec:iniOutdeg}) \Comment{computes $\fR_2 \subseteq \fR_1$}

\If{$2 \cdot k/d < 1$}\Comment{implying an empty contention digraph by \cref{lem:rulingorchids}}
    \State Return $\fR_2$.
\Else
\State \DirectionalEliminationWalk\ (\cref{sec:direlimwalk}) \Comment{computes $\fR_3 \subseteq \fR_2$}
\For{$i = 1, \ldots, i_{\max} \leq O(\log^* k - \log^*\log^*n)$}\Comment{Until $\log^{(i)} k \in O(\sqrt[1/4]{\log^* n})$}
    \State \OutdegreeReduction\ (\cref{sec:outdegree-reduction}) \Comment{computes $\fR^a_i \subseteq \fR^b_{i-1}$}
    \State \PausingEliminationWalk\ (\cref{sec:indegree-reduction}) \Comment{computes $\fR^b_i \subseteq \fR^a_i$}
\EndFor
\Statey Consider $\fR^b_{i_{\max}}$, the result of the last iteration of \PausingEliminationWalk.
\State Compute an MIS of the contention digraph of $\fR^b_{i_{\max}}$ \Comment{computes $\fR_{\mathsf{final}} \subseteq \fR^b_{i_{\max}}$}
\State Return $\fR_{\mathsf{final}}$, the MIS computed on the previous line.
\EndIf
\end{algorithmic}
\end{algorithm}

\subsection{Step 1: Degree Reduction in the Distance-\texorpdfstring{$d$}{d} Conflict Graph of Orchids}
\label{sec:break-overlap}

Our first procedure eliminates subgraphs from the initial family $\fR$ so that each subgraph in the remaining family $\fR_1$ has its orchid overlap with the orchids of at most $\poly(t,\Delta)$ other subgraphs.
The procedure (\EliminationWalk, \cref{alg:elimwalk}) just has each subgraph walk a token on the stem of its orchid, and tokens (and their corresponding subgraphs) are deleted according to a simple rule: whenever a node contains more than one token during a round, all but one of them is eliminated.

\begin{algorithm}[htb!]
\caption{\EliminationWalk, on subgraph family $\fR$}
\label{alg:elimwalk}
\begin{algorithmic}[1]
\For{$j = 1, \ldots, 2t$}
    \State At each node containing multiple tokens, delete all but the one of highest ID.
\State Each surviving token $T_i$ takes a step of the walk $\rho_i$ on the stem of the corresponding orchid $F(H_i)$, if a step remains.
\EndFor
    \State Remove from $\fR$ the subgraphs whose token was eliminated, call the new set of subgraphs $\fR_1$.
\end{algorithmic}
\end{algorithm}

\begin{lemma}
    \label{lem:elimwalk}
    Let $\fF$ denote the set of the $(t,d)$-orchids of the subgraphs $H_i \in \fR$, and let $\fF_1 \subseteq \fF$ denote the subset of those $(t,d)$-orchids $F(H_i)$ whose corresponding token $T_i$ survived at the end of \EliminationWalk.
Then $\fF_1$ satisfies the following properties.
    \begin{itemize}
        \item Every orchid in $\fF_1$ is adjacent to at most $2t^2 \Delta^{3d}$ orchids in $\fF_1$ in the distance-$d$ conflict graph $C_{\fF_1}$, i.e., $\Delta(C_{\fF_1}) \leq 2t^2 \Delta^{3d}$.
        \item Every orchid in $\fF$ whose corresponding token was eliminated during \EliminationWalk\ (i.e., any orchid in $\fF \setminus \fF_1$) is within distance $2t$ from some orchid $F(H) \in \fF_1$ (in the input graph $G$).
    \end{itemize}
    Moreover, \EliminationWalk\ can be performed in $O(t)$ rounds.
\end{lemma}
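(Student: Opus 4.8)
\textbf{Proof proposal for \cref{lem:elimwalk}.}

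The plan is to analyze the three claimed properties separately, after setting up the basic bookkeeping of the token walks. First I would observe that each token $T_i$ walks on the stem of its orchid $F(H_i)$ along the precomputed closed walk $\rho_i$ of length at most $2t$, and that the loop in \EliminationWalk{} runs for exactly $2t$ iterations, so every surviving token completes a full traversal of its stem and returns to its root $v_i$. The survival rule is local and monotone: once a token is deleted it stays deleted, and at every round every node hosts at most one surviving token afterwards. In particular, the number of rounds is $O(t)$, which disposes of the ``moreover'' claim immediately, since each of the $2t$ iterations performs only a constant-radius local computation (comparing token IDs on a node and advancing one step of a precomputed walk).

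For the second property, I would argue as follows. Suppose token $T_i$ is eliminated at some iteration $j$; this happens because at that round $T_i$ shares a node $u$ with another token $T_{i'}$ of higher ID, and $T_i$ is deleted while (at least) one token on $u$ survives that round. The node $u$ lies on the stem of $F(H_i)$, hence $u \in V(H_i)$ and so $u$ is within distance $0$ of $F(H_i)$; it also lies on the stem of $F(H_{i'})$ at that moment, so $u \in V(F(H_{i'}))$. Now the subtlety is that the surviving token on $u$ at round $j$ might itself be eliminated later. To handle this I would run an induction/charging argument: consider the token of maximum ID among all tokens that ever visit $u$; call it $T_{i^\star}$. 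That token is never eliminated at $u$ (nothing beats it there), and more care is needed because $T_{i^\star}$ could be eliminated at some \emph{other} node before it reaches $u$. The clean way is to define, for each eliminated orchid, a ``parent'' orchid as the one whose token caused its deletion, follow the parent pointers, and note that IDs strictly increase along this chain, so the chain terminates at a surviving token $F(H) \in \fF_1$. Since consecutive orchids in the chain share a stem node, and each stem has at most $t$ nodes (so each stem has diameter $\le 2t$ in $G$ — actually $\le t-1$, but $2t$ is a safe bound), the total distance from $F(H_i)$ to $F(H)$ accumulated along the chain is... here is where I would need to be slightly careful: a naive bound gives $2t$ times the chain length, not $2t$. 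The resolution is that we only need \emph{distance in $G$}, and any two orchids sharing a stem node are at distance $0$ from each other; so $\dist_G(F(H_i), F(H_{\text{parent}})) = 0$, and hence $\dist_G(F(H_i), F(H)) = 0 \le 2t$ trivially. Wait — that would make the $2t$ bound very loose; I suspect the intended statement allows the token to be at a \emph{stem node} of another orchid only after walking, so the relevant bound really is that $F(H_i)$'s root is within $2t$ of $F(H)$'s stem. I would reconcile this by tracking that the shared node $u$ is within distance $2t$ (the walk length) of the root $v_{i^\star}$ of the \emph{next} orchid along a single step of the chain, and that once two orchids share a node the claim ``within distance $2t$'' holds for the whole chain because distances between orchids collapse to $0$ at each shared node. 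The careful write-up must make sure the chain argument yields distance $\le 2t$ and not $2t \cdot (\text{chain length})$; the key insight making this work is exactly the distance-$0$ collapse at shared nodes, so only the \emph{first} step of the chain (from $H_i$'s root to the shared node on its own stem) contributes the $2t$.

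For the first property, the degree bound in the distance-$d$ conflict graph $C_{\fF_1}$, I would count as follows. Fix a surviving orchid $F(H_i) \in \fF_1$ and count orchids $F(H_{i'}) \in \fF_1$ with $\dist_G(F(H_i), F(H_{i'})) \le d$. Such an orchid $F(H_{i'})$ consists of all nodes within distance $d$ of its stem, which has at most $t$ nodes; so if $\dist_G(F(H_i), F(H_{i'})) \le d$ there is a node of the stem of $F(H_{i'})$ within distance $d + d + d = 3d$ of some node of the stem of $F(H_i)$ — indeed, a node $x$ of $F(H_i)$ is within distance $d$ of $F(H_i)$'s stem, a node $y$ of $F(H_{i'})$ is within distance $d$ of $F(H_{i'})$'s stem, and $\dist_G(x,y) \le d$, so any stem node of $F(H_{i'})$ is within distance $3d$ of the stem of $F(H_i)$. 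The stem of $F(H_i)$ has at most $t$ nodes, and the $3d$-ball around each such node contains at most $\Delta^{3d}$ nodes (more precisely $1 + \Delta + \dots + \Delta^{3d} \le 2\Delta^{3d}$ for $\Delta \ge 2$, absorbing the constant), hence at most $2t\Delta^{3d}$ candidate ``stem nodes of neighbors.'' The crucial use of the \EliminationWalk{} guarantee is that after the procedure each node of $G$ hosts at most one surviving token, hence lies on the stem of at most one surviving orchid's \emph{root-history}; more precisely, since each stem has $\le t$ nodes and the surviving tokens ended on distinct... here I need the real pigeonhole: each surviving orchid $F(H_{i'})$ is identified by any single node of its stem, but distinct surviving orchids can share stem nodes only if... no — the whole point of \EliminationWalk{} is that this does \emph{not} quite give disjointness of stems, so the bound must instead be: each of the $\le 2t\Delta^{3d}$ candidate stem nodes is visited by at most $t$ surviving tokens over the course of the walk? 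That is false too. The correct accounting, which I would pin down in the full proof, is that \EliminationWalk{} guarantees that \emph{at the end} each node carries at most one token, but more usefully that the walks are short: a surviving orchid's stem is a set of $\le t$ nodes, and two surviving orchids with stems both meeting a common $3d$-ball around a fixed stem node of $F(H_i)$ — the number of such is at most (number of nodes in the union of $t$ balls of radius $3d$) times (max number of surviving stems through a node). I expect the intended bound $2t^2\Delta^{3d}$ comes from: $t$ stem nodes of $F(H_i)$, each with a $3d$-ball of $\le t\Delta^{3d}$ nodes that could be stem nodes of neighbors after accounting that a neighbor's stem meeting the ball means one of its $\le t$ stem nodes is in a $\le 2d$-ball — giving a product of $t$ (own stem) $\times$ $t$ (neighbor stem) $\times \Delta^{3d}$ (ball size). \textbf{The main obstacle} I anticipate is precisely this last counting step: getting the constants and the exponent $3d$ exactly right requires being scrupulous about which stem is being covered and using the $\le t$-node bound on stems twice (once for the reference orchid, once for each neighbor), together with the fact that surviving tokens occupy distinct nodes at the \emph{end} of the walk — and I would need to check whether that end-state fact, or instead the per-round fact, is what forces the neighbor count down. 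The distance and runtime parts are routine; the degree count is where the proof must be done with care.
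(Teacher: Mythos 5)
Your runtime analysis and your observation that the $3d$-balls around the $t$ stem nodes of a reference orchid give at most $t\Delta^{3d}$ candidate witness nodes are both correct, but each of the two counting steps you were uncertain about contains a genuine gap that you did not resolve.

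\emph{Degree bound.} You count candidate stem nodes but never establish an injection from surviving neighbor orchids into a set of bounded size; as you note yourself, nothing stops several surviving orchids from having their stems pass through the same vertex, and a per-node count does not on its own control the degree. The missing idea is to pigeonhole on \emph{pairs} $(w_i, t_i)$, where $w_i$ is a stem node of the neighbor $F(H_i)$ at distance at most $2d$ from $F(H)$ and $t_i \le 2t$ is the round at which the token $T_i$ first visits $w_i$. Two distinct surviving neighbors cannot share a pair: that would place both tokens on the same node in the same round, forcing one deletion. This yields the bound $2t^2\Delta^{3d}$ from $t\Delta^{3d}$ candidate nodes times $2t$ candidate rounds, and it is exactly the \emph{per-round} deletion rule (not the end-state) that drives it. Your guessed decomposition ``$t$ (own stem) $\times$ $t$ (neighbor stem) $\times \Delta^{3d}$'' is numerically close but attributes the second factor to the wrong source and does not produce an injection.

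\emph{Distance bound.} Your ``distance-$0$ collapse'' rescue is unsound: distance between sets is not a metric, so a chain of orchids with $\dist(F(H_j), F(H_{j+1})) = 0$ does not give $\dist(F(H_0), F(H_{\imax})) = 0$ --- the shared node differs at each link. You correctly anticipate that a naive composition yields $2t$ times the chain length, but the repair is not to collapse distances; it is to track a single \emph{postmortem walk}: concatenate $T_0$'s walk from its start to the node where it dies, then the killer $T_1$'s walk from that node until $T_1$ dies, and so on down the chain, ending with the walk of the last token $T_{\imax}$ which survives. Each segment occupies a disjoint time interval within the $2t$ rounds of \EliminationWalk, so the \emph{total} length of the concatenation is at most $2t$ regardless of how long the chain is. This walk starts at the root of $F(H_0)$ and ends at a node of $F(H_{\imax}) \in \fF_1$, giving the claimed bound. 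Without this time-conservation observation, the chain argument does not close.
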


\begin{proof}
    First, we claim that the runtime is trivially $O(t)$. Indeed, we have $2t$ iterations of a loop, each of which takes $O(1)$ rounds. Deleting a token is a $0$-round operation (it simply means that the token will not be sent to a neighboring node at the end of the loop iteration), and sending a token to the next node in its walk only requires one round to inform said next node that it receives a token, together with the token's information.
    Removing a subgraph from $\fR$ is also a $0$-round operation as this is just bookkeeping performed at the root of the orchid of the respective subgraph.

    Let us now bound the maximum degree of the distance-$d$ conflict graph of the computed orchid family $\fF_1$, i.e., the maximum number of other orchids from $\fF_1$ that can be within distance $d$ of any given orchid from $\fF_1$ (in $G$).
    Consider an orchid $F(H) \in \fF_1$.
    The stem of $F(H)$ contains at most $t$ nodes (by definition), and for any of these $t$ nodes, there are at most $\Delta^{3d}$ nodes at distance $3d$. Hence, by the definition of a $(t,d)$-orchid, in total there are at most $t \Delta^{3d}$ nodes at distance $2d$ or less from $F(H)$.

    Let $F(H_1), \dots, F(H_z)$ be the orchids in $\fF_1$ at distance $d$ or less from $F(H)$ in $G$.
    For each such orchid $F(H_i)$, let $w_i$ be the node of minimum ID amongst all nodes that are both contained in the stem of $F(H_i)$ and at distance at most $2d$ from $F(H)$. (Such a node $w_i$ must exist for each $1 \leq i \leq z$ by the definition of a $(t,d)$-orchid). Let $t_i \leq 2t$ be the iteration in which node $w_i$ is first visited in the walk $\rho_i$. At most $t \Delta^{3d}$ vertices are possible candidates for $w_i$, and $2t$ values for $t_i$, for a total of at most $2t^2\Delta^{3d}$ distinct candidate pairs $(w_i,t_i)$.

    Now, suppose that for two distinct indices $i, j \in \{ 1, \dots, z\}$, we have $(w_i,t_i) = (w_j,t_j)$.
    This would imply that in iteration $t_i=t_j$ of the loop, the tokens $T_i, T_j$ corresponding to the two orchids $F(H_i)$ and $F(H_j)$ were on the same node $w_i = w_j$. But if that were true, then one of the two tokens would have been removed during \EliminationWalk, making it impossible for the corresponding orchid to be contained in $\fF_1$. Therefore, $\fF_1$ cannot contain more orchids at distance $2d$ or less from $F(H)$ than there are possible distinct pairs $(w_i,t_i)$, i.e., at most $2t^2 \Delta^{3d}$.
    This implies that every orchid in $\fF_1$ is adjacent to at most $2t^2 \Delta^{3d}$ orchids in $\fF_1$.

    Finally, consider an orchid $F(H_0)\in \fF$ whose corresponding token $T_0$ was eliminated during the process, and let $u_0$ be the node at which $T_0$ starts its walk. Let $u_1$ be the node at which $T_0$ was eliminated by the token $T_1$ corresponding to another orchid $F(H_1)\in \fF$. Similarly, if it exists, for $i\geq 1$, let $F(H_{i+1})$ be the orchid whose corresponding token $T_{i + 1}$ eliminates the token $T_i$ corresponding to graph $F(H_i)$ during \EliminationWalk, and $u_{i+1}$ the node at which the elimination occurs. Let $\imax$ be the last index of this process, i.e., the value $\imax$ such that token $T_{\imax}$ is not eliminated by another token during \EliminationWalk. Let $u_{\imax+1}$ be the last node visited by $T_{\imax}$ during \EliminationWalk. For each $i \in \set{0,\ldots, \imax}$, consider the walk $P_i$ taken by token $T_{i}$ from $u_{i}$ to $u_{i+1}$.
    Let the \emph{postmortem walk of token $T_0$} be defined as the concatenation $P_0,\ldots,P_{\imax}$.
    The postmortem walk of $T_0$ is a walk of length $2t$, as it tracks the movement of a token and the tokens that recursively eliminate it (and each other) during the $2t$-size walk of \EliminationWalk. As a result, the root of the orchid $F(H_0)$ is within distance $2t$ from the root of $F(H_{\imax})$, which is contained in $\fF_1$.
\end{proof}

\subsection{Step 2: Initial Outdegree Reduction}
\label{sec:iniOutdeg}

The degree reduction from the previous step allows us to further break symmetry between orchids at a more reasonable cost than if it had not been done. After \RulingOrchids, subgraphs in the new family $\fR_2$ have a distance of at least $d+1$ between their orchids.

\begin{algorithm}[ht!]
\caption{\RulingOrchids, on subgraph family $\fR_1$}
\label{alg:rulingorchids}
\begin{algorithmic}[1]
\Statex Let $\fF_1$ denote the set of orchids of the members of $\fR_1$ (which survived \EliminationWalk). 
\State Compute a $(2,\log (\Delta (C_{\fF_1})))$-ruling set on the distance-$d$ conflict graph $C_{\fF_1}$ of $\fF_1$. Let $\fF_2$ be the set of orchids selected into the ruling set.
\State Remove from $\fR_1$ those members $H \in \fR_1$ for which $F(H) \in \fF_1 \setminus \fF_2$, and call the obtained set of subgraphs $\fR_2$.
\end{algorithmic}
\end{algorithm}
\begin{lemma}
    \label{lem:rulingorchids}
For the sets $\fF_2$ and $\fR_2$ computed by \RulingOrchids, it holds that
    \begin{itemize}
        \item the distance between any two orchids in $\fF_2$ is at least $d + 1$ in $G$,
        \item for every orchid $H \in \fF_1 \setminus \fF_2$, there is an orchid $H' \in \fF_2$ such that the distance between $H$ and $H'$ is in $O((t+d)(d\log \Delta + \log t))$, and 
\item the contention digraph $C(\fR_2, G)$ has maximum outdegree $2k/d$.\end{itemize}
    Moreover, \RulingOrchids\ can be performed in $O((t+d) (d \log \Delta + \log t + \log^* n))$ rounds.
\end{lemma}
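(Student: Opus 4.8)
The plan is to analyze \RulingOrchids\ using the degree bound from the previous step. The key input is \cref{lem:elimwalk}, which tells us that $\Delta(C_{\fF_1}) \leq 2t^2\Delta^{3d}$, that every orchid in $\fF$ is within distance $2t$ of some orchid in $\fF_1$, and that the conflict graph $C_{\fF_1}$ is built on orchids each with stem of $\leq t$ nodes and radius-$d$ expansion around the stem. First I would verify the \emph{distance between selected orchids}: a ruling set on $C_{\fF_1}$ leaves selected orchids at conflict-graph-distance $\geq 2$, and since an edge in $C_{\fF_1}$ means distance $\leq d$ in $G$, two selected orchids that are \emph{not} adjacent in $C_{\fF_1}$ must be at distance $\geq d+1$ in $G$; this gives the first bullet. (The contention digraph's edges are a subset of the distance-$0$ conflict edges on orchids, so in particular distance $\geq d+1 \geq 1$ between orchids rules out any $V(H)\cap V(F(H'))\neq\emptyset$, which I would use for the third bullet.)

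For the third bullet, the maximum \emph{outdegree} bound $2k/d$: an outgoing edge $(H,H')$ in $C(\fR_2,G)$ means $V(H)\cap V(F(H'))\neq\emptyset$, i.e., $H$ touches the orchid $F(H')$ of another surviving subgraph. Since $F(H')$ consists of all nodes within distance $d$ of the stem of $H'$, and the stems of distinct surviving orchids are pairwise at distance $\geq d+1$, the stems of any two surviving orchids $H'_1, H'_2$ that $H$ points to are at distance $\geq d+1$ from each other; walking along (a spanning walk of) $H$ — which has $\leq k$ nodes, hence diameter $\leq k-1$ — one can hit at most $O(k/d)$ disjoint ``$d$-separated'' stems, because each time $H$ enters a new such orchid it must travel $\geq d+1$ (actually the right bookkeeping: any two nodes of $H$ lying in two distinct orchids $F(H'_1),F(H'_2)$ are within distance $k-1$ in $G$, while their respective stems are $\geq d+1$ apart and each node is within $d$ of its stem, forcing a pigeonhole bound of roughly $k/d$ on the number of distinct targets). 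I would make this precise by a short counting argument: list the targets $H'_1,\dots,H'_m$, pick for each a witness node $x_j\in V(H)\cap V(F(H'_j))$ and a nearest stem node $y_j$ with $\dist_G(x_j,y_j)\leq d$; the $y_j$ are pairwise at distance $\geq d+1$ but all within $k-1+d$ of a fixed node of $H$; a ball of that radius around a stem-node can be partitioned by the stem structure to contain at most $\lceil (k-1+d+\text{const})/(d+1)\rceil \leq 2k/d$ such pairwise-$d$-separated points along the (connected, $\leq k$-node) traversal of $H$. I expect the clean way is: the traversal walk $\sigma$ of $H$ of length $\leq 2k$ visits the orchids $F(H'_j)$ in some order, consecutive visits to \emph{distinct} orchids require a jump of $\geq d+1$ along $\sigma$ to leave the radius-$d$ zone and $\geq d+1$ to enter the next, so there are at most $2k/d$ distinct orchids visited.

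For the second bullet (ruled orchids are close to a selected one), I would compose two estimates: any $H\in\fF_1\setminus\fF_2$ is at conflict-graph-distance $\leq \log(\Delta(C_{\fF_1}))$ from some selected orchid, and each conflict-graph step costs $\leq d$ in $G$ plus the ``diameter'' of an orchid ($O(t+d)$, since the stem has $\leq t$ nodes and the orchid adds radius $d$), giving $O((t+d)\cdot\log(\Delta(C_{\fF_1}))) = O((t+d)(\log(t^2\Delta^{3d}))) = O((t+d)(d\log\Delta + \log t))$ as claimed. For the runtime: start from Linial's $O(\Delta^2)$-coloring of the conflict graph $C_{\fF_1}$ — but here each conflict-graph edge spans distance $\leq d$ in $G$ and each orchid has $O(t\Delta^{d})$-ish nodes, so one round of communication in $C_{\fF_1}$ is simulated in $O(t+d)$ rounds of $G$; Linial on $C_{\fF_1}$ costs $O(\log^* n)$ of its own rounds, and computing a $(2,c)$-ruling set with $c = \log(\Delta(C_{\fF_1})) = O(d\log\Delta+\log t)$ from a poly-$\Delta(C_{\fF_1})$-coloring costs $O(c)$ rounds of $C_{\fF_1}$ by \cref{lem:ruling-set}. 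Multiplying the $O(\log^* n + d\log\Delta + \log t)$ conflict-graph rounds by the $O(t+d)$ simulation overhead yields the stated $O((t+d)(d\log\Delta + \log t + \log^* n))$.

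The main obstacle I anticipate is the $2k/d$ outdegree bound: getting the pigeonhole/counting along the traversal of $H$ exactly right — in particular making sure the constant works out to $2$ and correctly handling the boundary cases where $H$ re-enters an orchid it has already visited (so that ``number of distinct targets'' rather than ``number of visits'' is what is bounded), and the degenerate regime where $d > 2k$ makes the claim vacuous (no outgoing edges, consistent with the \texttt{if $2k/d<1$} branch in \cref{alg:rulingsubgraphs}). The distance-between-orchids and runtime parts are routine given \cref{lem:elimwalk} and \cref{lem:ruling-set}.
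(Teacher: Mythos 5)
Your proposal is correct and follows essentially the same route as the paper: compute a $(2,\log\Delta(C_{\fF_1}))$-ruling set on the distance-$d$ conflict graph, simulate each conflict-graph round in $O(t+d)$ rounds of $G$, read off the first and second bullets from the ruling-set properties together with the dilation, and obtain the outdegree bound $2k/d$ from the pairwise distance $\geq d+1$ between surviving orchids combined with $\card{V(H)} \leq k$ and the connectivity of $H$ (your traversal/pigeonhole argument along $\sigma$ is exactly the counting that the paper states tersely). One small slip worth flagging: the parenthetical claim that orchid distance $\geq d+1$ already \emph{rules out} $V(H)\cap V(F(H'))\neq\emptyset$ is wrong, since $H$ can extend well beyond its own orchid $F(H)$ (indeed $H$ has up to $k$ vertices while $F(H)$ has radius only $d$ around a $t$-node stem) --- but you do not rely on this remark, and your dedicated outdegree argument in the next paragraph correctly treats $H$ as potentially touching several surviving orchids.
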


\begin{proof}
Consider the distance-$d$ conflict graph $C_{\fF_1}$ of $\fF_1$, i.e., the graph whose nodes are the members of $\fF_1$ and for which there is an edge between two orchids if they are at distance $d$ or less in $G$. By \cref{lem:linial,lem:ruling-set}, and the fact that $\Delta(C_{\fF_1})^{1/\log(\Delta(C_{\fF_1}))}$ is a constant, a $(2,\log(\Delta(C_{\fF_1})))$-ruling set on $C_{\fF_1}$ can be computed in $O(\log(\Delta(C_{\fF_1})) + \log^* n)$ communication rounds, assuming that $C_{\fF_1}$ is the input graph on which the communication in the LOCAL model takes place. Since, for each orchid in $\fF_1$, the maximum ID node on its stem is within distance $2t+3d$ (in $G$) of the maximum ID node on the stem of each of its neighboring orchids in $C_{\fF_1}$, the ruling set computation, and therefore the entire execution of \RulingOrchids, can be performed in $O((t+d)(\log(\Delta(C_{\fF_1})) +\log^* n))$ rounds with $G$ as communication graph.
By \cref{lem:elimwalk}, we have $\Delta(C_{\fF_1}) \leq 2t^2\Delta^{3d}$, and we obtain that \RulingOrchids\ can be performed in $O((t+d)(d \log \Delta + \log t +\log^* n))$ rounds.

Moreover, the ruling set computed on the conflict graph $C_{\fF_1}$ is a $(2, O(d\log \Delta + \log t))$-ruling set. Accounting again for the dilation of the virtual graph $C_{\fF_1}$ compared to the underlying graph $G$, we obtain that every two orchids $H, H' \in \fF_1$ are at distance at least $d + 1$ from each other, and that every subgraph $H \in \fF_2 \setminus \fF_1$
has a member of $\fF_1$ at distance at most $O((t+d)(d \log \Delta + \log t))$ in $G$ (where we use that every $(t,d)$-orchid has diameter at most $t+2d$, by definition).

Finally, observe that the fact that every member of $\fR_2 \subseteq \fR$ is a connected subgraph of at most $k$ nodes, together with the fact that any two distinct members of $\fF_2$ have distance at least $d+1$ in $G$, implies that for each $H \in \fR_2$, the number of orchids in $\fF_2$ different from $F(H)$ that (nontrivially) intersect $H$ is bounded by $2k/d$.
Hence, by the definition of a contention digraph, we obtain the same bound on the outdegree of the contention digraph $C(\fR_2, G)$.

\end{proof}

\subsection{Step 3: Initial Indegree Reduction} 
\label{sec:direlimwalk}

\begin{algorithm}[htb!] 
\caption{\DirectionalEliminationWalk, on subgraph family $\fR_2$}
\label{alg:direlimwalk} 
\begin{algorithmic}[1]

\Statex Throughout the algorithm, whenever there are multiple tokens at the same node at the same time, all but the one of highest ID are deleted.

\smallskip

\Statex{Each non-deleted token $T_i$ (i.e., each token corresponding to some $H_i \in \fR_2$) performs the steps described in the following loop}
\Repeat 
\State $T_i$ takes a step of the walk $\sigma_i$ on its subgraph $H_i$. \Comment{$\sigma_i$ contains at most $2k$ steps}
    \If{$T_i$ enters the orchid $F(H)$ of a subgraph $H \neq H_i$ in $\fR_2$ for the first time}
        \State Consider a shortest path to the encountered orchid's stem. \Comment{precomputed}
        \State $T_i$ moves on this path, until it reaches the stem of the orchid. \Comment{at most $d$ steps}
        \State $T_i$ walks back to the starting node of the path. \Comment{at most $d$ steps}
    \EndIf
\Until{$T_i$ did all steps of its walk $\sigma_i$ on its subgraph $H_i$, or was deleted}
\State Return $\fR_3$, the subgraphs members of $\fR_2$ whose token was not deleted.
\end{algorithmic} 
\end{algorithm}

Note that in \DirectionalEliminationWalk, at the same overall time step different tokens might be in different iterations of the for loop.
Note further that if $q \cdot k/d < 1$ for the universal constant $q$ used in~\cref{alg:rulingsubgraphs} (\RulingSubgraphs), then \RulingSubgraphs\ terminates after executing subroutine \RulingOrchids\ and outputs $\fR_2$.
Hence, assume throughout \cref{sec:direlimwalk,sec:alternating-reduction} that $d \in O(k)$.
 
\begin{lemma} 
    \label{lem:dirElimWalk}
    For the set $\fR_3$ computed by \DirectionalEliminationWalk\, the maximum indegree of the contention digraph $C(\fR_3,G)$ is at most $6 k \cdot t$, and every subgraph in $\fR_2$ is within distance $6k$ of some subgraph in $\fR_3$.
    Moreover, \DirectionalEliminationWalk\ can be performed in $O(k)$ rounds.
\end{lemma}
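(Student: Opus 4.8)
The plan is to prove the three assertions in order, the engine behind all of them being a uniform bound on the number of steps any token takes. I would first argue that the token $T_i$ of every surviving subgraph $H_i \in \fR_3$ performs at most $6k$ steps during \DirectionalEliminationWalk: it takes at most $2k$ steps along the walk $\sigma_i$, and each detour it makes costs at most $2d$ steps (at most $d$ to reach the encountered orchid's stem, at most $d$ to return to the starting node). A detour is triggered only when a step of $\sigma_i$ moves $T_i$ into the orchid $F(H)$ of some $H \in \fR_2$ with $H \neq H_i$ \emph{for the first time}; the node $T_i$ occupies at that step lies on $\sigma_i$, hence in $H_i$, so it witnesses $V(H_i) \cap V(F(H)) \neq \emptyset$. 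Since distinct detours are triggered by distinct orchids, $T_i$ makes at most $\outdeg_{\max}(\fR_2, G) \le 2k/d$ detours by \cref{lem:rulingorchids}, for a total of at most $2k + (2k/d)\cdot 2d = 6k$ steps. As each step of a token is realised in $O(1)$ rounds of \LOCAL, and deleting a token or removing a subgraph is a $0$-round operation, \DirectionalEliminationWalk\ runs in $O(k)$ rounds.

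For the claim that every $H \in \fR_2$ lies within distance $O(k)$ of $\fR_3$ (recovering the stated $6k$), I would adapt the post-mortem walk argument from the proof of \cref{lem:elimwalk}. If $H_i \in \fR_2 \setminus \fR_3$, its token $T_i$ is deleted at some node $u_1$ by a token $T_{j_1}$; if $T_{j_1}$ is itself deleted, it is deleted at some $u_2$ by $T_{j_2}$; continuing, we reach a token $T_{j_m}$ that is never deleted, so $H_{j_m} \in \fR_3$. Concatenating the initial segment of $T_i$'s walk (up to its deletion) with the segments of the walks of $T_{j_1}, \dots, T_{j_{m-1}}$ between consecutive deletions gives a single walk in $G$ from a node of $H_i$ to the position of $T_{j_m}$ at the moment of the final deletion. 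Since every token advances at most one step per round and, by the previous paragraph, is active for at most $6k$ rounds, this walk has length at most $6k$; and since $T_{j_m}$'s trajectory stays within distance $d$ of $H_{j_m}$ (a detour strays at most $d$ from $H_{j_m}$, which $\sigma_{j_m}$ never leaves), we conclude $\dist_G(H_i, H_{j_m}) \le 6k + d = O(k)$.

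Finally, for the indegree bound, fix $H' \in \fR_3$ and let $H_i \in \fR_3$ be any in-neighbour of $H'$ in $C(\fR_3,G)$, i.e.\ $H_i \neq H'$ and $V(H_i) \cap V(F(H')) \neq \emptyset$. I claim $T_i$ visits a node of the stem of $F(H')$ at some round $\theta_i \le 6k$. Since $\sigma_i$ traverses all of $H_i$, the token $T_i$ occupies a node of $F(H')$ at some round; let $\theta$ be the first such round. At round $\theta$, $T_i$ makes either a step of $\sigma_i$ or a detour step. In the first case this is $T_i$'s first entry into $F(H')$, which---as $H' \in \fR_2$, $H' \neq H_i$, and $T_i$ is never deleted---triggers and completes a detour into the stem of $F(H')$. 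In the second case the detour is toward the stem of some $F(H'')$ with $H'' \in \fR_2$, $H'' \neq H_i$; if $H'' \neq H'$ then every node on the detour path is within distance $d$ of $V(F(H''))$, whereas \cref{lem:rulingorchids} gives $\dist_G(V(F(H')), V(F(H''))) \ge d+1$, so the path cannot meet $V(F(H'))$, contradicting that $T_i$ is on a node of $F(H')$ at round $\theta$; hence $H'' = H'$ and the detour already brings $T_i$ to the stem of $F(H')$. Either way $T_i$ is at one of the at most $t$ stem nodes of $F(H')$ at some round $\le 6k$. If two distinct in-neighbours of $H'$ were at the same stem node in the same round, the lower-ID token would be deleted---impossible, since both are in $\fR_3$. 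Hence the number of (stem node, round) pairs, at most $t \cdot 6k = 6kt$, bounds the indegree of $H'$ in $C(\fR_3,G)$.

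The step I expect to be the main obstacle is the claim in the last paragraph that a surviving token meeting $F(H')$ must actually touch the \emph{stem} of $F(H')$: a priori such a token could clip a corner of $F(H')$ while detouring toward some other orchid and never reach $F(H')$'s stem, which would force us to replace the $t$-sized stem by a potentially much larger "entry set" and destroy the pigeonhole count. Ruling this out is precisely where the $\ge d+1$ separation between the orchids of $\fR_2$---secured in Step~2 via \cref{lem:rulingorchids}---is used essentially, together with the design choice that detours always head to stems.
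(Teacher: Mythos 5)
Your proof is correct and takes essentially the same route as the paper: the $6k$ step bound via $2k + (2k/d)\cdot 2d$, the postmortem-walk argument for the distance claim, and a pigeonhole count over (stem node, time step) pairs for the indegree. The one place you go beyond the paper's write-up is the final paragraph, where you carefully justify that a surviving in-neighbor's token must actually reach the \emph{stem} of $F(H')$ — ruling out, via the distance-$\geq d+1$ separation of orchids from \cref{lem:rulingorchids}, the scenario in which the token brushes $F(H')$ while detouring toward some other $F(H'')$ — a point the paper's proof asserts without argument; this is exactly the right concern and the right resolution.
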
 
\begin{proof}
     We start by observing that the lengths of the walks taken by the tokens in \DirectionalEliminationWalk\ are bounded by $6k$: this simply follows from the facts that each token enters at most $2k/d$ different orchids for the first time (due to \cref{lem:rulingorchids}), that each walk towards the stem of an orchid and back is of length $2d$, by the definition of a $(t,d)$-orchid, and that each token walks at most $2k$ steps on its subgraph.
     This implies that \DirectionalEliminationWalk\ can be performed in $O(k)$ rounds.
     (Note that the computation of a shortest path towards an orchid's stem can be performed in $0$ rounds during the algorithm by simply storing the entire topology of any orchid $H$ in every node of $H$, which can be achieved by a precomputation of $O(t+d) \subseteq O(k)$ rounds, by the definition of a $(t,d)$-orchid. Here, we make use of the facts that $t \leq k$ and $d \in O(k)$.)
        
     Now consider some subgraph $H_i \in \fR_3$, i.e., a subgraph for which the corresponding token $T_i$ was not eliminated during \DirectionalEliminationWalk.
     For each in-neighbor $H_s$ of $H_i$ in $C(\fR_3,G)$, the corresponding token $T_s$ must have had its token run its full walk without being deleted. At one of the $O(k)$ steps of the walk, $T_s$ was located in some node of the stem of the orchid $F(H_i)$ of $H_i$. For each of the overall $6k$ time steps, there can be at most $t$ tokens that are located at some node of the stem of $F(H_i)$ at this time step and do not get deleted.
     Hence, we obtain that there are at most $6k \cdot t$ in-neighbors of $H_i$, which provides the desired bound on the maximum indegree of $C(\fR_3,G)$.

     Next, let $H'$ be a subgraph from $\fR_2$, and $T'$ the corresponding token.
     Consider the postmortem walk of token $T'$ (as defined in the proof of~\cref{lem:elimwalk}).
     Analogously to the proof of~\cref{lem:elimwalk}, we obtain that this postmortem walk is of length $6k$ and ends in a node of a subgraph whose corresponding token survived \DirectionalEliminationWalk. As a result, each subgraph from $\fR_2$ is within distance $6k$ of some subgraph from $\fR_3$.
\end{proof}

\subsection{Step 4: Alternating (In/Out)degree Reduction}
\label{sec:alternating-reduction}

\subsubsection{Step 4a: Outdegree reduction by ruling set}
\label{sec:outdegree-reduction}

\begin{definition}[Sibling graph]
    \label{def:sibling-graph}
    Let $\fR$ be a family of subgraphs of the input graph $G$. For any integer $x \geq 0$, the \emph{$x$-sibling graph} $S(\fR,G,x)$ of $\fR$ is the virtual graph defined via
    \begin{itemize}
        \item $V(S(\fR,G,x)) = \fR$,
        \item $E(S(\fR,G,x)) = \bigcup_{H \in \fR} \set{ \{H',H''\} \mid H', H'' \in \fR, H' \neq H'', \text{ and there exists a pair } (v',v'') \in (V(H)\cap V(F(H')))\times (V(H)\cap V(F(H''))) \text{ such that } \dist_H(v',v'') \leq x}$.
    \end{itemize}
    In other words, two subgraphs $H'$ and $H''$ are connected in $S(\fR,G,x)$ whenever their orchids intersect a third subgraph $H$ and the intersections points are at distance at most $x$ in $H$. Note that $H$ is not necessarily different from $H'$ (or $H''$): we also add an edge between $H$ and $H'$ if the orchid of $H'$ intersects $H$ close to the orchid of $H$. We call $x$ the \emph{connecting distance} of the sibling graph.
\end{definition}

Note that if $H \not \in \set{H',H''}$ in the above definition, then $H'$ and $H''$ are out-neighbors of $H$ in the contention digraph $C(\fR,G)$, and, conversely, $H$ is an in-neighbor of $H'$ and $H''$. If $H'=H$ and the edge $\{H',H''\}$ exists in the sibling graph, then $H'=H$ is an in-neighbor of $H''$.
This connection with the contention digraph implies the following bound on the maximum degree of the sibling graph, where we use $\indeg_{\max}(\fR)$ and $\outdeg_{\max}(\fR)$ to denote the maximum in- and outdegree, respectively, of the contention digraph $C(\fR,G)$..

\begin{proposition}
\label{prop:sibling-graph-degree}
    Let $\fR$ be a family of subgraphs of $G$ and $x \geq 0$ any integer.
Then the sibling graph 
    $S(\fR,G,x)$
    has maximum degree $(\indeg_{\max}(\fR)+1) \cdot \outdeg_{\max}(\fR)$.
\end{proposition}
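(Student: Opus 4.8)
The plan is to bound the degree of $S(\fR,G,x)$ by charging each edge incident to a vertex $H'$ to a specific pair consisting of a ``middle'' subgraph $H$ and an out-neighbor-in-the-contention-digraph of $H$. Fix a vertex $H' \in \fR$ of the sibling graph and consider any neighbor $H''$. By the definition of $S(\fR,G,x)$, there is a third subgraph $H \in \fR$ (not necessarily distinct from $H'$ or $H''$) together with a pair $(v',v'') \in (V(H)\cap V(F(H')))\times (V(H)\cap V(F(H'')))$ with $\dist_H(v',v'') \le x$. I would split into the two cases highlighted in the remark preceding the proposition.

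First, suppose $H \notin \{H',H''\}$. Then, as noted in the text, $H'$ and $H''$ are both out-neighbors of $H$ in the contention digraph $C(\fR,G)$, and $H$ is an in-neighbor of $H'$. So the middle subgraph $H$ is one of the at most $\indeg_{\max}(\fR)$ in-neighbors of the fixed vertex $H'$ (using that $V(H)\cap V(F(H')) \neq \emptyset$ means $(H,H') \in E(C(\fR,G))$ — here I should double-check whether $H = H'$ is possible, but in this case we have assumed $H \ne H'$). Having chosen $H$, the neighbor $H''$ is one of its at most $\outdeg_{\max}(\fR)$ out-neighbors. Hence there are at most $\indeg_{\max}(\fR) \cdot \outdeg_{\max}(\fR)$ neighbors $H''$ of $H'$ arising this way. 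Second, the degenerate case $H \in \{H',H''\}$: if $H = H'$, then the edge exists because $F(H'')$ intersects $H'$, i.e. $H''$ is an out-neighbor of $H'$ in $C(\fR,G)$, giving at most $\outdeg_{\max}(\fR)$ further neighbors; if instead $H = H''$, then symmetrically $H'$ intersects $F(H'')$, so $H'$ is an out-neighbor of $H''$, equivalently $H''$ is an in-neighbor of $H'$ in $C(\fR,G)$, and there are at most $\indeg_{\max}(\fR)$ such $H''$, which is dominated by the first bound. Combining, the degree of $H'$ in $S(\fR,G,x)$ is at most $\indeg_{\max}(\fR)\cdot\outdeg_{\max}(\fR) + \outdeg_{\max}(\fR) = (\indeg_{\max}(\fR)+1)\cdot\outdeg_{\max}(\fR)$, as claimed.

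The main obstacle I anticipate is purely bookkeeping: making sure the two degenerate cases ($H=H'$ and $H=H''$) are handled so that their contributions are genuinely absorbed into the stated bound and not double-counted, and being careful that ``$V(H)\cap V(F(H'))\neq\emptyset$ with $H\neq H'$'' is exactly the condition for the directed edge $(H,H')$ to be present in the contention digraph (so that $H$ really is counted among the in-neighbors of $H'$). There is no analytic content here — the connecting distance $x$ plays no role in the bound at all, it only restricts which edges appear, so the estimate holds for every $x$. I would present the argument as the short case analysis above.
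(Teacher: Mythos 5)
Your case split is the right framework, and the first two cases are handled correctly, but the way you dispose of the third case leaves a genuine gap. Summing your three upper bounds gives
\[
\indeg_{\max}(\fR)\cdot\outdeg_{\max}(\fR)\;+\;\outdeg_{\max}(\fR)\;+\;\indeg_{\max}(\fR),
\]
which exceeds the claimed $(\indeg_{\max}(\fR)+1)\cdot\outdeg_{\max}(\fR)$ by exactly $\indeg_{\max}(\fR)$. Saying the third contribution ``is dominated by the first bound'' does not close this gap: $\indeg_{\max}\le\indeg_{\max}\cdot\outdeg_{\max}$ is a numerical comparison of upper bounds, but it does not show that the neighbors $H''$ arising with $H=H''$ are already among those charged in the first case. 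They can be genuinely new, so the three contributions a priori add.

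The fix is that your bound for the first case is not tight, and tightening it recovers exactly the missing $\indeg_{\max}(\fR)$. When $H\neq H'$, the middle subgraph $H$ is an in-neighbor of $H'$, which means $H'$ itself is one of the out-neighbors of $H$; but $H'$ can never serve as $H''$ in the sibling edge $\{H',H''\}$, since the sibling graph is simple. So for each of the at most $\indeg_{\max}(\fR)$ choices of in-neighbor $H$, only $\outdeg_{\max}(\fR)-1$ of $H$'s out-neighbors are eligible as $H''$, and the one additional candidate is $H$ itself (which is precisely the $H=H''$ situation). That is at most $\outdeg_{\max}(\fR)$ candidates per in-neighbor $H$, folding your first and third cases into a single count of $\indeg_{\max}(\fR)\cdot\outdeg_{\max}(\fR)$. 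Adding the at most $\outdeg_{\max}(\fR)$ candidates from $H=H'$ gives the stated bound. (A minor slip: in your third case you write that ``$H'$ intersects $F(H'')$''; since $H=H''$, the nonempty intersection is $V(H'')\cap V(F(H'))$, so it is $H''$ that intersects $F(H')$. Your conclusion that $H''$ is an in-neighbor of $H'$ is nonetheless the correct one.) The paper gives no explicit proof of this proposition, so there is nothing to compare against beyond the correct statement.
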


Note that the bound given by \cref{prop:sibling-graph-degree} is independent of the connecting distance $x$.

We now provide~\cref{alg:outdegree-reduction}, which is the first of the two subroutines (parameterized by some iteration counter $i$) that are performed iteratively and alternatingly during \cref{alg:rulingsubgraphs}.
Note that in each iteration $i \geq 2$, \cref{alg:outdegree-reduction} depends on the output $\fR^b_{i-1}$ of the second subroutine (\cref{alg:pausingelimwalk}) in iteration $i - 1$.

\begin{algorithm}[htb!]
\caption{\OutdegreeReduction\ in iteration $i \geq 1$, on subgraph family $\fR^b_{i-1}$}
\label{alg:outdegree-reduction}
\begin{algorithmic}[1]
\Statex Let $\dlogsup{i} = 4k/(\log^{(i)} k)^2$ and set $\fR^b_{0} := \fR_3$.
\State Compute a $(2, \log \Delta(S(\fR^b_{i-1},G,\dlogsup{i})))$-ruling set $\fR^a_i \subseteq \fR^b_{i-1}$ on $S(\fR^b_{i-1},G,\dlogsup{i})$.
\end{algorithmic}
\end{algorithm}

\begin{lemma}
    \label{lem:outDegreeReduction}
    Consider iteration $i$ of~\cref{alg:outdegree-reduction}.
    Suppose that for the maximum indegree $\indeg_{\max}(\fR^b_{i-1})$ and the maximum outdegree $\outdeg_{\max}(\fR^b_{i-1})$ of the contention digraph $C(\fR^b_{i-1},G)$ it holds that $\indeg_{\max}(\fR^b_{i-1}) \in O((\log^{(i-1)} k)^4)$ and $\outdeg_{\max}(\fR^b_{i-1}) \in O((\log^{(i-1)} k)^2)$, where we set $\log^{(0)} k := k$. Then for the set $\fR^a_i$ returned by \cref{alg:outdegree-reduction}, it holds that
    \begin{itemize}
        \item the maximum outdegree of the contention digraph $C(\fR^a_i, G)$ is at most $(\log^{(i)} k)^2/2$,
        \item the maximum indegree of the contention digraph $C(\fR^a_i, G)$ is in $O((\log^{(i-1)} k)^4)$, and
        \item every subgraph contained in $\fR^b_{i-1} \setminus \fR^a_i$ is at distance at most $O(k / \log^{(i)} k + (t+d)\log^{(i)} k)$ from some subgraph contained in $\fR^a_i$.
    \end{itemize}
    Moreover, \cref{alg:outdegree-reduction}
can be performed in $O((t+d+\dlogsup{i})(\log^{(i)} k + \log^*n))$ rounds.
\end{lemma}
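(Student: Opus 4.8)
The plan is to analyze \cref{alg:outdegree-reduction} by combining \cref{prop:sibling-graph-degree} with the bounds on the computation of ruling sets from \cref{lem:ruling-set} (together with Linial's algorithm, \cref{lem:linial}), and then to translate the guarantees of the ruling set on the virtual sibling graph $S(\fR^b_{i-1},G,\dlogsup{i})$ back into statements about distances in $G$ and into degree bounds on the contention digraph $C(\fR^a_i,G)$. First I would record the key quantities: by hypothesis $\indeg_{\max}(\fR^b_{i-1}) \in O((\log^{(i-1)} k)^4)$ and $\outdeg_{\max}(\fR^b_{i-1}) \in O((\log^{(i-1)} k)^2)$, so by \cref{prop:sibling-graph-degree} the maximum degree $D := \Delta(S(\fR^b_{i-1},G,\dlogsup{i}))$ satisfies $D \in O((\log^{(i-1)} k)^6)$. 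Since $\log D \in O(\log^{(i)} k)$ (using $\log^{(i)} k = \log \log^{(i-1)} k$, and absorbing the constant $6$ into the $O(\cdot)$), the $(2,\log D)$-ruling set is in fact a $(2, O(\log^{(i)} k))$-ruling set on the sibling graph.

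For the \emph{outdegree bound}, I would argue as follows. If two subgraphs $H',H'' \in \fR^a_i$ are both out-neighbors of a common $H \in \fR^a_i$ in the contention digraph $C(\fR^a_i,G)$, then the orchids $F(H')$ and $F(H'')$ both intersect $H$, say at vertices $v' \in V(H) \cap V(F(H'))$ and $v'' \in V(H) \cap V(F(H''))$; since $H$ has at most $k$ vertices and is connected, $\dist_H(v',v'') \le k$. More strongly, one should only count out-neighbors whose intersection points in $H$ are mutually far apart — but by the choice $\dlogsup{i} = 4k/(\log^{(i)} k)^2$, any two intersection points at distance $\le \dlogsup{i}$ in $H$ would make $H',H''$ adjacent in the sibling graph, which is impossible since $\fR^a_i$ is independent in $S(\fR^b_{i-1},G,\dlogsup{i})$. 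Hence the intersection points of distinct out-neighbors of $H$ are pairwise at distance $> \dlogsup{i}$ in $H$; since $H$ has $\le k$ vertices, there are at most $k/\dlogsup{i} = (\log^{(i)} k)^2/4 \le (\log^{(i)} k)^2/2$ such intersection points, and thus at most that many out-neighbors (here I am using that each intersection point lies in a unique orchid among these, or rather that each out-neighbor contributes at least one point and points for distinct out-neighbors must be far, so they are distinct — I would be careful to phrase this so that even if one orchid intersects $H$ at several points, only its closest-to-other-points representative matters; a cleaner route is: pick for each out-neighbor $H'$ an arbitrary point $v' \in V(H)\cap V(F(H'))$; these points are pairwise at distance $> \dlogsup{i}$, hence $\le k/\dlogsup{i}$ of them). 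The \emph{indegree bound} is immediate and needs no new argument: $\fR^a_i \subseteq \fR^b_{i-1}$, and removing subgraphs from a family can only remove edges from the contention digraph, never create new ones or increase indegrees, so $\indeg_{\max}(\fR^a_i) \le \indeg_{\max}(\fR^b_{i-1}) \in O((\log^{(i-1)} k)^4)$.

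For the \emph{ruling distance}, the $(2, O(\log^{(i)} k))$-ruling set guarantees that every $H \in \fR^b_{i-1} \setminus \fR^a_i$ is within $O(\log^{(i)} k)$ hops of $\fR^a_i$ in $S(\fR^b_{i-1},G,\dlogsup{i})$; I must convert one sibling-graph hop into a distance in $G$. A single edge $\{H',H''\}$ of the sibling graph witnesses a third subgraph $H$ and points $v',v''\in V(H)$ with $\dist_H(v',v'') \le \dlogsup{i}$, where $v'\in V(F(H'))$ and $v''\in V(F(H''))$; since each orchid has diameter $\le t+2d$ and each subgraph's root can be reached from any of its vertices in $\le k$ hops (and from a point in its orchid in $\le t+2d$ hops), one sibling edge corresponds to a path in $G$ of length $O(\dlogsup{i} + t + d)$ between (say) the roots of $H'$ and $H''$. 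Chaining $O(\log^{(i)} k)$ such edges gives a distance in $G$ of $O((\dlogsup{i} + t + d)\log^{(i)} k) = O(k/\log^{(i)} k + (t+d)\log^{(i)} k)$, which is the claimed bound. Finally, for the \emph{runtime}: the ruling set on $S(\fR^b_{i-1},G,\dlogsup{i})$ is computed via Linial ($O(\log^* n)$ rounds for an $O(D^2)$-coloring) followed by the $(2,\log D)$-ruling set algorithm of \cref{lem:ruling-set} in $O(\log D) = O(\log^{(i)} k)$ rounds on the virtual graph; each virtual-graph round costs a dilation factor of $O(t+d+\dlogsup{i})$ rounds in $G$ (the cost of communicating along one sibling-graph edge, plus the $O(t+d)$ precomputation of orchid topologies), yielding $O((t+d+\dlogsup{i})(\log^{(i)} k + \log^* n))$ rounds overall.

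The main obstacle I anticipate is getting the combinatorial bookkeeping of the outdegree argument exactly right — in particular making precise, without double-counting, how the $\dlogsup{i}$-independence in the sibling graph forces the out-neighbors' intersection points inside a fixed $k$-vertex subgraph $H$ to be $\dlogsup{i}$-separated, and hence few. A secondary subtlety is the dilation accounting: one must confirm that every step the ruling-set subroutine takes on $S(\fR^b_{i-1},G,\dlogsup{i})$ can indeed be simulated in $O(t+d+\dlogsup{i})$ rounds of $G$, which relies on the fact (to be checked) that adjacency in the sibling graph is "visible" within that radius once each subgraph has stored its orchid's topology, and that $\dlogsup{i} \le k$ so this radius is $O(k)$.
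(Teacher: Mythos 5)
Your proposal follows essentially the same approach as the paper's proof: bound the sibling graph's degree via \cref{prop:sibling-graph-degree}, compute a $(2,O(\log^{(i)} k))$-ruling set on it with the stated dilation cost, derive the outdegree bound from the $\dlogsup{i}$-separation of intersection points inside $H$, and note the indegree bound is inherited from $\fR^b_{i-1}$. One small imprecision you rightly flagged as needing care: the clean way to bound the number of pairwise $(\dlogsup{i}+1)$-separated points in a connected $k$-vertex graph is via disjoint balls of radius $\dlogsup{i}/2$ each containing $\geq\dlogsup{i}/2$ nodes, giving $2k/\dlogsup{i}$ (not $k/\dlogsup{i}$), which still equals $(\log^{(i)} k)^2/2$ as required.
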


\begin{proof}
    Each node in the $\dlogsup{i}$-sibling graph $S(\fR^b_{i-1},G,\dlogsup{i})$ is a subgraph from the family $\fR^b_{i-1}$.
    We can simulate an algorithm executed on the sibling graph on our input graph $G$ by letting each subgraph $H \in \fR^b_{i-1}$ be simulated by the root $r_H$ of the orchid $F(H)$ of $H$.
    For any two subgraphs connected by an edge in the sibling graph, the roots of their orchids are at distance at most $\dlogsup{i} + 2d + 2t$ in $G$, which implies that a round of communication over the sibling graph can be simulated in $\dlogsup{i} + 2d + 2t$ rounds of communication over $G$.
    
    Let $\Delta' := \Delta(S(\fR^b_{i-1},G,\dlogsup{i}))$ be the maximum degree of the sibling graph. By \cref{lem:linial,lem:ruling-set}, a $(2,O(\log(\Delta')))$-ruling set on the sibling graph can be computed in $O(\log \Delta' + \log^* n)$ rounds of communication over the sibling graph.
By the hypotheses of the lemma and \cref{prop:sibling-graph-degree}, $\Delta' \in O(( \log^{(i-1)} k)^6)$, and thus $\log \Delta' \in O(\log^{(i)} k)$.

    Put together, the cost of simulating in $G$ a round of communication over the sibling graph and the runtime of the ruling set algorithm over the sibling graph result in a complexity of at most $O((t+d+\dlogsup{i})(\log^{(i)} k + \log^*n))$ rounds.
    The maximum distance of $O(\log \Delta') \subseteq O(\log^{(i)} k)$ between any node of the sibling graph and a node that joined the ruling set implies that for each subgraph in $\fR^b_{i-1}\setminus \fR^a_i$ the distance to the closest subgraph in $\fR^a_i$ is in $O((t+d+\dlogsup{i})\log^{(i)} k) = O(k / \log^{(i)} k + (t+d)\log^{(i)} k)$.

    Next, consider two out-neighbors $H',H'' \in \fR^a_i$ of a subgraph $H\in \fR^a_i$ in the contention digraph $C(\fR^a_i,G)$.
    Since $H',H'' \in \fR^a_i$, they were not adjacent in the sibling graph of the family $\fR^b_{i-1}$. Therefore, the intersections of their orchids with $H$ are at distance at least $\dlogsup{i}+1$ in $H$.
    This implies that for any two nodes $v' \in V(F(H')) \cap V(H)$ and $v'' \in V(F(H'')) \cap V(H)$, the distance-$\dlogsup{i}/2$ neighborhoods of $v'$ and $v''$ in $H$ contain at least $\dlogsup{i}/2$ nodes each (by the connectivity of $H$), and that these two neighborhoods do not intersect.
Therefore, the outdegree of $H$ in $C(\fR^a_i)$ is at most $2\card{H} / \dlogsup{i} \leq (\log^{(i)} k)^2/2$.

    Finally, we observe that the maximum indegree of $C(\fR^a_i,G)$ is in $O((\log^{(i-1)} k)^4)$ since $\fR^a_i \subseteq \fR^b_{i-1}$ and the lemma assumes $\indeg_{\max}(\fR^b_{i-1}) \in O((\log^{(i-1)} k)^4)$.

\end{proof}

\subsubsection{Step 4b: Indegree reduction by pausing walk}
\label{sec:indegree-reduction}

Similarly to how {\DirectionalEliminationWalk} from \cref{sec:direlimwalk} is able to reduce the maximum indegree of the considered contention digraph to something of the order of the maximum outdegree of the contention digraph,
in this section, we describe and analyze an algorithm for reducing the indegree of the contention digraph following the outdegree reduction from the previous section.
Before describing the algorithm, we introduce the notion of a set of branch nodes.

\begin{definition}[Branch nodes]
\label{def:branch-nodes}
    A set $B$ of $x$ \emph{branch nodes} of a subgraph $H$ with $(t,d)$-orchid $F(H)$ with stem $F'(H)$ is a set with the following properties:
    \begin{itemize}
        \item $\card{B} = x$,
        \item $B \subseteq V(F'(H))$, i.e., $B$ is fully contained in the node set of the stem of $F(H)$,
        \item $\forall v \in V(F'(H))$, $\dist_{F'(H)}(v,B) \leq t/x$.
    \end{itemize}
    Intuitively, $B$ is a distance-$(t/x)$ dominating set of the stem of $F(H)$ of size $x$. 
\end{definition}

\begin{proposition}
    For each $x \geq 1$, a set of $x$ branch nodes as described in \cref{def:branch-nodes} exist.
\end{proposition}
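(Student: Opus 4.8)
The plan is to construct the set of $x$ branch nodes greedily, treating the stem $F'(H)$ as our ambient graph. First I would note that since $H$ is a connected subgraph of $G$ and $F'(H)$ is, by the definition of a $(t,d)$-orchid (\cref{def:orchid}), a connected induced subgraph of $H$ containing at most $t$ nodes, the stem $F'(H)$ is itself a connected graph on at most $t$ vertices. If $x \geq |V(F'(H))|$, then we can simply take $B = V(F'(H))$ (padding with arbitrary repeated-up-to-distinctness choices is not needed since we only need $|B| = x$; more carefully, if the stem has fewer than $x$ nodes one takes all of them and the third condition is trivially satisfied with any slack, but to meet $|B| = x$ exactly we instead observe the interesting case is $x \leq |V(F'(H))| \leq t$, and in the degenerate case one can relax to $|B| \leq x$, which is all the later applications need). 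So assume $1 \leq x \leq |V(F'(H))|$.

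Next I would build $B$ by repeatedly selecting nodes that are far from the already-selected ones. Concretely, pick any node $b_1 \in V(F'(H))$. Having picked $b_1,\dots,b_j$ with $j < x$, consider the set $R_j$ of nodes $v \in V(F'(H))$ with $\dist_{F'(H)}(v,\{b_1,\dots,b_j\}) > t/x$. If $R_j = \emptyset$, then already $\{b_1,\dots,b_j\}$ is a distance-$(t/x)$ dominating set; we then pad $B$ up to size $x$ by adding $x - j$ further arbitrary distinct nodes of $V(F'(H))$ (possible since $x \leq |V(F'(H))|$), which does not destroy the domination property. If $R_j \neq \emptyset$, pick $b_{j+1} \in R_j$; by construction $b_{j+1}$ is at distance more than $t/x$ from each of $b_1,\dots,b_j$, so all chosen nodes remain distinct. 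After at most $x$ steps we have $|B| = x$.

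The remaining point is to verify the third bullet of \cref{def:branch-nodes}: every $v \in V(F'(H))$ satisfies $\dist_{F'(H)}(v,B) \leq t/x$. If the process terminated early via some $R_j = \emptyset$, this is immediate. Otherwise the process ran for the full $x$ steps producing $b_1,\dots,b_x$, and I claim $R_x = \emptyset$, i.e., no node is at distance more than $t/x$ from $B = \{b_1,\dots,b_x\}$. Suppose toward contradiction that some $w$ had $\dist_{F'(H)}(w,B) > t/x$. Then the balls of radius $\lfloor t/(2x)\rfloor$ (in the metric of $F'(H)$) around $b_1,\dots,b_x,w$ would be pairwise disjoint — any two of the $b_i$'s are at distance $> t/x$, and $w$ is at distance $> t/x$ from each $b_i$ — and each such ball, by connectivity of $F'(H)$, contains at least $\min(t/(2x)+1, |V(F'(H))|)$ nodes; a counting argument of the type used in the proof of \cref{lem:outDegreeReduction} then forces $|V(F'(H))| > t$, contradicting $|V(F'(H))| \leq t$. (One has to be slightly careful with floors when $t/x < 1$, i.e. $x > t$, but in that regime the condition $\dist_{F'(H)}(v,B) \leq t/x < 1$ forces $B = V(F'(H))$, which has size at most $t < x$, so we are again in the degenerate/padding case and the statement should be read with $|B| \leq x$.) The main obstacle is precisely this bookkeeping at the boundary — ensuring $|B| = x$ exactly when the stem is small, and handling $x > t$ — rather than the core greedy argument, which is standard; I would state the proposition for the meaningful range $1 \le x \le t$ (or with $|B| \le x$) to sidestep the degeneracies cleanly.
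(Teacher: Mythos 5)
Your proof takes a genuinely different route from the paper's (a greedy farthest-point selection with a packing/counting argument, versus the paper's construction of picking evenly spaced indices along a closed walk of length $\le 2t$ that visits every node of the stem), but the greedy route has a real gap and does not establish the statement.

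The counting step does not produce a contradiction. After $x$ greedy picks you have $x+1$ points $b_1,\dots,b_x,w$ pairwise at distance $> t/x$, and disjoint balls of radius $\lfloor t/(2x)\rfloor$ around them contain at least $\lfloor t/(2x)\rfloor + 1$ nodes each. But
\[
(x+1)\bigl(\lfloor t/(2x)\rfloor + 1\bigr) \ \lesssim\ (x+1)\cdot \frac{t}{2x}\ =\ t\cdot\frac{x+1}{2x}\ \le\ t
\qquad\text{for all } x\ge 1,
\]
so you never force $|V(F'(H))| > t$. Concretely the greedy process can in fact fail: take the stem to be a spider with one center and three legs of length $3$ (so $t=10$ nodes) and take $x=2$, hence target radius $t/x = 5$. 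The greedy may pick $b_1$ and $b_2$ to be the tips of legs $1$ and $2$, which are at mutual distance $6>5$; but then the tip of leg $3$ is also at distance $6>5$ from both, so $R_2\neq\emptyset$ even though $|B|=x$. Thus the invariant ``$R_x=\emptyset$'' you rely on is simply false for an arbitrary greedy selection, and no choice of ball radius fixes the counting. The underlying phenomenon is that naive farthest-point sampling with a fixed budget of $x$ points is not a $(t/x)$-dominating-set algorithm in general; it only gives a $2$-approximation to the $k$-center objective, which is not enough here. The paper's walk-based construction sidesteps this: a DFS traversal of the stem gives a closed walk of length $\le 2t$ through all its nodes, and choosing the nodes at indices $\min(2t,\lceil t/x\rceil + j\lceil 2t/x\rceil)$ for $j\in\{0,\dots,x-1\}$ guarantees that every node of the stem is within $\lceil t/x\rceil$ steps along the walk (and hence within $t/x$ in graph distance) of a chosen branch node. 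I'd suggest replacing the greedy argument by this explicit walk-based construction.
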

\begin{proof}
    Consider a walk $\sigma$ of length $2t$ over the stem $F'(H)$ of the orchid of $H$ that visits all node of $F'(H)$, and index the visited nodes with indices from $0$ to $2t$ (where nodes may receive multiple indices). Take as branch nodes the nodes of indices $\min(2t,\ceil{t/x} + j \cdot \ceil{2t/x})$ for $j \in \set{0,\ldots,x-1}$. This defines a set $B$ of size $x$ contained in the node set of the stem s.t.\ every node of the stem is at distance at most $t/x$ from the nearest node in $B$.
    (Note that the last property also holds for the node with index $0$ as this node is identical to a node with non-zero index.)
\end{proof}

We now describe {\PausingEliminationWalk}, our algorithm for reducing the indegree of the contention digraph. At a high level, {\PausingEliminationWalk} is a generalization of {\DirectionalEliminationWalk} where tokens encountering an orchid for the first time walk to the stem of the orchid and \emph{pause} there for some number of rounds $\ell=4k/p$, i.e., do not move for the next $\ell$ rounds. While in {\DirectionalEliminationWalk}, two tokens reaching the stem of the same orchid at different points in time do not meet each other there, the addition of a pause allows tokens to meet even if they reach the same stem at different points in time, provided those points in time are not too distant.
Furthermore, {\PausingEliminationWalk} makes use of a subset of the nodes of the stem, the branch nodes, to gather the tokens on a smaller number of nodes in the stem. This increases even further the opportunities that tokens have to eliminate each other.
{\DirectionalEliminationWalk} basically corresponds to {\PausingEliminationWalk} with pauses of length $0$ with the whole stem as branch nodes.

Each token of a subgraph $H$ that runs the full course of its walk over $H$ is stopped by each orchid overlapping with $H$---each of which contributes to the outdegree of $H$ in the contention digraph. When entering a previously unseen orchid, the token adjusts its walk by inserting a walk to and from the nearest branch node of the orchid.
The length of pauses and number of branch nodes are chosen respectively low and high enough so that every token can complete its entire walk in $O(k)$ rounds, even if it encounters as many as $\outdeg_{\max}(\fR^a_i)$ orchids during its walk, and is at the maximum distance of $d+t/x$ to the nearest branch node when encountering each orchid.

\begin{algorithm}[htb!]
\caption{\PausingEliminationWalk\ in iteration $i \geq 1$, on subgraph family $\fR^a_{i}$}
\label{alg:pausingelimwalk}
\begin{algorithmic}[1]
\Statex Set $p := 2\outdeg_{\max}(\fR^a_i)$ and $\ell := \lceil4k/p\rceil$.
\State Compute a set of $p$ branch nodes of each subgraph in $\fR^a_i$.
\State Each non-deleted token $T_s$ (i.e., each token corresponding to some $H_s \in \fR^a_i$) performs the steps described in the following loop. Whenever there are multiple tokens at the same node at the same time, all but the one of the highest ID are deleted.
\For{$j = 1, \ldots, 2k$}
    \State $T_i$ takes a step of the walk $\sigma_i$ on its subgraph $H_i$, if a step remains.
    \State If token $T_i$ enters an orchid of a subgraph $H \neq H_i$ in $\fR^a_i$ for the first time during $\sigma_i$, it walks to the nearest branch node on the stem of that orchid, makes a pause of length $\ell$ there (i.e., does not move for $\ell$ steps), and then walks back to the node in $H_i$ where it encountered the orchid. \EndFor
\State Remove from $\fR^a_i$ all members whose corresponding token was deleted and call the resulting set of subgraphs $\fR^b_i$.
\end{algorithmic}
\end{algorithm}

\begin{lemma}
\label{lem:pausingwalk}
    Consider iteration $i$ of \cref{alg:pausingelimwalk}.
    Suppose that for the maximum indegree $\indeg_{\max}(\fR^a_i)$ and the maximum outdegree $\outdeg_{\max}(\fR^a_i)$ of the contention digraph $C(\fR^a_i,G)$ it holds that $\indeg_{\max}(\fR^a_i) \in O((\log^{(i-1)} k)^4)$ and $\outdeg_{\max}(\fR^a_i) \leq (\log^{(i)} k)^2/2$.
Then, for the set $\fR^b_i$ returned by~\cref{alg:pausingelimwalk}, it holds that
\begin{itemize}
        \item the maximum outdegree of the contention digraph $C(\fR^b_i,G)$ is at most $(\log^{(i)} k)^2/2$,
        \item the maximum indegree of the contention digraph $C(\fR^b_i,G)$ is in $O((\log^{(i)} k)^4)$, and
\item every subgraph contained in $\fR^a_i \setminus \fR^b_i$ is at distance at most $O(k)$ from some subgraph contained in $\fR^b_i$.
\end{itemize}
    Moreover, \cref{alg:pausingelimwalk} can be performed in $O(k)$ rounds.
\end{lemma}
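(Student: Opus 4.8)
The plan is to mirror the analysis of \DirectionalEliminationWalk\ (\cref{lem:dirElimWalk}), but with the pauses and branch nodes tuned so that (i) each token still finishes in $O(k)$ rounds and (ii) the indegree drops all the way to $O((\log^{(i)} k)^4)$ rather than merely to something of order $k\cdot t$. First I would verify the running time. A token walks at most $2k$ steps on its own subgraph, and every time it enters a fresh orchid it makes a detour of length at most $2(d + t/p)$ to the nearest branch node plus a pause of length $\ell = \lceil 4k/p\rceil$. By hypothesis a token enters at most $\outdeg_{\max}(\fR^a_i) \le (\log^{(i)} k)^2/2 = p/2$ distinct orchids, so the total detour-plus-pause cost is at most $(p/2)\cdot(2d + 2t/p + \ell) \le pd + t + 2k + O(1) \in O(k)$, using $t \le k$ and $d \in O(k)$ (the latter because otherwise \RulingSubgraphs\ would have already terminated after \RulingOrchids, as noted before \cref{lem:dirElimWalk}); precomputing shortest paths to branch nodes costs $O(t+d)\subseteq O(k)$ rounds and zero rounds thereafter. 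Hence the whole procedure runs in $O(k)$ rounds.

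Next, the outdegree bound. The outdegree of a surviving subgraph $H$ in the contention digraph is the number of orchids $F(H')$ with $H' \in \fR^b_i$ that nontrivially intersect $H$. Since $\fR^b_i \subseteq \fR^a_i$ and removing subgraphs can only remove edges of the contention digraph, $\outdeg_{\max}(\fR^b_i) \le \outdeg_{\max}(\fR^a_i) \le (\log^{(i)} k)^2/2$; this bullet is immediate and requires no new argument. For the ruling/covering property (third bullet), I would again invoke the postmortem-walk argument from the proof of \cref{lem:elimwalk}: a token $T'$ of some $H' \in \fR^a_i\setminus\fR^b_i$ that gets deleted is eliminated by a token whose own (possibly later) postmortem continuation has total length $O(k)$ (each segment is a prefix of a walk of length $O(k)$, and there is one segment per elimination event, but the concatenation still tracks a single physical trajectory of length $O(k)$ since every token's whole journey is $O(k)$ rounds). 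So the root of $F(H')$ is within distance $O(k)$ of the root of a surviving subgraph, giving the $O(k)$ covering radius.

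The main obstacle, and the heart of the lemma, is the indegree bound. Fix a surviving subgraph $H_i \in \fR^b_i$ and count its in-neighbors: these are subgraphs $H_s \in \fR^b_i$ whose token $T_s$, while running $\sigma_s$, entered $F(H_i)$ and therefore (being its first visit) walked to some branch node $b$ of the stem of $F(H_i)$ and paused there for $\ell$ steps. I want to argue that few such $T_s$ can coexist. There are $p$ branch nodes. Consider one branch node $b$ and the set of surviving tokens that pause at $b$ at some point. If two such tokens have overlapping pause intervals at $b$, they occupy the same node simultaneously and one is deleted — contradiction. So the pause intervals at $b$ of surviving in-neighbors are pairwise disjoint; each has length $\ell$, and all of them lie within the $O(k)$-round horizon of the algorithm, so at most $O(k)/\ell = O(k \cdot p / k) = O(p) = O((\log^{(i)} k)^2)$ surviving tokens pause at $b$. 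Summing over the $p$ branch nodes gives $O(p^2) = O((\log^{(i)} k)^4)$ in-neighbors, which is exactly the claimed bound. The one subtlety to handle carefully here is that different tokens are at different points of their loops at the same global time — but this only helps, since two tokens that are both \emph{currently paused} at $b$ (regardless of which loop iteration each is in) still physically collide, so the disjointness-of-pause-intervals argument is unaffected; I would state this explicitly. I would also note that the hypothesis $\indeg_{\max}(\fR^a_i)\in O((\log^{(i-1)} k)^4)$ is not actually needed for the indegree conclusion (the pausing argument bounds it outright) but is consistent with $\fR^b_i\subseteq\fR^a_i$, and that for the iteration to make sense one checks $(\log^{(i)} k)^4 \le (\log^{(i-1)} k)^4$, feeding back into the hypothesis of the next call to \cref{lem:outDegreeReduction}.
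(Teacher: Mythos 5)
Your argument mirrors the paper's proof in structure and in all the essential ideas: the outdegree bound is immediate from $\fR^b_i \subseteq \fR^a_i$, the covering radius comes from the postmortem-walk argument, and the indegree bound comes from the observation that two surviving tokens cannot have overlapping pause intervals at the same branch node, giving $O(k/\ell) = O(p)$ survivors per branch node and $O(p^2) = O((\log^{(i)}k)^4)$ in total. Your side remark that the indegree hypothesis on $\fR^a_i$ is not actually used is also correct.

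There is, however, one small but genuine gap in your running-time argument. You bound the detour cost by $pd + t + 2k + O(1)$ and conclude $O(k)$ ``using $t \le k$ and $d \in O(k)$.'' Two issues: the ``$O(1)$'' should be $p/2$ (since $(p/2)\lceil 4k/p\rceil \le 2k + p/2$), which is fine as $p/2 \le k$; but more importantly, $d \in O(k)$ alone does \emph{not} give $pd \in O(k)$. With $p$ as large as $(\log^{(i)} k)^2$ and $d$ as large as $\Theta(k)$ (both permitted by your stated hypotheses), $pd$ would be $\omega(k)$. What you actually need is $p \in O(k/d)$, which holds because $\fR^a_i \subseteq \fR_2$ and \cref{lem:rulingorchids} bounds $\outdeg_{\max}(\fR_2)$ by $2k/d$, so $p = 2\outdeg_{\max}(\fR^a_i) \le 4k/d$ and thus $pd \le 4k$. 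This is exactly the fact the paper invokes at this point; without it, the $O(k)$ running-time claim would not follow.
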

\begin{proof}
    Let us analyze how many rounds each token spends either paused, walking on its own subgraph, and walking between its subgraph and a branch node whose orchid it encountered.
    
    As every subgraph in $\fR^a_i$ has a maximum outdegree of at most $p/2$ in the contention digraph $C(\fR^a_i,G)$, each token waits only once on the stem of each encountered orchid, and each token pauses for $\ell$ rounds once at a branch node where it pauses, the total waiting time for each token is at most $p/2 \cdot \ell \leq p/2 \cdot (4k/p + 1) \leq 3k$ (as $p/2 = \outdeg_{\max}(\fR^a_i) \leq k$). 
    
    On its own subgraph $H$, each token does a walk of length at most $2k$. All that remains to analyze are the walks between a token's subgraph and the nearest branch node of an encountered orchid.

Walking to the nearest branch node is a matter of at most $d+t/p$ steps, since the stem can be reached in $d$ steps from any point of the orchid, and every node in the stem is at distance at most $t/p$ from the nearest branch node. Since each token encounters at most $p/2$ orchids, and every walk to the nearest branch node is done in both directions, at most $p\cdot d+t$ steps are spent doing these parts of the walk. The stem being a subgraph of $H$ implies $t \leq k$, and we have $p \in O(k/d)$
from the very first outdegree reduction (\RulingOrchids). Thus at most $O(k)$ steps of this kind are taken by each token, and we conclude that the overall number of time steps for a token to finish its walk is bounded by $O(k)$ as well.
    Note that the computation of the branch nodes at the beginning of the algorithm can be performed in $0$ rounds (allowing some precomputation whose complexity is dominated by $O(k)$) analogously to the proof of~\cref{lem:dirElimWalk}.
    Hence, \cref{alg:pausingelimwalk} can be performed in $O(k)$ rounds.

    With the same argument as in the proofs of \cref{lem:elimwalk,lem:dirElimWalk} (based on the notion of a postmortem walk), we also obtain that each subgraph contained in $\fR^a_i \setminus \fR^b_i$ is at distance at most $O(k)$ from some subgraph contained in $\fR^b_i$.

    Next, we bound the maximum indegree of $C(\fR^b_i)$.
    Let $H' \in \fR^b_i$ be an in-neighbor of $H$ in $C(\fR^b_i,G)$, which implies that $H'$ intersects the orchid of $H$. The token $T'$ corresponding to $H'$ therefore visited a branch node of $H$ during its walk, and stayed there for $\ell = \lceil 4k/p \rceil$ rounds. Any other subgraph $H'' \in \fR^b_i$ that is an in-neighbor of $H$ cannot have had its token $T''$ visit the same branch node during any of these $\ell$ rounds, for otherwise the tokens of $H'$ and $H''$ would have met and one of them would have been eliminated. Therefore, each branch node is visited by at most $O\parens*{\frac{k}{\ell}}= O(p)$ surviving tokens during the walk. Since each orchid contains at most $p$ branch nodes, we obtain the upper bound of $O(p^2)$ on the maximum indegree of $C(\fR^b_i,G)$.
    By observing that $p = 2\outdeg_{\max}(\fR^a_i) \leq (\log^{(i)} k)^2$, we obtain the bound on the maximum indegree of $C(\fR^b_i,G)$ stated in the lemma.

    Finally, we observe that the maximum outdegree of $C(\fR^b_i,G)$ is at most $(\log^{(i)} k)^2/2$ since $\fR^b_i \subseteq \fR^a_{i}$ and the lemma assumes $\outdeg_{\max}(\fR^a_{i}) \leq (\log^{(i)} k)^2/2$.

\end{proof}

\subsection{Putting Everything Together}

In this section, we finally prove~\cref{thm:rulingsubgraphs} by showing that the set of subgraphs computed by~\cref{alg:rulingsubgraphs} satisfies the conditions given in~\cref{thm:rulingsubgraphs} and bounding the runtime of~\cref{alg:rulingsubgraphs}.

\RulingSubgraphsThm*

\begin{proof}
We analyze the effect of each part of the algorithm, most notably its contribution to the algorithm's runtime and to the maximum distance of every deleted subgraph of the input family $\fR$ to the nearest subgraph in the output family $\fR'$.

    \paragraph{Initialization (Steps 1 to 4).}
    We start by observing that every subgraph $H_i$ in $\fR$ can discover its entire topology in $O(k)$ rounds, and in particular, discover its root node $v_i$ so as to put a token on it.
    This allows~\cref{alg:rulingsubgraphs} to compute a walk $\sigma_i$ of length $< 2|H_i|=2k$ starting from $v_i$ and visiting every node of $H_i$. Such a walk exists and is easily computed, since, for example, a tree traversal of any spanning tree of $H_i$ rooted at $v_i$ has the stated length of $< 2\card{H_i}$.

    \paragraph{Preventing Orchid Overlap (Steps 5 and 6).}
    \cref{alg:rulingsubgraphs} then applies {\EliminationWalk} and {\RulingOrchids}, which by \cref{lem:elimwalk,lem:rulingorchids} results in orchids of the the resulting family $\fR_2$ being at distance at least $d+1$ in $G$ and bounds the maximum outdegree of the contention digraph of $\fR_2$ to $O(k/d)$.
    This part of the algorithm takes $O((t+d)(d\log \Delta + \log t + \log^*n))$ rounds, and
    the distance between every deleted subgraph $H \in \fR \setminus \fR_2$ and the output family $\fR_2$ is at most $O((t+d)(d\log \Delta + \log t))$.

    The test in step 7 implies that from there onwards, $k \geq 2d$.
    
    \paragraph{Initial Indegree Reduction (Step 10).}
    {\DirectionalEliminationWalk} then produces a family $\fR_3$ whose contention digraph has maximum indegree $O(k^2)$, by \cref{lem:dirElimWalk}. \Cref{lem:dirElimWalk} additionally guarantees that this operation is performed in $O(k)$ rounds, and every removed subgraph (in $\fR_2 \setminus \fR_3$) is at a distance of at most $O(k)$ of a subgraph in $\fR_3$. 

    \paragraph{Iterated Degree Reduction (Steps 11 to 13).}
    The algorithm then repeatedly applies {\OutdegreeReduction} and {\PausingEliminationWalk}, in alternation. In the $i$th iteration of the for loop, we start from a subgraph family $\fR^b_{i-1}$, which {\OutdegreeReduction} narrows down to a family $\fR^a_i$, which {\PausingEliminationWalk} then thins out to a family $\fR^b_i$.
    Each iteration of the loop results in a family with smaller upper bounds on the in- and outdegrees of its contention digraph. 
    More precisely, it maintains the following two properties:
    \begin{itemize}
        \item The contention digraph $C(\fR^b_{i},G)$ has maximum outdegree $O((\log^{(i)} k)^2)$ and maximum indegree $O((\log^{(i)} k)^4)$ (by \cref{lem:pausingwalk}),
        \item The contention digraph $C(\fR^a_{i},G)$ has maximum outdegree $O((\log^{(i)} k)^2)$ and maximum indegree $O((\log^{(i-1)} k)^4)$ (by \cref{lem:outDegreeReduction}).
    \end{itemize}
    
    \cref{alg:rulingsubgraphs} applies the two subroutines until the maximum in- and outdegrees in the contention digraph are at most $O(\log^* n)$. As each iteration of the loop reduces the maximum in- and outdegrees of the contention digraph to a $\poly \log$ of their previous values, it only takes $O(\log^* k)$ iterations of the loop to get to in- and outdegrees of $O(\log^* n)$.
    During those loop iterations, by \cref{lem:pausingwalk}, {\PausingEliminationWalk} contributes a total of $O(k \log^*k)$ rounds to the complexity of the algorithm and $O(k \log^*k)$ to the maximum distance of deleted subgraphs to surviving ones.
    
    We now analyze the contribution of {\OutdegreeReduction} to the complexity of the algorithm and to the maximum distance of deleted subgraphs to the surviving family. Let $i_{\max} \leq O(\log^* k - \log^* \log^* n)$ be the index of the last iteration.

    The $i$th call to {\OutdegreeReduction} contributes $O((t+d + \dlogsup{i})\cdot (\log^{(i)} k + \log^* n))$ to the complexity, by \cref{lem:outDegreeReduction}, where $\dlogsup{i} \in O(k / (\log^{(i)} k)^2)$. Summing their contributions gives an overall bound of order at most
    \begin{align*}
        &\sum_{i=1}^{i_{\max}} (t+d + \dlogsup{i})\cdot (\log^{(i)} k + \log^* n)
        \\
        \leq\ & (t+d) i_{\max} \log^* n 
        + (t+d)\sum_{i=1}^{i_{\max}} \log^{(i)} k
        + \sum_{i=1}^{i_{\max}} \dlogsup{i} \cdot (\log^{(i)} k + \log^* n)
        \\
        \leq\ &  (t+d) i_{\max} \log^* n 
        + (t+d)\sum_{i=1}^{i_{\max}} \log^{(i)} k
        + k\cdot \sum_{i=1}^{i_{\max}} \frac{1}{\log^{(i)} k} 
        + k\cdot \sum_{i=1}^{i_{\max}} \frac{\log^* n}{(\log^{(i)} k)^2}\ .
    \end{align*}

    For the first sum of this last line, we have that $\sum_{i=1}^{i_{\max}} \log^{(i)} k < 2\log k$ since the terms decrease faster than a geometric series. This gives us an $O((t+d)\log k)$ term.
    For the second sum of this last line, we have that $k\cdot \sum_{i=1}^{i_{\max}} \frac{1}{\log^{(i)} k} \leq O(k)$ because for all but $O(1)$ many of the $i_{\max} \leq O(\log^* k)$ iterations, we have $\log^{(i)} k \geq \log^* n$. 
    Indeed, recall that $\log^{(i_{\max})} k \leq \sqrt[1/4]{\log^* n}$, and $\log^{(i_{\max}-1)} k > \sqrt[1/4]{\log^* n}$. When $n$ is above a sufficiently large universal constant, this means that $\log^{(i)} k> \log^* n > \log^* k$ for all $i \leq i_{\max}-2$.
    For the third sum, we similarly obtain $k\cdot \sum_{i=1}^{i_{\max}} \frac{\log^* n}{(\log^{(i)} k)^2} \leq O(k \log^* n)$.
    In total, this gives a bound of
    \[
        O((t+d)(i_{\max}\log^*n + \log k) + k\log^*n)\ ,
    \]
    which simplifies to $O((t+d)(\log^*n + \log k) + k\log^*n)$ by remarking that $\log k$ dominates $i_{\max}\log^*n $ when $k > 2^{(\log^* n)^2}$, and $i_{\max} = O(1)$ when $k \leq 2^{(\log^* n)^2}$.

    For the contribution of {\OutdegreeReduction} to the distance of deleted subgraphs to surviving ones, the total contribution is at most of order
    \begin{align*}
        & \sum_{i=1}^{i_{\max}} ( \frac{k}{\log^{(i)} k} + (t+d)\log^{(i)} k)
        \\
        = \ & \sum_{i=1}^{i_{\max}} \frac{k}{\log^{(i)} k} + (t+d)\sum_{i=1}^{i_{\max}}\log^{(i)} k
    \end{align*}

    For the rightmost sum on the last line, note that $\sum_{i=1}^{i_{\max}} \log^{(i)} k < 2\log k$. This gives a contribution of $O((t+d)\log k)$ for the rightmost part of this last equation. For the leftmost sum, recall that $\log^{(i_{\max})} k\leq \sqrt[1/4]{\log^* n}$, and $\log^{(i_{\max}-1)} k> \sqrt[1/4]{\log^* n}$. When $n$ is above a sufficiently large universal constant, this means that $\log^{(i)} k> \log^* n > \log^* k$ for all $i \leq i_{\max}-2$. The sum can thus be decomposed as
    
    \[
    \sum_{i=1}^{i_{\max}} \frac{k}{\log^{(i)} k} \leq \sum_{i=1}^{i_{\max}-2} \frac{k}{\log^{(i)} k} + \frac{k}{\log^{(i_{\max}-1)} k} + \frac{k}{\log^{(i_{\max})} k} \leq O(k) + k + k\ .
    \]

    This gives a total increase of the distance to the nearest surviving subgraph of at most $O((t+d)\log k + k)$ for the for-loop.

    \paragraph{Final MIS Computation (Step 14).}
    After the for-loop, the contention digraph has maximum in- and outdegree $\Delta' \leq O(\log^* n)$.
    We now compute an MIS on the contention digraph, ignoring the orientation of the edges. 
    Computing this MIS on the contention digraph is done in $O(\Delta' + \log^* n)$ rounds of communication over the virtual graph, and each round of communication over the virtual graph can be simulated in $O(k)$ rounds of communication over the graph $G$. This step thus takes $O(k \log^* n)$. Since every deleted subgraph at this step is a direct neighbor of a subgraph that is part of the MIS, the distance to the nearest surviving subgraph is at most $O(k)$.
    
    Every subgraph still left in $\fR$ has no neighboring subgraph in the contention digraph.
\end{proof}

\section{Application: Faster \texorpdfstring{$\Delta$}{Δ}-Coloring in LOCAL}
\label{sec:detresults}
In this section, we will make use of \cref{thm:rulingsubgraphs} to design faster algorithms for $\Delta$-coloring.

\subsection{Finding Nice LDCCs}
\label{sec:nice-ldccs}

Our first step towards our improved $\Delta$-coloring algorithm is to discover useful subgraphs within moderate distance of every node in the graph. Intuitively, each of these subgraphs is always easy to color with colors from the set $[\Delta]$, even if constrained by a partial coloring outside it. They are also chosen to be relatively sparse, i.e., to contain as few nodes and edges as possible while being easy to color in the sense presented before.
We refer to these graphs as \emph{nice LDCCs}, which are DCCs with some additional properties (see \cref{def:dcc,def:ldcc,def:nice-ldcc}). A DCC is a $2$-connected subgraph that is neither a clique nor an odd cycle. LDCCs additionally exclude even cycles. Nice LDCCs are LDCCs with an additional sparseness property. We extend a proof by \cite{GHKM_dc21} to show that every node either has a node of degree $< \Delta$ or an LDCC in its distance-$2\log_\Delta n$ neighborhood. We then show how such subgraphs contain a sparse subgraph with the same properties. 

As the proof of this second lemma is relatively straightforward and lengthy, we only sketch it here with the full proof deferred to \cref{sec:nldcc-proof}.

\paragraph{Finding Locally Degree-Choosable Components}
Just like DCCs, we can show that the absence of LDCCs and of nodes of degree $<\Delta$ in the distance-$k$ neighborhood of a node implies some expansion in that neighborhood. The following lemma follows from easy arguments and \cref{lem:DCCorexpand} due to~\cite{GHKM_dc21}. Intuitively, we observe that cycles of length $>3$, both even and odd, are generally good for expansion.

\begin{lemma}
    \label{lem:LDCCorexpand}
    For every node $v$ in a graph $G$, there exist either a node of degree less than $\Delta$ at distance at most $k$ from $v$ or an LDCC of radius at most $k$ in the distance-$k$ neighborhood of $v$, or there are at least $(\Delta-1)^{\floor{k/2}}$ nodes at distance $k$ of $v$.
\end{lemma}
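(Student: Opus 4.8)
The plan is to reduce \cref{lem:LDCCorexpand} to \cref{lem:DCCorexpand} (the analogous statement of~\cite{GHKM_dc21} for DCCs) by handling the single discrepancy between the two statements: an LDCC is a DCC that is \emph{not} an even cycle, so the only thing we need to rule out is that the DCC guaranteed by \cref{lem:DCCorexpand} is an even cycle with no extra structure. First I would apply \cref{lem:DCCorexpand} to $v$ with the given radius $k$. If its conclusion yields a node of degree $<\Delta$ at distance $\leq k$, or at least $(\Delta-1)^{\lfloor k/2\rfloor}$ nodes at distance $k$, we are immediately done since these are exactly two of the three alternatives in \cref{lem:LDCCorexpand}. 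So the only remaining case is that \cref{lem:DCCorexpand} hands us a degree-choosable component $D$ of radius at most $k$ inside the distance-$k$ neighborhood of $v$, and we must argue that either $D$ is already an LDCC, or we can find an LDCC (or a low-degree node, or expansion) nearby.

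If $D$ is not an even cycle, it is an LDCC by \cref{def:ldcc} and we are done. So assume $D$ is an even cycle $C_{2\ell}$ with all vertices of degree exactly $2$ \emph{within $G$} (if some vertex of $D$ has a neighbor outside $D$ in $G$, then since $D$ is $2$-connected and induced, that vertex has degree $\geq 3$ in $G$; I would need to check whether this directly gives an LDCC — indeed an even cycle with a pendant-like attachment through a degree-$3$ node, once one looks at the $2$-connected block structure, is a Gallai tree only if... — actually here it is cleaner to observe: a vertex of $D$ with degree $\geq 3$ in $G$ forces $\Delta \geq 3$, which we assume, and more importantly if every vertex of $D$ has degree $2$ in $G$ then $D$ is a connected component of $G$). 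In the subcase where $D = C_{2\ell}$ is an entire connected component of $G$ consisting solely of degree-$2$ nodes: then $\Delta = 2$ would be forced on that component, contradicting $\Delta \geq 3$ unless $G$ has other components — but a connected component of maximum degree $2$ with $\Delta(G) \geq 3$ means we are in a component where the guarantees are vacuous... Here I would instead lean on the fact that a degree-$2$ node \emph{is} a node of degree $<\Delta$ (since $\Delta \geq 3$), so any vertex of $D$ is itself a node of degree less than $\Delta$ at distance $\leq k$ from $v$, putting us in the first alternative. This is the key simple observation: \emph{even cycles in a graph of maximum degree $\Delta \geq 3$ either contain a node of degree $< \Delta$ (if some cycle vertex has $G$-degree $2$), or every cycle vertex has degree $\geq 3$, in which case the component properly contains the cycle and we should look for better structure.}

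For the last remaining subcase — $D = C_{2\ell}$ but some (hence, going around, we can find a specific) vertex $w \in V(D)$ has degree $\geq 3$ in $G$ — I would argue directly that the block of $G$ containing an appropriate cycle through $w$ is an LDCC, or re-run \cref{lem:DCCorexpand} from $w$ with a slightly reduced radius budget. Concretely: since $w$ has a neighbor $x \notin V(D)$, consider the subgraph $D' = D + x$; it contains the even cycle plus a pendant edge. Its (unique nontrivial) block is still just $C_{2\ell}$, so $D'$ itself is not $2$-connected, so this naive augmentation does not immediately work, and I expect \textbf{this is the main obstacle}: converting "the even cycle is not the whole component" into "there is an actual LDCC nearby" requires a genuine structural argument, not just bookkeeping. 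The cleanest route is probably: either the extra edges around $C_{2\ell}$ create a chord or a second internally-disjoint path between two cycle vertices — which merges into a larger $2$-connected subgraph that is not a Gallai tree (since it has a block which is an even cycle with a chord, i.e.\ contains two cycles sharing a path, hence is degree-choosable and not an even cycle, hence an LDCC) — or they do not, in which case we can keep walking outward and, by \cref{lem:DCCorexpand} applied with the remaining radius from a suitable vertex, either find a low-degree node, find a DCC that is not an even cycle, or get expansion; the radius loss is absorbed since the original $C_{2\ell}$ has radius $\ell \leq k$ and we only walk a bounded amount further. I would carry out the cases in the order: (1) low-degree node present — done; (2) expansion present — done; (3) DCC present and not an even cycle — done; (4) DCC is an even cycle with a degree-$<\Delta$ vertex — reduces to (1); (5) DCC is an even cycle all of whose vertices have degree $\geq 3$ — invoke the structural argument above to upgrade to a genuine LDCC within distance $O(k)$, or iterate. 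The write-up should be short precisely because \cref{lem:DCCorexpand} does the heavy lifting; the only real content is the "even cycles are good for expansion (or already carry an LDCC/low-degree witness)" observation flagged in the paper's preamble to the lemma.
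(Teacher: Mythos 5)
Your proposal is structurally different from the paper's proof, and the sticking point is exactly the one you flagged yourself (``I expect this is the main obstacle''): your Case~(5) does not close, and the escape hatches you sketch for it do not hold up.

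The specific problems with Case~(5) are the following. First, ``the block of $G$ containing $C_{2\ell}$ is an LDCC'' is not guaranteed: since $D=C_{2\ell}$ is a DCC it is an \emph{induced} $2$-connected subgraph, so there are no chords, and the block of $G$ containing it can very well just be $C_{2\ell}$ itself, with every cycle vertex being a cut vertex supporting a separate branch of the block tree. Second, ``re-run \cref{lem:DCCorexpand} from $w$ with a slightly reduced radius budget'' fails because there is no budget to reduce in general: \cref{lem:DCCorexpand} only promises a DCC somewhere inside $N^k(v)$, so the even cycle may sit at distance $\Theta(k)$ from $v$, leaving you no remaining radius from which to re-expand while staying in $N^k(v)$. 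Third, even if you had budget left, the re-run can hand you \emph{another} even cycle, so the iteration does not terminate on its own; some separate potential function or structural argument is needed, and none is given.

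The paper avoids all of this by arguing in the contrapositive and modifying the graph rather than upgrading a bad witness. It assumes that $N^k(v)$ contains no LDCC and no low-degree node, so every block in $N^k(v)$ is a cycle or clique, and then repeatedly replaces each cycle of length $>3$ by the triangle on its three $v$-closest vertices (adding one edge to preserve degrees, deleting the remaining cycle vertices and everything hanging off them). This keeps every surviving degree equal to $\Delta$, removes all cycles of length $>3$, and only decreases the number of nodes. The resulting neighborhood has blocks that are all cliques, hence contains no DCC at all, so \cref{lem:DCCorexpand} is forced into its expansion alternative; since the transformation never added nodes, the original neighborhood expands at least as much. The key advantage of this route over yours is that it never has to control \emph{which} alternative of \cref{lem:DCCorexpand} fires or \emph{which} DCC it produces --- it simply makes the first two alternatives impossible before invoking the lemma. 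If you want to keep a direct, casewise style, you would need a genuine argument that a long even cycle inside a degree-$\Delta$ graph with no low-degree node and no nearby LDCC already forces $(\Delta-1)^{\lfloor k/2\rfloor}$ expansion, which is essentially what the paper's surgery proves, just packaged differently.
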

\begin{proof}
    \Cref{lem:DCCorexpand} states the same result for DCCs. Note that the only 2-connected subgraphs which are DCCs but not LDCCs are cycles of even length.

    Consider a node $v$, and suppose its distance-$k$ neighborhood is free of LDCCs and nodes of degree $< \Delta$.
    That is, the block graph of its distance-$k$ neighborhood is a bipartite tree where the blocks are all cycles and cliques. Now, consider every block that is a cycle. Each of them has one node $u$ that is closest to $v$ in the graph $G$, and two nodes $u',u''$ which are at distance $\dist(u,v)+1$ from $v$.
    Let us modify the distance-$k$ neighborhood of $v$ as follows: while it contains a cycle of size $>3$, we consider one such cycle $C$, consider $u$, $u'$ and $u''$ the nodes closest to $v$ in that cycle, remove from the graph all other nodes from $C$, and add an edge between $u'$ and $u''$
(which preserves their degrees). For every deleted node, we also delete all nodes whose shortest path to $v$ went through that node. From the perspective of the block graph, we replace each cycle of size $>3$ by one of size $3$, delete the edges between this cycle and the nodes that are no longer part of it, and remove all blocks and vertices that are no longer connected to the node $v$ as a result.

    Applying this transformation to the distance-$k$ neighborhood of $v$ is another valid distance-$k$ neighborhood with a lower or equal number of nodes, whose degrees did not change, and which contains no cycles of size $>3$ and no $2$-connected graphs other than cliques. We know from  \cref{lem:DCCorexpand} that this new neighborhood contains at least $(\Delta-1)^{\floor{k/2}}$ nodes, which implies the same lower bound for the initial number of nodes before the transformation.
\end{proof}

\begin{corollary}
    \label{lem:exists-ldcc}
    In a graph $G$ of maximum degree $\Delta \geq 3$, the distance-$2 \log_{\Delta-1} n$ neighborhood of every node $v$ contains either a node of degree less than $\Delta$ or an LDCC.
\end{corollary}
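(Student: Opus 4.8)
The plan is to obtain \cref{lem:exists-ldcc} from \cref{lem:LDCCorexpand} by exactly the one-line counting argument that derives \cref{lem:exists-dcc} from \cref{lem:DCCorexpand}: instantiate the expansion lemma with a radius $k \in \Theta(\log_{\Delta-1} n)$ chosen large enough that its ``expansion'' alternative would force more than $n$ vertices to exist, so that only the two structural alternatives can occur.

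Concretely, I would fix an arbitrary node $v$ and set $k := 2\log_{\Delta-1} n$, rounded up to an integer so that $\floor{k/2} \ge \log_{\Delta-1} n$ (the same rounding convention that is implicit in the cited \cref{lem:exists-dcc}); since we assume $\Delta \ge 3$, we have $\Delta-1 \ge 2$ and therefore $(\Delta-1)^{\floor{k/2}} \ge (\Delta-1)^{\log_{\Delta-1} n} = n$. Then I would apply \cref{lem:LDCCorexpand} to $v$ with this $k$ and argue by contradiction: assume the distance-$k$ neighborhood of $v$ contains neither a node of degree less than $\Delta$ nor an LDCC. This assumption rules out the first alternative of \cref{lem:LDCCorexpand} (a node of degree $<\Delta$ within distance $k$ of $v$ would lie in that neighborhood) and the second alternative (an LDCC of radius at most $k$ contained in the distance-$k$ neighborhood of $v$ is in particular an LDCC inside that neighborhood). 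Hence the third alternative holds, producing at least $(\Delta-1)^{\floor{k/2}} \ge n$ vertices at distance exactly $k$ from $v$; adding $v$ itself, which is at distance $0 \ne k$ from $v$, yields at least $n+1$ distinct vertices of $G$, contradicting $\card{V(G)} = n$. Thus the distance-$k$ neighborhood of $v$ must contain a node of degree less than $\Delta$ or an LDCC, which is the claim.

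I do not expect a genuine obstacle here: all the combinatorial content — the block-graph analysis together with the replacement of each cycle of length $>3$ by a triangle, which is what reduces the LDCC-free case to the DCC-free case of \cref{lem:DCCorexpand} — has already been carried out inside the proof of \cref{lem:LDCCorexpand}, so \cref{lem:exists-ldcc} is a genuine corollary. The only step deserving a half-sentence of care is the rounding in the choice of $k$, which is precisely what guarantees $(\Delta-1)^{\floor{k/2}} \ge n$ and is handled identically to the cited statement \cref{lem:exists-dcc}.
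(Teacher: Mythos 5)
Your derivation is correct and is precisely the (implicit) intended one: the paper states this as an unproved corollary of \cref{lem:LDCCorexpand}, leaving the reader to supply exactly the counting argument you give, mirroring how \cref{lem:exists-dcc} follows from \cref{lem:DCCorexpand} in \cite{GHKM_dc21}. The choice $k = 2\log_{\Delta-1} n$, the observation $(\Delta-1)^{\floor{k/2}} \ge n$, and the contradiction with $\card{V(G)} = n$ are all as intended, and your remark that the combinatorial work lives entirely in the proof of \cref{lem:LDCCorexpand} is accurate.
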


\paragraph{Extracting Nice LDCCs from LDCCs}

As an ingredient for our improved $\Delta$-coloring algorithms, we will show that any LDCC $H$ contains an LDCC $H'$ with a number of nodes that is linear in the diameter of $H$, and almost exclusively nodes of degree $2$ (in the induced subgraph).
This will be very useful for applying \cref{thm:rulingsubgraphs}, as it guarantees that we can choose a small value for $k$ in \cref{thm:rulingsubgraphs}. The extra condition about the degree distribution makes coloring those subgraphs a breeze in later parts of the algorithm.

\begin{restatable}{lemma}{niceLDCClemma}[Small induced nice LDCCs]
    \label{lem:sizeLDCC}
    Let $H$ be an LDCC of strong\footnote{By \emph{strong} diameter, we mean the diameter inside of $H$, not $G$. Note $H$ has diameter $k \geq 2$ since it is not a clique.} diameter $k$. Then there exists an induced subgraph $H'$ of $H$ with the following properties:
\begin{enumerate}
        \item $H'$ is an LDCC,
        \item $H'$ contains at most $4k$ nodes,
        \item $H'$ contains exactly two nodes of degree $3$, while all of its other nodes have degree $2$.
    \end{enumerate}
\end{restatable}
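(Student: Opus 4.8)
The plan is to start from the LDCC $H$, which by the preceding lemmas is a $2$-connected graph that is neither a clique nor a (odd or even) cycle, and to carve out of it a "theta-like" subgraph: two vertices joined by three internally disjoint paths. First I would fix two vertices $x,y$ of $H$ realizing the diameter $k$, and a shortest $x$--$y$ path $P_1$ of length $k$. Since $H$ is $2$-connected, by Menger's theorem (or the standard ear-decomposition fact) there is a second $x$--$y$ path $P_2$ internally disjoint from $P_1$; among all such choices take $P_2$ of minimum length. The union $P_1\cup P_2$ is a cycle $\mathcal{C}$ through $x$ and $y$. If $H$ were just this cycle we would contradict the assumption that $H$ is not a cycle, so there is a further vertex or edge of $H$ attached to $\mathcal{C}$; again using $2$-connectivity there is a path $P_3$ in $H$ with both endpoints on $\mathcal{C}$, internally disjoint from $\mathcal{C}$, and of length $\ge 1$. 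Relabelling, let $a,b$ be the endpoints of $P_3$ on $\mathcal{C}$; then $a$ and $b$ are joined inside $\mathcal{C}\cup P_3$ by three internally disjoint paths (the two arcs of $\mathcal{C}$ between $a$ and $b$, plus $P_3$). Let $H'$ be the induced subgraph of $H$ on the vertex set of these three paths, i.e.\ $H' = H[V(\mathcal{C})\cup V(P_3)]$ — but I would actually take $H'$ to be the subgraph consisting of exactly the edges of the three internally disjoint $a$--$b$ paths, viewed as an induced subgraph after we argue no extra chords exist (see below).

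The three properties are then checked as follows. For (3): in the theta-graph formed by three internally disjoint $a$--$b$ paths, exactly $a$ and $b$ have degree $3$ and every internal path vertex has degree $2$; the only danger is a chord of $H$ among the chosen vertices that would raise some degree or even create a clique. I would handle this by a minimality/shortcutting argument: if a chord existed between two vertices lying on the paths, one could replace part of a path by the chord to obtain a smaller configuration, contradicting the minimal choice of $P_2$ (and of $P_3$ as a shortest attachment) — so after choosing everything minimally, the induced subgraph on the selected vertices is exactly the theta-graph, giving property (3), and in particular $H'$ is $2$-connected. For (1): a theta-graph (three internally disjoint paths between two vertices, at least two of the paths of length $\ge 2$, or more carefully: not all three paths of length $1$, which would be a multigraph anyway) is $2$-connected, is not a clique (it has two nonadjacent degree-$2$ vertices unless all paths have length $1$, which is impossible in a simple graph with only two branch vertices and three paths — at most one path can have length $1$), and is not a cycle (it has vertices of degree $3$); hence it is a DCC, and since it is not an even cycle either, it is an LDCC. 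Actually one must be slightly careful: if exactly one of the three paths has length $1$ and the theta graph happens to be $K_4$ minus an edge — but that still is not a clique and not a cycle, so it is fine; the genuinely excluded small case, a triangle with a doubled edge, cannot occur in a simple graph.

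For (2), the size bound: $P_1$ has length $k$, so $|V(P_1)|=k+1$; by minimality $P_2$ has length at most... here is the main obstacle. One must bound $|V(\mathcal{C})|$ and $|V(P_3)|$ by $O(k)$. The bound on $P_2$: since $x,y$ are at distance $k$ and $H$ has diameter $k$, any path has length $\le$ (number of vertices) which is not directly bounded; instead I would argue that the second arc $P_2$ chosen with minimum length satisfies $|P_2|\le$ something linear in $k$ because otherwise $\mathcal C$ is a long cycle and one could find a chord (again $2$-connectivity + diameter $k$ forces a short "transversal"), contradicting minimality — more precisely, a cycle of length $> 2k+1$ in a graph of diameter $k$ must have a chord, and exploiting that chord contradicts the minimal choice of $P_2$, so $|V(\mathcal{C})| \le 2k+2$. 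Similarly $P_3$ can be taken as a shortest path between its endpoints outside $\mathcal C$, and since $H$ has diameter $k$ its length is at most $k$, but we want the vertices of $P_3$ plus the relevant portion of $\mathcal{C}$ to total at most $4k$; choosing $a,b$ as the endpoints on $\mathcal C$ of a shortest such attaching path, $|V(P_3)|\le k+1$, and combined with $|V(\mathcal{C})|\le 2k+2$ we get $|V(H')|\le 3k+3$, which after a cleaner accounting (possibly first shrinking $\mathcal{C}$ to the two shorter $a$--$b$ arcs) lands within $4k$. The delicate part of the whole argument is precisely this simultaneous minimization — ensuring that the minimal choices of $P_2$ and $P_3$ are compatible, that the resulting induced subgraph has no surprise chords, and that all three path-lengths are genuinely $O(k)$; I expect the "cycle of length $>2k+1$ in a diameter-$k$ graph has a chord" step and its interaction with minimality to require the most care, and this is where I would spend the bulk of the detailed (deferred) proof.
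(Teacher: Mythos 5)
Your proposal takes a genuinely different route from the paper's: you start from a diameter-realizing pair $x,y$, build a cycle $\mathcal{C}=P_1\cup P_2$ via Menger's theorem, and then attach a third path to get a theta-graph. The paper instead first fixes a \emph{minimum} induced cycle (or a maximum clique) and then attaches a path to it, deriving the size bound from the minimality of that cycle rather than from the diameter of $H$ directly. This is not a cosmetic difference: the size-bounding step in your plan has a genuine gap.

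The claim you lean on, that ``a cycle of length $>2k+1$ in a diameter-$k$ graph must have a chord, and exploiting that chord contradicts the minimal choice of $P_2$,'' fails on both counts. First, a diameter-$k$ graph can contain arbitrarily long chordless cycles: the wheel $W_n$ (an $n$-cycle plus a hub $h$ adjacent to every rim vertex) has diameter $2$ and an induced, chordless rim cycle of length $n$, and it is a perfectly good LDCC. Second, and more damningly for your argument, even when $\mathcal{C}$ does have chords they need not shorten $P_2$: in $W_n$ with $x,y$ two antipodal rim vertices, the unique shortest $x$--$y$ path is $P_1=(x,h,y)$, so $P_2$ must avoid $h$ and is forced to be a rim arc of length $\geq n/2$; all chords of $\mathcal{C}=P_1\cup P_2$ run from rim vertices to $h$, which lies on $P_1$ and hence cannot be used by $P_2$, so minimality of $P_2$ is not contradicted even though $|V(\mathcal{C})|\approx n/2 + 2 \gg 2k+2=6$. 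The analogous bound $|V(P_3)|\leq k+1$ also does not follow from the diameter hypothesis, since the shortest path in $H$ between the endpoints of $P_3$ may well pass through $\mathcal{C}$. The ``no surprise chords'' step has the same problem: the minimality of $P_2$ and $P_3$ as you defined them does not preclude chords, because a chord may be unusable for shortening those particular paths. The remedy the paper uses --- and which your approach would need to adopt --- is to abandon the diameter pair and instead take a minimum-size induced cycle $C$ as the seed (distinguishing the triangle/clique case from the $|C|\geq 4$ case), and then choose the attaching path to minimize the second cycle it creates; the size bounds then follow from the fact that, in a minimum configuration, distances measured around the chosen cycles equal distances in $H$, which in turn are bounded by the diameter $k$.
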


\begin{proof}[Proof sketch]
    The $2$-connectivity of the LDCC implies that it contains an induced cycle. The fact that the LDCC is not a cycle implies that there exists at least one node outside this cycle, which we must be able to reach from two disjoint paths from the cycle. The bound on the diameter of the LDCC guarantees that the cycle we find as well as the paths to a node outside the cycle must be of size $O(k)$. The fact that we can avoid extraneous edges between e.g.\ two nodes on our induced cycle follows from the intuitive idea that such chords simply allow us to find a smaller cycle within the nodes we selected, with no chords.
\end{proof}

\begin{corollary}[\Cref{lem:sizeLDCC,lem:exists-ldcc}]
    \label{cor:nice-ldcc}
    In a graph $G$ of maximum degree $\Delta \geq 3$, the distance-$2\log_{\Delta-1} n$ neighborhood of every node $v$ contains either a node of degree less than $\Delta$ or a nice LDCC of size at most $16\log_{\Delta-1} n$. Also, every minimal LDCC is a nice LDCC.
\end{corollary}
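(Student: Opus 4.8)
The plan is to compose \cref{lem:exists-ldcc} with \cref{lem:sizeLDCC}. Fix a node $v$ and set $k_0 := 2\log_{\Delta-1} n$. By \cref{lem:exists-ldcc} (which inherits from \cref{lem:LDCCorexpand} that the component is produced with internal radius at most $k_0$), either the distance-$k_0$ neighborhood of $v$ contains a node of degree less than $\Delta$ --- in which case the first claim is immediate --- or it contains an LDCC $H$ all of whose vertices lie within distance $k_0$ of $v$ and whose radius measured inside $H$ is at most $k_0$. In the latter case the strong diameter of $H$ is at most $2k_0 = 4\log_{\Delta-1} n$.

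Next I would invoke \cref{lem:sizeLDCC} on $H$: with strong diameter $k \le 4\log_{\Delta-1} n$ it yields an induced subgraph $H' \subseteq H$ that is an LDCC, contains at most $4k \le 16\log_{\Delta-1} n$ vertices, and has exactly two vertices of degree $3$ with every other vertex of degree $2$; by \cref{def:nice-ldcc} this is precisely the statement that $H'$ is a nice LDCC. Since $V(H') \subseteq V(H)$ and every vertex of $H$ is within distance $k_0$ of $v$, the nice LDCC $H'$ lies in the distance-$2\log_{\Delta-1}n$ neighborhood of $v$, establishing the first assertion. For the second assertion, let $H$ be a minimal LDCC, i.e., an LDCC no proper induced subgraph of which is an LDCC. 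Applying \cref{lem:sizeLDCC} to $H$ (with its finite strong diameter) produces an induced LDCC $H' \subseteq H$, and minimality forces $H' = H$; the third property of \cref{lem:sizeLDCC} then says $H = H'$ has exactly two vertices of degree $3$ and all others of degree $2$, so $H$ is a nice LDCC.

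I expect the only point that needs care to be the first step: \cref{lem:sizeLDCC} is phrased in terms of the \emph{strong} diameter of the input LDCC, so I need the component handed over by \cref{lem:exists-ldcc} to have small strong diameter, not merely small $G$-distance to $v$. This is exactly what the ``radius at most $k$'' clause carried through \cref{lem:DCCorexpand,lem:LDCCorexpand} supplies (bounding the internal radius, hence twice that the strong diameter); if one preferred not to lean on that clause, one could instead re-derive the needed short cycle and short connecting paths directly from a BFS inside $N^{k_0}(v)$, mirroring the proof sketch of \cref{lem:sizeLDCC}. Everything else is a direct substitution, with the constant $16 = 4 \cdot 2 \cdot 2$ (the $4$ from $4k$ in \cref{lem:sizeLDCC}, one $2$ from strong diameter versus radius, one $2$ from $k_0 = 2\log_{\Delta-1}n$) coming out exactly.
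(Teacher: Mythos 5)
Your argument is essentially the paper's: compose \cref{lem:exists-ldcc} with \cref{lem:sizeLDCC} for the first claim, and observe that minimality plus \cref{lem:sizeLDCC} gives the second. The paper gives this combination only implicitly, inside the proof of \cref{thm:detLOCAL}, where it writes that ``each minimal LDCC found is a subgraph of an LDCC of strong diameter $\leq 4\log_{\Delta-1} n+1$, which necessarily contains a nice LDCC of at most $16\log_{\Delta-1} n +4$ nodes''; note that the paper's own derivation, and \cref{alg:deltacoloringlocal} with $k = 16\log_{\Delta-1}n+4$, actually use the slightly weaker bound $16\log_{\Delta-1}n+4$, while the corollary states $16\log_{\Delta-1}n$.

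The one point you flag correctly as the only delicate step is exactly where the mild imprecision lies, in both your argument and the paper's. To invoke \cref{lem:sizeLDCC} you need a bound on the \emph{strong} diameter of the LDCC $H$, but $H\subseteq N^{2\log_{\Delta-1}n}(v)$ a priori only bounds the weak diameter (distances measured in $G$), and ``radius at most $k$'' in \cref{lem:LDCCorexpand,lem:DCCorexpand} is not explicitly qualified as strong radius. Your reading (strong radius $\leq k_0$, hence strong diameter $\leq 2k_0 = 4\log_{\Delta-1}n$, hence at most $16\log_{\Delta-1}n$ nodes) is the one that reproduces the corollary's stated constant exactly, and you also offer the self-contained fallback of re-running the cycle/path extraction of \cref{lem:sizeLDCC} directly inside the BFS from $v$, which would sidestep the ambiguity entirely. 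Both the route and the constants check out; this is the same proof the paper has in mind.
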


The full proof of \cref{lem:sizeLDCC} is deferred to \cref{sec:nldcc-proof}.

\subsection{Coloring Nice LDCCs}
\label{sec:coloring-nldcc}

Nice LDCCs have the property that for an appropriate choice of stem (\cref{def:natural-orchid}), 
once we have fixed a coloring around their stem, we can easily guarantee that the stem contains a \emph{flexible node}, that is, a node with two neighbors of the same color. More precisely, we can ensure the presence of a flexible node in a nice LDCC by coloring at most one node in its stem, which leaves the uncolored part of $H$ connected from the $2$-connectivity of LDCCs.
Given a graph $G$ that is fully colored outside of a family $\fR$ of nice LDCCs and whose contention digraph is empty, this allows us to extend the coloring to the nice LDCCs by coloring nodes in order of decreasing distance to the nearest flexible node in the subgraph induced by uncolored nodes.

We define in \cref{def:natural-orchid} the \emph{natural orchids} that we consider when handling nice LDCCs in our $\Delta$-coloring algorithm (\cref{alg:deltacoloringlocal}).
In particular, these are the orchids that we use when we compute a ruling subgraph family using \cref{alg:rulingsubgraphs} (\RulingSubgraphs).
We show that this choice of orchids (and more importantly, stems) ensures that once nodes in the orchid outside a nice LDCC are colored, the nice LDCC can create a flexible node in itself while maintaining connectivity (\cref{lem:flexible-node-LDCC}). This allows for fast coloring of a family of nice $\Delta$-Extendable Components given that their orchids do not overlap (\cref{lem:coloring-nldcc}).
The natural orchids we define are $(4,1)$-orchids for a nice LDCC.

\begin{definition}[Natural root, stem, orchid]
\label{def:natural-orchid}
    Let $H$ be a nice LDCC. Its \emph{natural root} is its node of degree $3$ of highest ID. Its \emph{natural stem} if $H$ is the natural root together with its $3$ neighbors in $H$. The \emph{natural orchid} of $H$ is the orchid of radius $1$ around the natural stem.

    For $H$ a single node of degree $<\Delta$, its root, stem and orchid are simply this single node.
\end{definition}

\begin{figure}[ht]
    \centering
    \begin{tabular}{*{3}{c}}
    \includegraphics[width=0.25\linewidth,page=4]{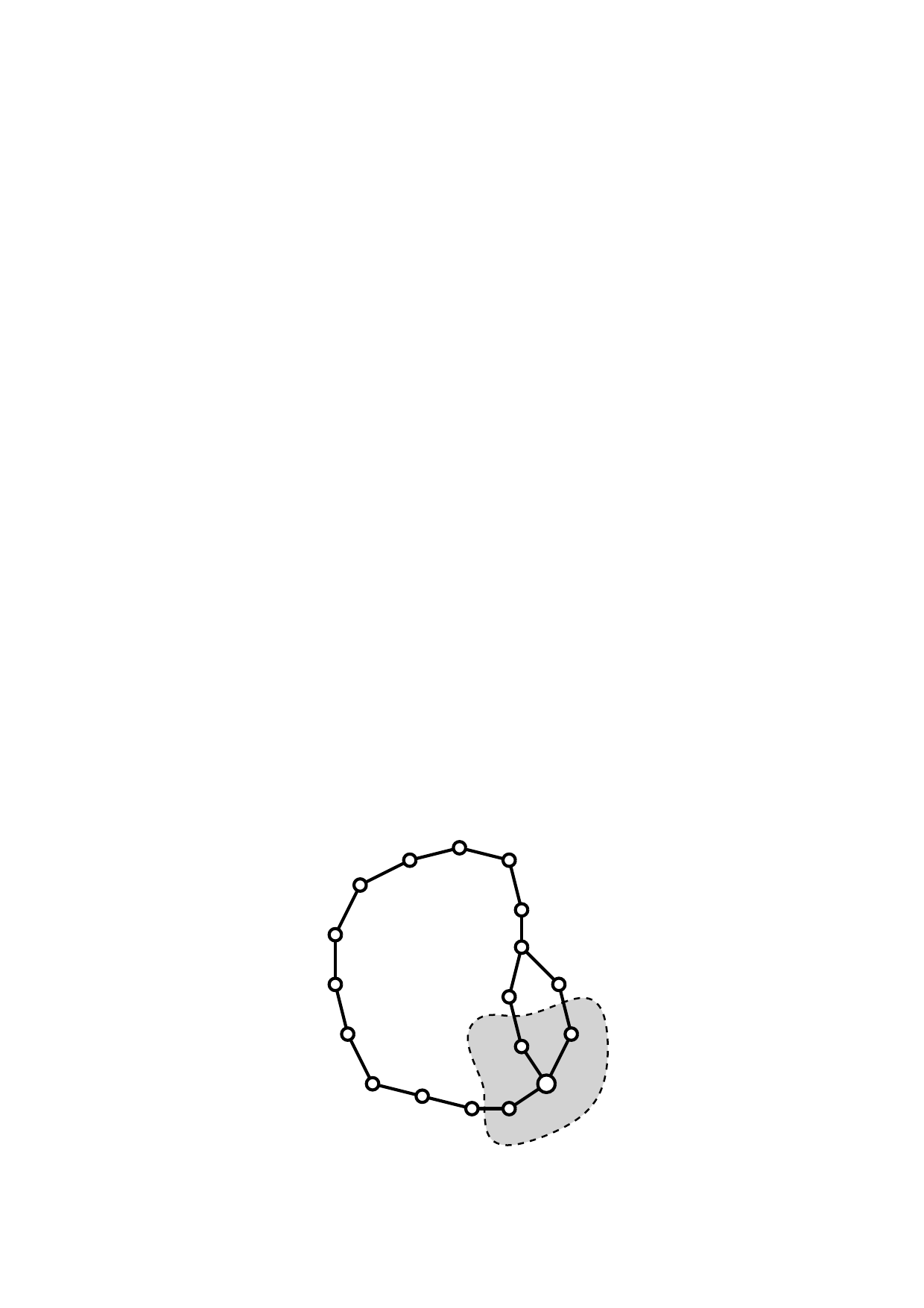}
    &
    \includegraphics[width=0.25\linewidth,page=7]{ldcc_three_cases_v2}
    &
    \includegraphics[width=0.25\linewidth,page=10]{ldcc_three_cases_v2}
    \end{tabular}
    \caption{Three possibilities for a nice LDCC's flexible node: (1) its root node is flexible; (2) another node in the stem is flexible; (3) a neighbor of the root in the stem is made flexible by coloring the root with a color that is used for this neighbor. In all three cases, nodes are colored in order of decreasing distance to the flexible node.}
    \label{fig:ldcc_coloring}
\end{figure}

\begin{lemma}
\label{lem:flexible-node-LDCC}
    Let $H$ be a nice LDCC that is fully uncolored while nodes in its natural orchid outside $H$
are colored. Then either its stem already contains a flexible node, or a flexible node can be created by coloring the root.\end{lemma}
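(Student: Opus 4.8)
The plan is to use the rigid structure of nice LDCCs together with a short case distinction closed by a pigeonhole count. A nice LDCC $H$ is $2$-connected with exactly two vertices of $H$-degree $3$ and all other vertices of $H$-degree $2$, so $H$ is a generalized theta graph: it has two branch vertices $r$ (its natural root, the higher-ID degree-$3$ vertex) and $w$, joined by three internally vertex-disjoint paths $P_1,P_2,P_3$. The natural stem is $\{r,a_1,a_2,a_3\}$, with $a_j$ the first vertex of $P_j$ after $r$; since at most one $P_j$ can have length $1$ (two would yield parallel edges), at least two of the $a_j$ have $H$-degree $2$. As the natural orchid is the radius-$1$ neighborhood of the stem, the hypothesis says that every neighbor in $G$ of a stem vertex that lies outside $H$ is colored, and --- $H$ being uncolored --- these outside neighbors are the only colored neighbors a stem vertex can have. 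I will also invoke the standing assumption of this regime (the algorithm prefers a vertex of degree $<\Delta$ over an LDCC containing one) that every vertex of $H$ has degree exactly $\Delta$ in $G$.

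Assume the stem does not already contain a flexible node (otherwise we are in case~(1) or~(2) and done). Then $r$ is not flexible, so its $\Delta-3$ neighbors outside $H$ carry $\Delta-3$ pairwise distinct colors; write $F_r$ for this color set, so at least three colors are missing at $r$. Likewise each $a_j$'s outside neighbors are rainbow-colored. Now pick an $a_j$ with $\deg_H(a_j)=2$: it has exactly $\Delta-2$ neighbors outside $H$, all colored, carrying $\Delta-2$ distinct colors. Because $\Delta-2>\Delta-3=\card{F_r}$, at least one of those colors, say $c$, is missing at $r$. Color $r$ with $c$; this is a proper extension since $r$'s only remaining neighbors are the uncolored $a_1,a_2,a_3$. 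Then $a_j$ has two neighbors colored $c$ --- namely $r$ and the outside neighbor already colored $c$ --- so $a_j$ is a flexible node, lies in the stem, and was produced by coloring only the root.

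Finally, I would record that $H-r$ is connected by $2$-connectivity of $H$, which is the property the subsequent ``color in order of decreasing distance to the flexible node'' step relies on. The step I expect to be the main obstacle is the counting argument: one has to keep the degree bookkeeping exact and, in particular, make the newly flexible vertex an internal path-neighbor of $r$ of $H$-degree $2$ rather than the branch vertex $w$, so that its number of colored outside neighbors ($\Delta-2$) strictly exceeds the number of colors forbidden at $r$ ($\Delta-3$). A smaller point worth making explicit is the $\Delta$-regularity of $H$ inside $G$ --- or, lacking it, the direct handling of the degenerate cases (e.g.\ a whole connected component being a small theta graph) in which the stem has no colored neighbor and the statement must be read with this restriction in mind.
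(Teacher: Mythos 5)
Your proof is correct and takes essentially the same route as the paper's: take a stem neighbor $v$ of the root with $H$-degree $2$ (guaranteed to exist since $H$ has only two degree-$3$ vertices), and use the count $\Delta-2>\Delta-3$ to find a color for the root that creates a repeated color at $v$. The $\Delta$-regularity of $H$ in $G$ that you flag is indeed left implicit in the paper's proof; in the algorithm's setting it holds because the smallest-NLDelEC selection rule in Step~1 of \cref{alg:deltacoloringlocal} never picks a nice LDCC from a neighborhood that also contains a vertex of $G$-degree $<\Delta$.
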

\begin{proof}
    Suppose that the stem does not contain a flexible node. Consider the natural root $r_H$ of the nice LDCC and let $v \in H \cap N(r_H)$ be a neighbor of $r_H$ in the stem of degree $2$, guaranteed to exist from the fact that exactly two nodes have degree $3$ in a nice LDCC. $v$ and $r_H$ are part of the stem, and thus not flexible. Hence, the $\Delta-3$ neighbors of $r_H$ outside $H$ are colored with $\Delta -3$ distinct colors, and the $\Delta-2$ neighbors of $v$ outside $H$ are colored with $\Delta -2$ distinct colors. Let $c$ be a color that appears in the neighborhood of $v$ but not $r_H$: coloring $r_H$ with $c$ makes $v$ flexible.
\end{proof}
    
\begin{lemma}
\label{lem:coloring-nldcc}
    Let $\fR$ be a family of nice {\LDelEC}s s.t.\ $\card{H} \leq k$ for each $H \in \fR$ whose contention digraph is empty.
    Then the nodes of $\fR$ can be $\Delta$-colored in 
    $k+\Tlayer(n,\Delta,k)$ rounds.
\end{lemma}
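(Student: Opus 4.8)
The plan is to reduce the problem to a layered-graph coloring instance and invoke the corresponding subroutine. The key structural fact comes from \cref{lem:flexible-node-LDCC}: for each nice {\NLDelEC} $H\in\fR$, once the nodes of its natural orchid that lie outside $H$ are colored, we can either identify a flexible node already present in the stem of $H$ or create one by coloring the root $r_H$ with a suitable color. So the first step is to preprocess each $H\in\fR$ in parallel: gather the $O(1)$-radius topology and the colors on the orchid boundary (which are already fixed, since everything outside $\fR$ is colored and the contention digraph is empty, hence the orchid of $H$ meets no other member of $\fR$), then locally decide on a flexible node $f_H\in V(H)$, coloring $r_H$ if necessary. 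This takes $O(k)$ rounds since $H$ has at most $k$ nodes and each member can learn its entire neighborhood-plus-orchid in that many rounds; crucially, because orchids of distinct members are disjoint, these local decisions do not conflict.

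The second step is to set up the layered graph. After the preprocessing, within each $H$ the still-uncolored nodes form a connected subgraph (by $2$-connectivity of the LDCC, removing at most one node — the root — keeps it connected, and the flexible node's two same-colored neighbors may also be colored but they lie in the stem, so one checks connectivity is retained; alternatively one argues the uncolored part is connected directly from the NLDCC structure). Root a BFS layering at $f_H$ inside this uncolored subgraph: assign layer index equal to distance from $f_H$, with higher distance getting \emph{smaller} index so that nodes far from $f_H$ are colored first. The height $h$ is at most $k$. The point of the flexible node is that when we finally reach $f_H$ itself, two of its neighbors share a color, so at most $\Delta-1$ distinct colors appear among its already-colored neighbors, leaving a color in $[\Delta]$ available; for every other uncolored node $v$, it has at least one neighbor on a previously-colored layer or in $V(G)\setminus V(\fR)$ that is closer to $f_H$, hence its up-degree $\widehat\delta(v)$ (neighbors in equal-or-earlier layers, counting only uncolored ones) is at most $\deg(v)-1\le\Delta-1$, so a list of size $\widehat\delta(v)+1$ of allowed colors from $[\Delta]$ is always available. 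Thus the uncolored parts of all members of $\fR$, taken together with the BFS layerings, form a single layered graph of height $h\le k$ and maximum up-degree $\widehat\Delta\le\Delta-1<\Delta$, equipped with lists of size $\widehat\delta(v)+1$ drawn from $[\Delta]$.

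The third and final step is simply to run the layered-graph list-coloring subroutine, whose complexity on such an instance is denoted $\Tlayer(n,\Delta,k)$ in the paper. Adding the $O(k)$ rounds of preprocessing gives the claimed bound of $k+\Tlayer(n,\Delta,k)$. The main obstacle, and the part requiring the most care, is the second step: verifying that after coloring $r_H$ (and implicitly fixing the flexible pair) the remaining uncolored portion of each $H$ is genuinely connected and that the BFS layering really yields up-degree strictly below $\Delta$ at \emph{every} uncolored node — including subtle cases where the flexible node is the root itself or one of its stem-neighbors (the three cases in \cref{fig:ldcc_coloring}). Once this bookkeeping is done, the reduction to the layered-graph primitive is immediate, and correctness of the final coloring follows because each node, when colored, sees at most $\Delta-1$ forbidden colors and the flexible node sees at most $\Delta-1$ as well.
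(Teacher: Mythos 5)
Your proposal is correct and follows essentially the same route as the paper's proof: create a flexible node in each member's stem via \cref{lem:flexible-node-LDCC} (possibly coloring the root), form the subgraph $G'$ of still-uncolored nodes, BFS-layer from the flexible nodes (height at most $k$ by $2$-connectivity even after removing the root), and invoke the layered-graph coloring subroutine. One small inaccuracy to note: the flexible node's two same-colored neighbors live \emph{outside} $H$ in the orchid (not in the stem), so the only node of $H$ possibly colored during preprocessing is $r_H$ itself, and $2$-connectivity of $H$ immediately gives connectivity of $V(H)\setminus\{r_H\}$ — the extra caveat you raise about those neighbors is unnecessary.
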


\begin{proof}
    Recall that nice {\LDelEC}s are nice LDCCs and nodes of degree $< \Delta$ (\cref{def:nice-extendable}).
    Each nice {\LDelEC} creates a flexible node in its stem, if it does not have one -- note that a node of degree $<\Delta$ is already flexible. Consider then the graph $G'$ induced by the nodes of the nice {\LDelEC}s, without the roots that were colored to create a flexible node -- $G[\bigcup_{H \in \fR} V(H) \setminus \set{r_H}] \subseteq G' \subseteq G[\bigcup_{H \in \fR} V(H)]$. Perform a multi-origin BFS in $G'$ with the flexible nodes as origins, and label the nodes in $G'$ by their distance to the nearest flexible node in $G'$. Since each nice {\LDelEC} $H$ is $2$-connected, contains a least one flexible node, and contains $k$ nodes, every node in $H$ still has a path to a flexible node of length $< k-1$ even if the root $r_H$ of $H$ is not a vertex of $G'$. Let layer $i$ contain all the nodes at distance $i$ from the nearest flexible node in $G'$ for each $i \in \set{0,\ldots,k-1}$. Each node $v$ in a layer of index $i > 0$ has at least one neighbor in layer $i-1$, which means that its up degree in $G'$ is $\widehat{\delta}(v) \leq \Delta-1$ and while it has access to at least $\widehat{\delta}+1$ colors $\subseteq [\Delta]$. The flexible nodes in layer $0$ have at least $\widehat{\delta}+1$ free colors $\subseteq [\Delta]$ from the fact that a color is repeated in their neighborhoods.
\end{proof}
    
For concrete bounds on the complexity of this operation, see the current state-of-art complexities for coloring layered graphs in \cref{lem:d1lc_local,lem:layered_graphs}.

\subsection{The Complexity of \texorpdfstring{$\Delta$}{Δ}-Coloring Parameterized by \texorpdfstring{$n$}{n} and \texorpdfstring{$\Delta$}{Δ}}
\label{sec:detlocal}

The goal of this section is to prove \cref{thm:detLOCAL}, which we restate here for convenience.

\deterministicLOCAL*

The algorithm we will use to obtain \cref{thm:detLOCAL} is \cref{alg:deltacoloringlocal}. Our next proof shows that it computes a proper $\Delta$-coloring in the runtime stated in \cref{thm:detLOCAL}.
While the description of \cref{alg:deltacoloringlocal} omits some details regarding how to perform some of the mentioned steps, these details will be filled in later in the appropriate places.

\begin{algorithm}[ht!]
\caption{$\Delta$-coloring in deterministic LOCAL}
\label{alg:deltacoloringlocal}
\begin{algorithmic}[1]
\State Each node $v$ finds a nice locally $\Delta$-extendable component in $N^{2\log_{\Delta-1} n}(v)$ and selects the smallest one, breaking ties arbitrarily.

\Statex Let $\fR$ be the family of subgraphs formed by all the {\NLDelEC}s selected in the previous step. For each $H \in \fR$, take its natural orchid $F(H)$ (\cref{def:natural-orchid}).

\State Compute $\fR'\subseteq \fR$ by applying \cref{alg:rulingsubgraphs} on the family $\fR$ with their natural orchids, with parameters $r = 1$ and $k = 16 \log_{\Delta-1} n +4$.

\Statex $\fR'\subseteq \fR$ satisfies the properties stated by \cref{thm:rulingsubgraphs} (for the respective $r, k$).

\State Compute for each node $v \in V \setminus \bigsqcup_{H \in \fR'} H$ (i.e., outside all members of $\fR'$) its distance to the nearest member of $\fR'$, defining layers
\State Color all nodes from $V \setminus \bigsqcup_{H \in \fR'} H$ using an algorithm for coloring layered graphs,
using colors $1$ to $\Delta$.
\State Each  member $H$ of $\fR'$ creates a flexible node in its stem, if it does not already have one.
\State In the subgraph induced by uncolored nodes, compute the distance of each uncolored node to the nearest flexible node.
\State Color the remaining uncolored nodes with an algorithm for coloring layered graphs, using colors $1$ to $\Delta$.
\end{algorithmic}
\end{algorithm}

\begin{proof}[Proof of \cref{thm:detLOCAL}]
    We start by proving that each node $v$ is at distance at most $2\log_{\Delta-1} n$ from a subgraph in the family $\fR$, and that all members of $\fR$ are nice locally $\Delta$-extendable components (\NLDelEC) of size at most $16 \log_{\Delta-1} n + 4$.

    Fix an arbitrary node $v \in V(G)$.
    By \cref{lem:exists-ldcc}, the $(2\log_{\Delta-1} n)$-hop neighborhood $N^{2\log_{\Delta-1} n}(v)$ of $v$ contains a node of degree less than $\Delta$ or an LDCC.
    In the former case, we have a {\NLDelEC} of size $1$ within distance $2\log_{\Delta -1}$ from $v$. 
In the latter case, recall that each node selected an LDCC of smallest size. Since every non-nice LDCC contains a nice LDCC as induced subgraph by \cref{lem:sizeLDCC}, selecting a minimal LDCC guarantees it to be a nice LDCC. Furthermore, since all LDCCs are found within the distance-$2\log_{\Delta-1} n$ neighborhood of a node, each minimal LDCC found is a subgraph of an LDCC of strong diameter $\leq 4\log_{\Delta-1} n+1$, which necessarily contains a nice LDCC of at most $16\log_{\Delta-1} n +4$ nodes by \cref{lem:sizeLDCC}. This first step of the algorithm is done where each node finds the nearest member of $\fR$ is done in $O(\log_{\Delta - 1} n) = O(\log_{\Delta} n)$ rounds (since $\Delta \geq 3$).   

    Each {\NLDelEC} of size $1$ has its unique node as root, while {\NLDelEC} that are nice LDCCs use their node of degree $3$ of highest ID as root. For each selected {\NLDelEC} $H \in \fR$, its stem is the inclusive neighborhood of its root in $H$.
    \Cref{alg:rulingsubgraphs} is then applied with the family $\fR$, whose members have size at most $k=16\log_{\Delta-1}n+4$, have stems of size $t\leq 4$, and with $d=1$. Let us denote by $\Trsg(n,\Delta,k,t,d)$ its running time, and $\fR' \subseteq \fR$ the smaller family of {\NLDelEC} it produced. By \cref{thm:rulingsubgraphs},
    \[
    \Trsg(n,\Delta,k,t,d) \leq O(\log \Delta + \log_\Delta n \log^*n)\ .
    \]

    The same \cref{thm:rulingsubgraphs} guarantees that every member $H \in \fR$ of the original family is at distance at most $O(\log \Delta + \log_\Delta n \log^* n)$ from a member of the smaller family $\fR'$.
    This means that every node in the graph is at distance at most $h \leq O(\log \Delta + \log_\Delta n \log^* \log_\Delta n)$ from some $H' \in \fR'$.
    
    Partitioning the nodes of outside the family $\fR'$ into layers according to their distance to $\fR'$, we obtain a layered graph instance where the maximum up degree is $\widehat{\Delta} \leq \Delta - 1$ and with $h\in O(\log_\Delta n \log^* \log_\Delta n)$ layers. We use $\Tlayer(n,\Delta,h)$ rounds to colors those nodes.

    Extending the coloring to the nodes of $\fR'$ is then done in $k+\Tlayer(n,\Delta,k)$ rounds, as shown in \cref{lem:coloring-nldcc}, by first ensuring the existence of a flexible node in each nice {\LDelEC} $H \in \fR'$ and coloring the other nodes of the {\LDelEC}s with an algorithm for coloring layered graph.
    
    Observing that the complexity of coloring layered graphs is non-decreasing w.r.t.\ the bound on the number of layers, and at least linear in this number of layers (from the complexity of coloring a path), the overall complexity is of order $O( \Trsg(n,\Delta,k,t,d) + \Tlayer(n,\Delta,h))$ where $k \in O(\log_\Delta n)$, $t\in O(1)$, $d \in O(1)$, and $h \in O(\log \Delta + \log_\Delta n\log^*n)$. In particular, using the algorithm by Ghaffari and Kuhn for coloring layered graphs (\cref{lem:layered_graphs}), we get a complexity of
    \[
     O(\log^4 \Delta + \log^2\Delta \cdot \log n \log^* n)
    \ .\qedhere\]
\end{proof}

\subsection{Optimizing for the High-\texorpdfstring{$\Delta$}{Δ} Regime}
\label{sec:high-delta-optimization}

While on constant-degree graphs the algorithm of \cref{thm:detLOCAL} produces a $\Delta$-coloring in $O(\log n \log^*n)$ rounds, which improves almost quadratically on the $O(\log^2 n)$ state-of-the-art algorithms of \cite{panconesi95delta,GHKM_dc21,ghaffari2021deterministic}, the attentive reader might notice that our algorithm performs worse than some combinations of prior works presented in \cref{sec:complexity-in-n} in higher regimes of $\Delta$ -- more precisely, when $\log \Delta > \log^c n$ for some constant $c$.
In this section, we explain how minor modifications to our algorithm suffice to achieve complexities that also improve on the complexities from \cref{sec:complexity-in-n}, albeit without providing an improvement for the complexity as a pure function of $n$ stronger than saving some $\log \log^{O(1)} n$ terms in some regimes of $\Delta$.

The high dependency in $\Delta$ in \cref{thm:detLOCAL} has two origins: first, an increase of $O(\log \Delta)$ in the distance to the nearest subgraph from {\RulingOrchids}; second, the use of the algorithm by Ghaffari and Kuhn~\cite{ghaffari2021deterministic} for coloring layered graphs, which introduces a $O(h \log^3 \Delta)$ term to the complexity, where $h$ is the number of layers of the graph (see \cref{lem:layered_graphs}). Both can be swapped for faster procedures in the high $\Delta$ regime.

First, the algorithm for computing ruling sets used in {\RulingOrchids} can be replaced by the recent algorithm by Ghaffari and Grunau~\cite{GG_focs24}, which computes an $(2,O(\log \log \Delta))$-ruling set in $\Otilde(\log n)$ rounds. This increases the complexity of  {\RulingOrchids} and thus \cref{alg:rulingsubgraphs} to $\Otilde(\log n)$, but in exchange reduces the distance of every node to the nearest member of the family output by \cref{alg:rulingsubgraphs} to a lower distance $h$ of
$O(\log \log \Delta + \log_\Delta n \log^* n)$. When still using the algorithm by Ghaffari and Kuhn~\cite{ghaffari2021deterministic} for coloring layered graphs, we get an algorithm for $\Delta$-coloring of total complexity
\[
\Otilde(\log n) + O(\log^3\Delta \cdot \log \log \Delta + \log^2 \Delta \cdot \log n \log^* n)\ .
\]

If we instead color our layered graphs one layer at a time using the $\Otilde(\log^{5/3} n)$-round algorithm by Ghaffari and Grunau~\cite{GG_focs24} (see \cref{lem:d1lc_local}), we obtain an algorithm of complexity
\[
\Otilde(\log^{8/3} n / \log \Delta)
\]
matching the complexity of the algorithms in \cref{thm:sota-pure-n-dependency} in their high order terms.

\section{Consequences in the Randomized Setting}
\label{sec:randomized}

Our techniques in the deterministic setting from previous sections also imply faster randomized algorithms. In this section, we present the faster randomized algorithms that we obtain by combining algorithmic approaches from prior work with our newer techniques. Our algorithm essentially follows the approach of \cite{GHKM_dc21}, enhanced by our algorithm for computing ruling subgraphs.

This section is devoted to proving the following theorem:

\randomizedLOCAL*

More precisely, we show that \cref{alg:randLOCAL} computes a $\Delta$-coloring in the stated runtimes, w.h.p. We get two different runtimes from this single algorithm based on which subroutine we use to color the layered graph in Step 7 of \cref{alg:randLOCAL}.

\begin{algorithm}[htb!]
\caption{$\Delta$-coloring in randomized LOCAL}
\label{alg:randLOCAL}
\begin{algorithmic}[1]
\Statex Set $d \gets 20 \log_\Delta \log n + 80$.
\State Find all nice locally $\Delta$-extendable subgraphs ({\NLDelEC}) of size at most $5d$. \Comment{$ \log_\Delta \log n$}

\Statex Let $\fR$ be the family of those subgraphs and $H$ the graph induced by their vertices.
\State Apply \cref{alg:rulingsubgraphs} to get a ruling family $\fR' \subseteq \fR$ over $H$.
\State Each node at distance $5d$ or more from the subgraph $H$ participates in $T$-node sampling.

\Statex Let $S$ be the union of $V(H)$ and the set of sampled $T$-nodes, and $H'$ the graph induced in $G$ by uncolored vertices.
\State Each node $v$ computes $\ell_v = \dist_{H'}(v,S)$, the length of its shortest uncolored path to either a $T$-node or a subgraph from the family $\fR'$.
\Comment{$ \log_\Delta \log n \cdot \log^* n$}
\State Partition nodes into layers according to their $\ell_v$.
\State Color the layered graph layer by layer, in order of decreasing distance.
\State Color $T$-nodes and the subgraphs of the Locally $\Delta$-extendable subgraph family $\fR'$.
\end{algorithmic}
\end{algorithm}

The  main idea behind the speedup we get in randomized \LOCAL compared to deterministic \LOCAL relies in the sampling of \emph{$T$-nodes} (\cref{alg:Tnodes}). $T$-nodes are sampled such that every two $T$-nodes are at some minimum constant distance $b$ -- $34$ in our case, $15$ in the earlier work by Ghaffari, Hirvonen, Kuhn, and Maus~\cite{GHKM_dc21}.
After their sampling, each $T$-node selects a pair of unconnected neighbors (which are said to be \emph{marked}), which are colored using the same color, e.g., $1$.

\begin{algorithm}[htb!]
\caption{$T$-nodes Sampling, with conflict distance $b$}
\label{alg:Tnodes}
\begin{algorithmic}[1]
\State Each node selects itself with probability $1/\Delta^{b}$
\State Each selected node with a selected node in its distance-$b$ neighborhood unselects itself
\State Each remaining selected node becomes a $T$-node; it picks a pair of unconnected neighbors uniformly at random and marks them.
\State Each marked node colors itself $1$.
\end{algorithmic}
\end{algorithm}

To see why this is useful, suppose that we extended the coloring so that all nodes except $T$-nodes are colored -- importantly, without changing the coloring of the marked nodes. As each $T$-node has at least $2$ neighbors with the same color, it sees at most $\Delta-1$ distinct colors in its neighborhood. Each $T$-nodes can thus easily pick a free color for itself in $[\Delta]$, and the coloring is easily extended to $T$-nodes. Additionally, $T$-nodes can be used to partition uncolored nodes into layers according to their \emph{uncolored distance} to the nearest $T$-node, i.e., the length of their shortest uncolored path to their nearest $T$-node. The uncolored distance defines a convenient order in which to color the graph, as does the distance to the nearest locally $\Delta$-extendable component in the deterministic algorithms.

The crux of the argument is then to show that each node in the graph is within short uncolored distance of a $T$-node. Alas, this cannot be guaranteed in all parts of the graph. However, it can be shown that in areas of the graph that expand (i.e., where the distance-$r$ neighborhoods have size $\Delta^{\Omega(r)}$, for $r \in \Theta(\log_\Delta \log n)$), nodes are likely to have a short uncolored path to a $T$-node. Areas of the graph which do not expand necessarily contain locally $\Delta$-extendable subgraphs, which can be exploited by nodes in those parts of the graph to color themselves efficiently, as in the deterministic algorithms.

The properties we need regarding the outcome of the sampling of $T$-nodes where shown in the work of Ghaffari, Hirvonen, Kuhn, and Maus~\cite{GHKM_dc21}. We summarize the technical results we need in the next series of lemmas.

\begin{lemma}[{\cite[combination of Lemmas 5 and 7]{GHKM_dc21}}]
    \label{lem:lemm5and7}
    Let $G$ be a $\Delta$-regular graph without any DCCs of radius at most $r$. After applying $T$-node sampling with conflict distance 15, in the graph induced by unmarked nodes, each node $v$ either has a $T$-node at distance $r$ or less or has at least $\Delta^{\floor{r/5}}$ nodes at distance $r$.
\end{lemma}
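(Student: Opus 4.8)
The plan is to recover this by merging two ingredients, mirroring that it is quoted as a combination of two lemmas of~\cite{GHKM_dc21}: a structural expansion statement for $\Delta$-regular graphs with no small DCCs, and a robustness statement showing this expansion survives deleting the marked vertices. (Throughout I assume, consistently with the surrounding $\Delta$-coloring setting, that no connected component of $G$ is $K_{\Delta+1}$, which is the only case in which the statement could fail.)

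First I would prove the structural part: if $G$ is $\Delta$-regular (so it has no vertex of degree below $\Delta$) and has no DCC of radius at most $r$, then for every vertex $v$ there is a BFS tree rooted at $v$, reaching depth $r$, in which the number of descendants of any vertex grows by a factor at least $\Delta$ over every run of at most $5$ consecutive levels; in particular it has at least $\Delta^{\floor{r/5}}$ vertices at depth $r$, each at distance exactly $r$ from $v$ in $G$. The starting point is that, by the Gallai-tree characterization (the lemma following~\cref{def:gallai-tree}) and~\cref{def:dcc}, every $2$-connected induced subgraph of radius at most $r$ is a clique or an odd cycle; combined with $\Delta$-regularity and $\Delta\geq 3$, this constrains the local block structure so tightly that the only way the BFS exploration can fail to branch is along a long induced cycle --- yet every vertex of a cycle carries $\Delta-2\geq 1$ further incident edges, so within a bounded number of steps (one can take $5$) the exploration must reach a genuine branching vertex, where it gains a factor $\geq\Delta-1$. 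Bookkeeping the worst cases (clique blocks near $K_\Delta$ ``waste'' a step, long cycles are traversed for a while before branching) against the crude $(\Delta-1)^{\floor{k/2}}$ expansion already recorded in~\cref{lem:DCCorexpand,lem:LDCCorexpand} yields the claimed $\Delta^{\floor{r/5}}$ (in fact with some surplus).

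Second I would add robustness under the marking. Suppose $v$ has no $T$-node within distance $r$. Since $T$-nodes are pairwise at distance more than $15\geq 3$, no vertex is adjacent to two $T$-nodes and distinct $T$-nodes mark disjoint vertex pairs; moreover every marked vertex, being a neighbor of a $T$-node, is at distance at least $r$ from $v$ --- otherwise its $T$-node would sit within distance $r$ of $v$. Hence the whole ball $N^{r-1}(v)$ is unmarked, the tree of the first step carries no marked vertex among its internal nodes, the graph induced on unmarked vertices agrees with $G$ on $N^{r-1}(v)$, and a vertex at distance $r$ from $v$ in $G$ stays at distance exactly $r$ from $v$ in the unmarked-induced graph as soon as it is itself unmarked (every shortest path of length $r$ from $v$ to it has its first $r-1$ vertices in $N^{r-1}(v)$, hence unmarked). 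It remains to note that only the depth-$r$ leaves of the tree can be marked, each $T$-node responsible for such a leaf lies at distance exactly $r+1$ from $v$ and marks just two vertices, and distinct such $T$-nodes mark disjoint sets; so when the depth-$(r-5)$ vertices of the tree pick their continuations to depth $r$ they can sidestep the at most two marked candidates contributed by their single nearby $T$-node, and enough leaves survive to keep the count at $\Delta^{\floor{r/5}}$ --- this is exactly where the surplus of the structural lemma is spent.

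The main obstacle will be the structural ingredient and pinning down its constant: making rigorous that the Gallai-tree block structure forces a factor-$\Delta$ gain within a fixed number of BFS levels requires a careful case analysis of how BFS meets clique blocks, bridge vertices, and --- most delicately --- long induced cycles, checking that the extra incident edges of cycle vertices genuinely create new depth rather than pointing back toward $v$. Reconciling that analysis with the specific constant $5$ and with the surplus needed to absorb the at-most-two-per-$T$-node loss of the second step is the delicate bookkeeping; once one notices that no marked vertex appears strictly below distance $r$ from $v$, the robustness step is essentially routine.
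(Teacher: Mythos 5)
This lemma is quoted directly from prior work (``combination of Lemmas 5 and 7'' of~\cite{GHKM_dc21}); the paper does not supply a proof of it, so there is no in-paper argument to compare your sketch against. On the merits, your decomposition into a deterministic expansion ingredient plus a robustness-under-marking ingredient is the right shape, and your observation that, under the assumption that $v$ has no $T$-node within distance $r$, every marked vertex must lie at distance at least $r$ from $v$ (so that the unmarked subgraph agrees with $G$ on $N^{r-1}(v)$) is correct and genuinely useful: it localizes all possible damage to the final layer of the BFS sphere.

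There are, however, two gaps that you acknowledge but do not close, and the second of them is more serious than your framing suggests. First, the structural claim that one gains a multiplicative $\Delta$ factor every five BFS levels in a $\Delta$-regular graph with no small DCCs is asserted with a vague appeal to Gallai-tree block structure; the paper's~\cref{lem:DCCorexpand} only yields base $\Delta-1$ with a factor every two levels, which is a different tradeoff, and deriving the base-$\Delta$/period-$5$ version (and in particular the constant $5$) is exactly the content you would need to borrow or reprove from~\cite{GHKM_dc21}. Second, and more pointedly, the final bookkeeping does not obviously go through as written. Your argument localizes the marked vertices to the sphere of radius $r$, and correctly observes (via the conflict distance $15$ and a $2\cdot 5 + 1 < 15$ packing estimate) that each depth-$(r-5)$ BFS subtree sees at most one $T$-node and hence loses at most two depth-$r$ leaves. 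But ``losing at most two leaves per subtree'' does not compose multiplicatively with ``gain a factor $\Delta$ per five levels'': a subtree with $\Delta$ leaves minus $2$ leaves has $\Delta-2$ leaves, which for $\Delta=3$ is $1$, not $\Delta$, so chaining the factor over $\lfloor r/5\rfloor$ blocks breaks at the last block and only yields roughly $\Delta^{\lfloor r/5\rfloor - 1}$. Absorbing this loss requires the structural lemma to deliver strictly more than a factor $\Delta$ per block (so that $\text{(per-block factor)} - 2 \ge \Delta$), and you never pin this down. You say ``this is exactly where the surplus of the structural lemma is spent,'' but without a quantitative statement of what the surplus is, this is precisely the step that could fail. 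A separate, smaller ambiguity: the lemma's ``$T$-node at distance $r$ or less'' is naturally read as distance in the unmarked graph (that is the graph under discussion), whereas your contradiction ``otherwise its $T$-node would sit within distance $r$ of $v$'' silently uses $G$-distance; reconciling the two requires an extra argument (e.g., that the shortest $G$-path to the nearest $T$-node can be rerouted around marked vertices), which you do not give.
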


\Cref{lem:lemm5and7} guarantees that nodes without a DCC or node of degree $< \Delta$ within some distance $r \geq \log_\Delta \log n + 80$ have an unmarked path to a $T$-node or to at least $\Delta^{16} \log n$ nodes at distance $O(\log_\Delta \log n)$, regardless of which nodes get marked by the $T$-node sampling process. In particular, conditioning on some arbitrary set of random choices within distance $r-16$ of a node $v$, the random choices made at the $\Omega(\Delta^{16} \log n)$ nodes at distance between $r-16$ and $r$ from $v$ have a high probability of creating a path between $v$ and a $T$-node. This is because for each node $u$ reachable from $v$ through an unmarked path of length $r$, where the distance-16 neighborhood of $u$ has not yet made its random choice in $T$-node sampling, there is a probability at least $\Omega(1/\Delta^{16})$ that the path ends up leading to a $T$-node through unmarked nodes. That is, there is a probability at least $\Omega(1/\Delta^{15})$ that $u$ gets selected as $T$-node, and a similar probability that a neighbor of $u$ is selected as $T$-node. Either $u$ or one of its neighbor has a probability at least $\Omega(1/\Delta)$ of marking a pair of neighbors with does not remove the unmarked path between $u$ and $v$. Analyzing a single node, this yields \cref{lem:lemm17}.

\begin{lemma}[{\cite[Lemma 17, adapted to non-constant $\Delta$]{GHKM_dc21}}]
    \label{lem:lemm17}
    Let $G$ be a graph without any DCCs of radius at most $r \geq 32$. 
    Let $v$ and $u$ be nodes such that there is an unmarked unselected path from $u$ to $v$.
    
    After applying the marking process removing nodes within distance $l$ from each other.
    Then $u$ or one of its neighbors becomes a $T$-node with  probability $\Theta(1/\Delta^{l})$. 

    Notably, if $l=15$ , then $u$ or one of its neighbors becomes a $T$-node with  probability $\Theta(1/\Delta^{15})$.
\end{lemma}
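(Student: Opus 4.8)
The plan is to follow the proof of Lemma~17 in \cite{GHKM_dc21} and check that every estimate survives when $\Delta$ is allowed to grow with $n$: the only place where a constant gets replaced by a $\Delta$-dependent quantity is the size of a bounded-radius ball, which is $O(\Delta^{l})$, and this is exactly cancelled by the $\Delta^{-l}$ self-selection probability. First I would pin down the conditioning. As in the paragraph preceding the lemma, ``there is an unmarked unselected path from $u$ to $v$'' should be read as an event that constrains only the random choices of $T$-node sampling made by vertices outside the radius-$16$ ball around $u$ (the portion of the path beyond distance $16$ from $u$), while the choices made by $u$ and by all vertices within distance $16$ of $u$ are still fresh and independent of it. Since $l+1 \le 16$ (indeed $l = 15$ in the application), the radius-$l$ ball around $u$, and also around any neighbour of $u$, lies inside this fresh region.

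For the lower bound it then suffices to bound $\Pr[u \text{ becomes a } T\text{-node}]$ from below by $\Omega(\Delta^{-l})$, working only with the fresh randomness. By the first two steps of $T$-node sampling with conflict distance $l$, the vertex $u$ selects itself with probability $\Delta^{-l}$, and conditioned on that it becomes a $T$-node unless some other selected vertex lies within distance $l$ of it. The radius-$l$ ball around $u$ contains at most $\sum_{j=0}^{l}\Delta^{j} \le \tfrac{3}{2}\Delta^{l}$ vertices (using $\Delta \ge 3$), each selecting itself independently with probability $\Delta^{-l}$; conditioning on $u$ being selected keeps the other selections independent, so the probability that none of them conflicts with $u$ is at least $\bigl(1-\Delta^{-l}\bigr)^{(3/2)\Delta^{l}} \ge \tfrac{1}{8}$. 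Hence $u$ becomes a $T$-node with probability at least $\tfrac{1}{8}\Delta^{-l} = \Omega(\Delta^{-l})$, which already yields the stated lower bound for ``$u$ or a neighbour''. The matching upper bound $O(\Delta^{-l})$ is immediate since $u$ can become a $T$-node only if it is selected; the extra factor $\Delta$ one would pick up by insisting on ranging over all of $\{u\}\cup N(u)$ is immaterial because $l$ is a fixed constant in all our applications. (The ``no DCC of radius $\le r$'' hypothesis is not used in this probability estimate; it is inherited from \cite{GHKM_dc21}, where, together with $\Delta$-regularity, it only guarantees that every vertex has two non-adjacent neighbours so that the marking step is well defined.)

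The statement is phrased as ``$u$ \emph{or one of its neighbours}'' rather than just $u$ purely for the use downstream, where one also wants the unmarked path reaching $u$ to survive and to terminate at the resulting $T$-node. Conditioned on a given vertex $u' \in \{u\}\cup N(u)$ becoming a $T$-node, $u'$ has $\Delta - O(1)$ admissible non-adjacent pairs to mark, of which only $O(1)$ meet the $O(1)$ path-vertices near $u'$, so with probability $1-O(1/\Delta) = \Omega(1)$ the marking avoids the path; if $u$'s own marking would nonetheless cut the incoming path, one falls back to a neighbour of $u$ as $u'$. Multiplying the $\Omega(1)$ factors leaves the bound at $\Omega(\Delta^{-l})$.

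The part I expect to be most delicate is the bookkeeping of the conditioning: one must argue cleanly that whether ``$v$ has an unmarked unselected path reaching $u$'' can be decided from the random choices of vertices outside the radius-$16$ ball of $u$, so that the selections deciding whether $u$ (or a neighbour) becomes a $T$-node are genuinely fresh. This is precisely the layered exposure of randomness used in \cite{GHKM_dc21}; I would reproduce it, using that $l+1 \le 16$ so that the radius-$l$ ball whose selections matter sits strictly inside the radius-$16$ ball whose randomness is revealed last. Everything else is a routine recomputation of constants.
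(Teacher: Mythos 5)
The paper does not actually supply a self-contained proof of this lemma; it cites \cite{GHKM_dc21} and offers only an informal paragraph sketch before the statement, so there is no detailed argument in the paper to compare against word-for-word. Your reconstruction does follow the same route as that sketch: expose the randomness so that the radius-$16$ ball around $u$ is fresh, use the $\Delta^{-l}$ self-selection probability, and cancel the $O(\Delta^l)$ ball size via the ``nobody nearby also selected'' event, then handle the marking/path-survival step with an $\Omega(1)$ factor. Your numerical bookkeeping ($\sum_{j\le l}\Delta^j\le \tfrac32\Delta^l$ for $\Delta\ge 3$, the $(1-\Delta^{-l})^{(3/2)\Delta^l}\ge 1/8$ bound) is correct, and your explicit acknowledgment that the conditioning must be set up so the path event is decided outside the radius-$16$ ball while $l+1\le 16$ keeps the relevant ball fresh is exactly the subtle point that \cite{GHKM_dc21} handles; deferring to their exposure-of-randomness argument is reasonable.

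The one place where you are genuinely sloppy is the upper bound. You note that summing over $\{u\}\cup N(u)$ gives $O(\Delta^{1-l})$, not $O(\Delta^{-l})$, and wave this away as ``immaterial because $l$ is fixed.'' That reasoning is not sound when $\Delta$ is allowed to grow: the ratio between $\Delta^{1-l}$ and $\Delta^{-l}$ is $\Delta$, which is unbounded here, so as written the lemma's literal $\Theta(1/\Delta^l)$ claim for ``$u$ or one of its neighbors'' is not established by your argument (the events ``$u'$ becomes a $T$-node'' for $u'\in\{u\}\cup N(u)$ are pairwise disjoint once $l\ge 2$, so the union really does pick up a factor $\Theta(\Delta)$). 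However, the paper's own informal paragraph states only $\Omega$-bounds, and the downstream use (\cref{lem:TnodesLDCC}, \cref{lem:lemm18}) only consumes the lower bound, so the discrepancy appears to be looseness in the lemma's $\Theta$ rather than an error that propagates. It would be cleaner for you to simply state and prove the $\Omega(\Delta^{-l})$ lower bound, note $O(\Delta^{1-l})$ for the union upper bound, and observe that only the former is used, rather than asserting a $\Theta$-claim you do not prove.
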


Combining \cref{lem:lemm5and7} and \cref{lem:lemm17}, gives the following result:

\begin{lemma}[{\cite[Lemma 18]{GHKM_dc21}}]
    \label{lem:lemm18}
    Let $G$ be a $\Delta$-regular graph without any DCCs of radius at most $r \geq 20 \log_\Delta \log n + 80$. After applying $T$-node sampling with distance 15, each unmarked node $v$ has an unmarked path of length $O(\log_\Delta \log n)$ to a $T$-node, w.h.p.
\end{lemma}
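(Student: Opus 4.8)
The plan is to prove the claim for a single fixed unmarked node $v$ with failure probability $1/\poly(n)$ and then take a union bound over all $n$ choices of $v$, which upgrades the per-node guarantee to the stated ``w.h.p.'' statement. Write $r := 20\log_\Delta\log n + 80$, so that $r = O(\log_\Delta\log n)$ and the target path length is of order $r$. First I would invoke the structural dichotomy of \cref{lem:lemm5and7}: since $G$ is $\Delta$-regular and has no DCC of radius at most $r$, after $T$-node sampling either (i) the subgraph induced by unmarked nodes contains a $T$-node within distance $r$ of $v$ --- in which case a realizing path has length $O(\log_\Delta\log n)$ and we are done --- or (ii) that subgraph contains at least $\Delta^{\lfloor r/5\rfloor}$ nodes at distance exactly $r$ from $v$. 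In case (ii), since $\lfloor r/5\rfloor \ge 4\log_\Delta\log n + 15$, we obtain a set $U$ of at least $\Delta^{16}\log n$ \emph{expansion witnesses}: nodes at distance $r$ from $v$, each joined to $v$ by a path all of whose internal nodes are unmarked and unselected. The slack encoded by the constant $80$ in $r$ is precisely what makes $|U|$ this large, and (below) what keeps enough selection coins near the witnesses unexposed.

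Next I would leverage \cref{lem:lemm17}. For a witness $u \in U$, that lemma (with $l = 15$) guarantees that, conditioned on the selection coins outside a constant-radius ball around $u$, node $u$ or a neighbor of $u$ becomes a $T$-node with probability $\Theta(1/\Delta^{15})$; combined with the $\Omega(1/\Delta)$ chance that such a $T$-node then marks a pair of its neighbors that does not destroy the unselected–unmarked path from $v$ to $u$, each witness ``succeeds'' with probability $\Omega(1/\Delta^{16})$. The crucial point is that by revealing the selection coins in a carefully chosen order --- first those within distance $r-16$ of $v$, which already exposes the witness structure $U$, and only afterwards those near the individual witnesses --- these success events can be made (essentially) mutually independent across a subfamily $W \subseteq U$ with $|W| = \Omega(\Delta^{16}\log n)$. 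Hence the probability that \emph{no} witness succeeds is at most $(1 - \Omega(\Delta^{-16}))^{|W|} \le \exp(-\Omega(\log n)) = 1/\poly(n)$, and any successful witness yields an unmarked path of length $O(r) = O(\log_\Delta\log n)$ from $v$ to a $T$-node, which is what we want.

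The main obstacle is exactly this independence bookkeeping in the second step: witnesses that lie close to one another in $G$ share randomness, and resampling the coins near one witness can mark a node lying on the $v$-to-$w$ path of another witness, so the per-witness events are not literally independent. Making the argument precise requires a sequential, martingale-style exposure of the selection coins that at each stage still leaves $\Omega(\Delta^{16}\log n)$ witnesses whose relevant local coins are untouched and whose paths to $v$ remain intact --- this is where the large power of $\log n$ in $|U|$ is spent, to absorb the $\poly(\Delta)$ losses incurred by separating witnesses. This dependency control is the technical core inherited from \cite{GHKM_dc21}; once it is in place, the lemma follows from the short combination of \cref{lem:lemm5and7} and \cref{lem:lemm17} sketched above together with the final union bound over $v$.
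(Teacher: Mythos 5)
The proposal is correct and follows essentially the same route the paper sketches (and defers to~\cite{GHKM_dc21} for): dichotomy from \cref{lem:lemm5and7}, a set of $\Delta^{\Omega(\log_\Delta\log n)}$ expansion witnesses at distance $r$ from $v$, the $\Theta(1/\Delta^{15})\cdot\Omega(1/\Delta)=\Omega(1/\Delta^{16})$ per-witness success probability from \cref{lem:lemm17} plus the favorable-marking event, a near-independence argument across enough witnesses to force failure probability $1/\poly(n)$, and a final union bound over $v$. You correctly identify the only real technical burden — the careful exposure of selection coins to decorrelate nearby witnesses — as the part inherited from \cite{GHKM_dc21}, which is exactly what the paper does by citing rather than reproving that lemma. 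The one imprecision worth flagging is the claimed witness count ``$\Delta^{16}\log n$'': with $r\ge 20\log_\Delta\log n+80$ the dichotomy actually gives $\Delta^{\lfloor r/5\rfloor}\ge\Delta^{15}(\log n)^4$, which only dominates $\Delta^{16}\log n$ when $\Delta\le(\log n)^3$; this echoes a loose constant in the paper's own informal discussion and is handled in \cite{GHKM_dc21} (where the large-$\Delta$ regime is treated separately), but a self-contained proof would need to make the regime explicit.
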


We now show the following lemma giving us the same result for $\Delta$-regular graph without any LDCCs of radius at most $r \geq 20 \log_\Delta \log n + 80$, and with a higher distance of $34$ between $T$-nodes compared to 15 in~\cite{GHKM_dc21}.

\begin{lemma}
    \label{lem:TnodesLDCC}
    Let $G$ be a $\Delta$-regular graph without any LDCCs of radius at most $r \geq 40 \log_\Delta \log n + 80$. After applying $T$-node sampling with distance $34$, each unmarked node $v$ has an unmarked path of length $O(\log_\Delta \log n)$ to a $T$-node, w.h.p.
\end{lemma}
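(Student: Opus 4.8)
The plan is to follow the proof of \cref{lem:lemm18} from~\cite{GHKM_dc21} almost verbatim, replacing ``DCC'' by ``LDCC'' throughout, and to isolate the single structural difference between the two notions: a DCC that fails to be an LDCC is an even cycle, which (unlike a general DCC) need not expand. Recall that~\cite{GHKM_dc21} derives \cref{lem:lemm18} by combining an expansion statement (\cref{lem:lemm5and7}: a $\Delta$-regular graph with no DCC of radius at most $r$ expands, and enough of this expansion survives in the graph of unmarked nodes after $T$-node sampling) with a reachability estimate (\cref{lem:lemm17}: a node $u$ at the far end of an unmarked path becomes, together with a neighbor, a $T$-node with probability $\Theta(1/\Delta^{l})$ when $T$-nodes are kept at mutual distance $l$), and then union-bounding over the $\Delta^{\Omega(r)}$ candidate endpoints $u$ at distance $r$ of $v$. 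I will carry out the same three steps.

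\emph{Step 1 (an LDCC-version of \cref{lem:lemm5and7}).} I would first show: if $G$ is $\Delta$-regular and contains no LDCC of radius at most $r$, then after $T$-node sampling with conflict distance $34$, in the graph induced by the unmarked nodes every node $v$ either has a $T$-node within distance $r$ or has $\Delta^{\Omega(r)}$ nodes at distance exactly $r$. The starting point is \cref{lem:LDCCorexpand} in place of \cref{lem:DCCorexpand}: with $k=r$ and using $\Delta$-regularity it already gives $(\Delta-1)^{\lfloor r/2\rfloor}$ nodes at distance $r$ of every node of $G$ itself. What must be checked is that this expansion is not destroyed by the deletion of the marked nodes, and here the new phenomenon, compared to~\cite{GHKM_dc21}, is that a marked pair lying on an even cycle of $G$ can split that cycle and thereby ``short-circuit'' distances near it. I would handle this by re-running the even-cycle-to-triangle contraction from the proof of \cref{lem:LDCCorexpand} on the unmarked subgraph, using the distance-$34$ separation of $T$-nodes to argue that marks do not interact across different even cycles and that each contraction loses only a controlled amount of the would-be expansion; this accounting is what forces the constants in the hypothesis to be larger than their counterparts in \cref{lem:lemm18}.

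\emph{Step 2 (reachability and union bound).} With Step~1 in place, the reachability estimate is \cref{lem:lemm17}, now applied with $l=34$, giving probability $\Theta(1/\Delta^{34})$ that a given unmarked endpoint $u$ at distance $r$ of $v$, or a neighbor of $u$, becomes a $T$-node via marks that preserve the unmarked path from $u$ to $v$. Conditioning on all random choices within distance $r-O(1)$ of $v$, and extracting from the $\Delta^{\Omega(r)}$ such endpoints a subfamily whose relevant neighborhoods are pairwise disjoint (a greedy packing, losing only a $\poly(\Delta)$ factor), these events become independent; since $r \geq 40\log_\Delta\log n+80$ leaves $\Delta^{\Omega(r)} \gg \poly(\Delta)\cdot\log n$ even after all these losses, a Chernoff bound shows that w.h.p.\ at least one endpoint succeeds, and tracing the path back to $v$ gives an unmarked path of length $r+O(1)=O(\log_\Delta\log n)$ to a $T$-node. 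A final union bound over the at most $n$ nodes $v$ upgrades this to the claimed statement holding simultaneously for all nodes.

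I expect Step~1 to be the main obstacle. A $\Delta$-regular graph with no small-radius LDCC can still be laced with long even cycles, which do not expand on their own, so one cannot argue about expansion in $G$ directly but must route the argument through the triangle-contraction reduction---now applied \emph{after} deleting a random set of marked nodes that may themselves sit inside those cycles. Making this reduction valid post-deletion, and using the $T$-node separation to rule out interactions between marks of different cycles, is the delicate part; pinning down the precise constants ($34$ and the coefficient $40$) is a matter of bookkeeping in that argument, and everything downstream is a routine transcription of~\cite{GHKM_dc21}.
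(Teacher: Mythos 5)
Your high-level plan is aligned with the paper's: both reduce to the known DCC case via the triangle-contraction of \cref{lem:LDCCorexpand}, and both recognize that the interaction between that contraction and the $T$-node sampling is where the real work lies. But the mechanism you sketch for Step~1 is not the one the paper uses, and as written it has a gap.

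You propose to run the even-cycle-to-triangle contraction \emph{on the unmarked subgraph} and to argue locally that marks ``do not interact across different even cycles.'' The unmarked subgraph, however, is not $\Delta$-regular---marked nodes have been deleted---so \cref{lem:LDCCorexpand} (which crucially exploits that every node other than the root sees at least $\Delta-1$ children in the BFS tree) does not apply to it; and framing the obstruction as marks ``short-circuiting distances'' misidentifies the issue. The paper instead contracts the \emph{full} graph $G$ (so $\Delta$-regularity is preserved), and, as it contracts each cycle $C$ to a triangle $\{u,u',u''\}$, it \emph{relocates} the $T$-node structure: if a neighbor of $u'$ or $u''$ was a $T$-node in the original sampling, the contracted instance instead declares $u'$ a $T$-node marking $u$ and $u''$. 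One then checks two things: (i) the set of nodes reachable from $v$ via unselected nodes in the contracted instance is a subset of those reachable in the original, so expansion and $T$-node reachability both transfer back to $G$; and (ii) the contraction halves distances at worst and shifts $T$-nodes by at most one hop each, so the original distance-$34$ separation degrades to at least $34/2 - 2 = 15$, exactly the threshold needed to invoke the DCC-version \cref{lem:lemm5and7} on the contracted graph. This $T$-node relocation is the key idea missing from your sketch; without it, there is no valid $T$-node configuration on the contracted, DCC-free graph to which \cref{lem:lemm5and7} can be applied. Step~2 of your plan (applying \cref{lem:lemm17} with $l = 34$ and concluding via a union bound as in \cref{lem:lemm18}) matches what the paper does implicitly by citing those lemmas, and is fine.
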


\begin{proof}
    We are going to use a similar construction as the one in \cref{lem:LDCCorexpand}. As in that lemma, we start by considering a graph without any DCCs of radius at most $r$. We consider a node $v$, and the BFS tree starting from $v$. 
    We first start by sampling $T$-nodes with conflict distance $b=34$ in $G$. 

    Then, applying the same construction as in \cref{lem:LDCCorexpand}, we modify the distance-$k$ neighborhood of $v$: while it contains a cycle of size $>3$, we consider one such cycle $C$, then $u$, $u'$ and $u''$ the nodes closest to $v$ in that cycle with $u$ the closest to $v$. We remove from the graph all other nodes from $C$, and add an edge between $u'$ and $u''$. For every deleted node, we also delete all nodes whose shortest path to $v$ went through that node. 
    If one of the neighbors of $u'$ or $u''$ was a $T$-node then we choose $u'$ to become a $T$-node and have it select $u$ and $u''$.
    Since the block graph made of the cycles and cliques is a tree, no child of $u$ will be reachable from $v$ using only unselected nodes. Doing this operation ensures that strictly less nodes will be reachable from $v$ using only unselected untouched nodes with this choice of $T$-nodes compared to the initial $T$-node sampling.

    The distance-$k$ neighborhood of $v$ now has a lower or equal number of nodes as in $G$, with degrees all equal to $\Delta$, and it neither contains cycles of size $>3$ nor $2$-connected subgraphs other than cliques. The distance between two $T$-nodes is at least $15$. Indeed, we have added edges between nodes which were at distance $2$, decreasing the distance between nodes by at most half. Also, the position of $T$-nodes could have change from one node to another neighboring node. Which implies that the distance between two $T$-nodes is at least $34/2-2=15$

    We now get from \cref{lem:lemm5and7} that $v$ can reach $\Delta^{r/5}$ nodes at distance $r$ using unselected nodes or can reach a $T$-node at distance $r$ in the construction. Since any reachable $T$-node in the construction can be reached in $G$ and any node at distance $r$ reachable using unselected nodes in the construction can also be reached using the same nodes in $G$, we obtain a generalization of \cref{lem:lemm5and7} for the LDCCs.
    
    \Cref{lem:lemm17} gives us that if there is $u$ such that there is a path of unmarked and unselected nodes from $v$ to $u$, then after applying the marking process, and removing nodes, $u$ or one of its neighbors becomes a $T$-node with probability $\Theta(1/\Delta^{34})$. 
    
    We get \cref{lem:TnodesLDCC} by combining these two results.
\end{proof}

\begin{proof}[Proof of \cref{thm:randLOCAL}]

    Discovering locally $\Delta$-extendable subgraphs of size $5d \in O(\log_\Delta \log n)$ only takes $O(\log_\Delta \log n)$, as well as each node learning whether it has such a subgraph within distance $5d$.

    A node v without a locally $\Delta$-extendable subgraph at distance $5d$ participates in $T$-node sampling. If all nodes at distance $5d$ from $v$ participate in $T$-node sampling, by \cref{lem:lemm18}, $v$ has an unmarked (i.e., uncolored) path to a $T$-node of length $O(d)$, w.h.p. If $v$ does not have such an unmarked path, then $v$ must have an unmarked path of length $O(d)$ to a node which did not participate in $T$-node sampling. Since nodes that do not participate in $T$-node sampling have a path of length $O(d)$ to a locally $\Delta$-extendable component, every node in the graph has an unmarked path of length $O(d)$ to either a $T$-node or a locally $\Delta$-extendable component.

We use \cref{thm:rulingsubgraphs}, but with some changes to \RulingOrchids\, where instead of computing a $(2,\log \Delta$)-ruling set on line $1$, the algorithm will do a $(2,O(\log \log n))$-ruling set which can be done in $O(\log \log n)$ rounds in randomized (\cref{lem:ruling-set}). Then \RulingOrchids\ has a runtime of $O(\log \log n)$-rounds in our current setting since the conflict graph we use in the algorithm has constant diameter.
    Consider now the runtime of computing the family $\fR'$ of ruling locally $\Delta$-extendable subgraphs (i.e., applying \cref{thm:rulingsubgraphs}). With the size of the subgraphs given as input, this takes $O(d \log^ *n +\log \log n) = O(\log_\Delta \log n \log^ *n + \log \log n)$ rounds. Every node in the graph is within distance $O(\log_\Delta \log n \log^ *n + \log \log n)$ of a $T$-node or a locally $\Delta$-extendable subgraph in $\fR'$.

    At this point, nodes partition themselves into $O(\log_\Delta \log n \log^ *n + \log \log n)$ layers, which we color one by one. This is the most expensive part of the algorithm: after this step, coloring $T$-nodes only takes $O(1)$ rounds, and coloring the locally $\Delta$-extendable subgraph family $\fR'$ 
    amounts to coloring a layered graph with a smaller number of layers,  $O(\log_\Delta \log n)$.
    Depending on which algorithm we use for the two steps involving coloring layered graphs, we obtain the two stated complexities.
    
    Computing an auxiliary $O(\Delta^2)$-coloring using Linial's algorithm (\cref{lem:linial}), and applying the $O(\sqrt{\Delta \log\Delta})$ algorithm for degree+1 list coloring by Maus and Tonoyan \cite{MausTonoyan_dc22} to color each layer (see \cref{lem:d1lc_local}), we obtain a complexity of $O(\sqrt{\Delta \log\Delta} (\log_\Delta \log n \log^ *n + \log \log n))$.

    If instead we use the $\Otilde(\log^{5/3} \log n)$ randomized algorithm for degree+1 list coloring by \cite{HalldorssonKNT22near,GG_focs24} (see \cref{lem:d1lc_local}), we obtain a complexity of $\Otilde(\log^{8/3} \log n)$.    
\end{proof}

\urlstyle{same}
\bibliographystyle{alpha}
\bibliography{arxiv_v1_refs}

\appendix

\section{Complexity Upper Bounds Achievable by Combining Prior Work}
\label{sec:complexity-in-n}

In this appendix, we show how to combine prior work to obtain improved complexity bounds for deterministic $\Delta$-coloring in the LOCAL model.
We emphasize that our new results also supersede the results from this section.

\begin{restatable}{theorem}{DeltaColFromGG}[Consequence of \cite{GG_focs24} and \cite{ghaffari2021deterministic}]
    \label{thm:deltaColGG}
    There exist algorithms for $\Delta$-coloring graphs of maximum degree $\Delta \geq 3$ in $\Otilde(\log^{8/3} n / \log \Delta)$ and $\Otilde(\log^2 n / \log \Delta + \log n \log^2 \Delta)$ rounds of \LOCAL.
\end{restatable}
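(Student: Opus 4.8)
The plan is to run the classical three-phase $\Delta$-coloring template of~\cite{panconesi95delta,GHKM_dc21,ghaffari2021deterministic} (recalled in \cref{sec:tech_overview}) while plugging in the most recent symmetry-breaking and list-coloring routines of~\cite{GG_focs24,ghaffari2021deterministic}. Set $k := \lceil 9\log_{\Delta-1} n\rceil$. By \cref{lem:exists-dcc,lem:DCCorexpand}, in $O(\log_\Delta n)$ rounds every node $v$ can locate, inside $N^{\lceil k/2\rceil}(v)$, either a node of degree $<\Delta$ or a DCC of radius $\le 2\log_{\Delta-1}n$; in either case the object, together with its neighborhood in $G$, still lies inside $N^{\lceil k/2\rceil}(v)$, and both kinds of object are $\Delta$-extendable (\cref{def:delta-extendable}). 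Phase~(i): compute a $(2,O(\log\log\Delta))$-ruling set $S$ of the power graph $G^{k}$ using the algorithm of~\cite{GG_focs24} (\cref{lem:ruling-set}); since $\Delta(G^{k}) \le \poly(n)$, the slack is $O(\log\log n)$, so the members of $S$ are at pairwise $G$-distance $\ge k+1$ and every node lies within $G$-distance $O(k\log\log n) = \Otilde(\log_\Delta n)$ of some $v\in S$, while the computation costs $\Otilde(\log n)$ rounds on $G^{k}$, i.e.\ $\Otilde(k\log n) = \Otilde(\log^{2}n/\log\Delta)$ rounds on $G$. Phase~(ii): keep only the extendable subgraph $H_v$ of each $v\in S$; because $k\ge 9\log_{\Delta-1}n$ these are pairwise at distance $\ge 2$ (hence carry no edges between them); layer $V\setminus\bigcup_{v\in S}V(H_v)$ by distance to the nearest $H_v$ and color the layers with colors $[\Delta]$ in order of decreasing distance, noting that every such node has at least one neighbor strictly closer to $\bigcup_v H_v$, hence at most $\Delta-1$ same-or-farther neighbors, so each layer is a $\deg+1$-list-coloring instance with lists inside $[\Delta]$. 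Phase~(iii): extend the coloring to each $H_v$ independently -- legitimate by $\Delta$-extendability and the non-adjacency of distinct $H_v$ -- by gathering $H_v$ and its (already colored) boundary at one node in $O(\log_\Delta n)$ rounds and solving locally.

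\textbf{The two runtimes.} Phases~(i) and~(iii) cost $\Otilde(\log^{2}n/\log\Delta)$ and $O(\log_\Delta n)$ respectively in both cases; only the implementation of phase~(ii), on a layered graph of height $h = \Otilde(\log_\Delta n)$ and maximum up-degree $\le \Delta-1$, varies. Coloring the layers one at a time with the $\Otilde(\log^{5/3}n)$-round $\deg+1$-list-coloring algorithm of~\cite{GG_focs24} (\cref{lem:d1lc_local}) costs $\Otilde(h\log^{5/3}n) = \Otilde(\log^{8/3}n/\log\Delta)$, which dominates and yields the first claimed bound. Using instead the layered-graph colorer of~\cite{ghaffari2021deterministic} (\cref{lem:layered_graphs}) colors all layers together in $O(\log^{2}\Delta\log n + h\log^{3}\Delta) = \Otilde(\log n\log^{2}\Delta)$ rounds, which together with phase~(i) gives the second claimed bound $\Otilde(\log^{2}n/\log\Delta + \log n\log^{2}\Delta)$. (The layered graph that remains inside the DCCs in phase~(iii) can be colored by either method as well, but it has only $O(\log_\Delta n)$ layers and is dominated.)

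\textbf{Main obstacle.} Nothing here is conceptually new, so the work lies entirely in distance/parameter bookkeeping and in making the $\Otilde$'s compose cleanly. The delicate points are: (a) $k$ must be a large enough constant multiple of $\log_{\Delta-1}n$ so that distinct $H_v$ are genuinely disjoint and non-adjacent (so the phase~(iii) extensions do not interact), yet still $O(\log_\Delta n)$ so that the phase~(ii) layering has the right height; and the DCC search radius must be taken slightly above $\log_{\Delta-1}n$ so that the ``expansion'' branch of \cref{lem:DCCorexpand} is excluded (it would require more than $n$ vertices at a bounded distance). (b) For the second bound it is essential to use the small $O(\log\log\Delta)$ ruling-slack of~\cite{GG_focs24} rather than the $O(\log n)$ slack obtainable from Linial's coloring plus \cref{lem:ruling-set} (\cref{lem:linial}): the latter would blow $h$ up to $\Theta(\log^{2}n/\log\Delta)$ and, through the $h\log^{3}\Delta$ term of \cref{lem:layered_graphs}, introduce a spurious $\log^{2}n\log^{2}\Delta$ term. (c) One should verify that per-layer coloring with lists ``$[\Delta]$ minus colors of already-colored neighbors'' really meets the $\deg+1$ requirement (it does: the $\ge 1$ strictly-closer neighbor forces $\le \Delta-1$ same-or-farther neighbors, so after the farther ones are colored the surviving list has size at least $1$ more than the within-layer degree), and that degree-choosable components are colorable from such lists (\cref{def:dcc}). (d) Finally, one must check that the $\Otilde$ of the theorem statement -- which by the paper's convention suppresses $\log\log n$ factors -- absorbs the $\log\log n$'s hidden in the ruling-set slack, in the subroutines of~\cite{GG_focs24}, and in $h$. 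All of this is routine.
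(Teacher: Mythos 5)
Your proposal is correct and follows essentially the same approach as the paper's proof: symmetry-break between $\Delta$-extendable substructures at distance $\Theta(\log_\Delta n)$, layer the remaining vertices by distance, color the layers with the appropriate $\deg{+}1$-list-coloring subroutine, and finish inside the extendable components. The only divergence is organizational. You compute a single $(2,O(\log\log\cdot))$-ruling set of the power graph $G^{k}$ and reuse it for both bounds, whereas the paper works directly with the conflict graph over $\Delta$-extendable subgraphs of size $O(\log_\Delta n)$ and uses an MIS for the first bound and a $(2,O(\log\log\Delta))$-ruling set for the second. This makes no asymptotic difference: your ruling set costs $\Otilde(\log^{2}n/\log\Delta)$, which is dominated by the $\Otilde(\log^{8/3}n/\log\Delta)$ term in the first bound anyway, and the $O(\log\log n)$ slack you incur on $G^{k}$ (versus the $O(1)$ slack of the paper's MIS or the $O(\log\log\Delta)$ slack of its ruling set on the conflict graph) is absorbed by the $\Otilde$ convention, leaving $h=\Otilde(\log_\Delta n)$ layers and hence the same $\Otilde(h\log^{5/3}n)$ and $O(\log^{2}\Delta\log n+h\log^{3}\Delta)$ costs in the two branches. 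Your item (a) about choosing $k$ a sufficiently large constant multiple of $\log_{\Delta-1}n$ to make the $H_v$'s non-adjacent is indeed the one calculation that genuinely needs to be done, and your sketch of it is sound; the paper sidesteps it by making the spacing implicit in the conflict-graph formulation. Everything else checks out.
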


\begin{proof}
    Ghaffari and Grunau~\cite{GG_focs24} gave an algorithm for MIS of complexity $\Otilde(\log^{5/3} n)$ and an algorithm for $(2,O(\log \log \Delta))$-ruling set of complexity $\Otilde(\log n)$. We first give an algorithm making use of the MIS algorithm alone, before one that makes use of the ruling set algorithm as well as the algorithm for coloring layered graph by Ghaffari and Kuhn~\cite{ghaffari2021deterministic}.
    
    Consider the conflict graph over all $\Delta$-extendable subgraphs of size $O(\log_\Delta n)$, and recall that every node in the original graph is guaranteed to be within distance $O(\log_\Delta n)$ of one such $\Delta$-extendable subgraph. Computing an MIS over the conflict graph can be done in $\Otilde(\log^{8/3} n / \log \Delta)$, using the MIS algorithm of \cite{GG_focs24} and since a round of communication over the conflict graph can be performed in $O(\log_\Delta n)$ rounds of communication over the original graph $G$. Let $\fR$ be the subgraph family computed by the MIS algorithm. Each node in the graph is still at a distance at most $h_{\max} \leq O(\log_\Delta n)$ from this smaller set of subgraphs $\fR$.
    
    Each node $v \in G$ then computes its distance $h(v)$ to the nearest member of $\fR$. For each $x\in \set{1,\ldots, h_{\max}}$, let $V_x = h^{-1}(x)$ be the subset of nodes whose distance to the nearest member of $\fR$ is $x$. Note that every node in $V_x$ has at least one neighbor in $V_{x-1}$, and therefore, at most $\Delta-1$ neighbors in $\bigcup_{y \geq x} V_x$. Thus, we can color $V_{h_{\max}},\ldots,V_1$ one by one, using only colors from $1$ to $\Delta$, using the MIS algorithm of \cite{GG_focs24} to solve the degree+1-list coloring instance corresponding to coloring $V_x$ after all subsets $V_{x'}$ for $x'>x$ have already been colored. Since there are $O(\log_\Delta n)$ subsets $V_x$ to color, and coloring each of them can be done in $\Otilde(\log^{5/3} n)$, the complexity of coloring all subsets $V_x$ for $x$ between $1$ and $h_{\max}$ is $\Otilde(\log^{8/3} n/\log \Delta)$. 

    Finally for this first algorithm, there remains to color the subgraphs in $\fR$, which is done in $O(\log_\Delta n)$ as members of $\fR$ are $\Delta$-extendable components of diameter $O(\log_\Delta n)$. The total complexity is $\Otilde(\log^{8/3} n/\log \Delta)$.

    For the second complexity, we instead compute a $(2,O(\log \log \Delta))$-ruling set over the conflict graph of $\Delta$-extendable subgraphs of size $O(\log_\Delta n)$. This is done in $\Otilde(\log^2 n / \log \Delta)$ rounds. We then partition the nodes outside the ruling set by their distance to it, resulting in at most $h_{\max} \leq O(\log n \log \log \Delta / \log \Delta)$ layers. Using the algorithm by Ghaffari and Kuhn for coloring layered graphs, and finishing the coloring afterwards, we obtain a complexity of $\Otilde(\log^2 n / \log \Delta + h_{\max} \log^3 \Delta + \log^2 \Delta \log n) \leq \Otilde(\log^2 n / \log \Delta + \log n \log^2 \Delta)$.
\end{proof}

\begin{restatable}{theorem}{PreviousOptDeltaCol}[Optimal combination of \cite{ghaffari2021deterministic} and \cite{GG_focs24}]
    \label{thm:sota-pure-n-dependency}
    $\Delta$-coloring has complexity $\Otilde(\log^{19/9} n)$ in \LOCAL as a pure function of $n$.
\end{restatable}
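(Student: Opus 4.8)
I would derive Theorem~\ref{thm:sota-pure-n-dependency} as a one-line optimization sitting on top of Theorem~\ref{thm:deltaColGG}. That theorem hands us two deterministic $\Delta$-coloring algorithms, of complexities $T_1(n,\Delta) = \Otilde(\log^{8/3} n / \log \Delta)$ and $T_2(n,\Delta) = \Otilde(\log^2 n / \log \Delta + \log n \log^2 \Delta)$ (one could also throw Ghaffari--Kuhn's direct $O(\log^2\Delta\log^2 n)$-round algorithm into the mix, but it is never smaller than $T_2$ up to constants, so two algorithms suffice). Since every node knows both $n$ and $\Delta$ from the start, each node can locally evaluate $T_1(n,\Delta)$ and $T_2(n,\Delta)$; all nodes agree on which is smaller and run the corresponding algorithm, solving $\Delta$-coloring on any graph in $\min\{T_1(n,\Delta),T_2(n,\Delta)\}$ rounds. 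Hence the complexity of $\Delta$-coloring as a function of $n$ alone is at most $\max_{3 \le \Delta \le n-1}\min\{T_1(n,\Delta),T_2(n,\Delta)\}$, and it remains to show this is $\Otilde(\log^{19/9} n)$.

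\textbf{Reducing to the maximization.} Writing $L := \log n$ and $x := \log \Delta \in [\Theta(1), L]$, and dropping the $\Otilde$ factors (which are subpolynomial in $L$ and therefore cannot affect the leading exponent), I would bound $g(x) := \min\{L^{8/3}/x,\ L^2/x + Lx^2\}$. The first step is to locate the crossover of the two expressions: $L^{8/3}/x = L^2/x + Lx^2$ rearranges to $Lx^3 = L^{8/3}-L^2$, i.e.\ $x = \Theta(L^{5/9})$, at which point both sides equal $\Theta(L^{19/9})$. This scale, $\Delta = 2^{\Theta((\log n)^{5/9})}$, is the one that will turn out to be the worst case.

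\textbf{The case analysis.} For $x \ge \Theta(L^{5/9})$ I would invoke the first algorithm: $g(x) = L^{8/3}/x$ is decreasing, hence at most $L^{8/3}/\Theta(L^{5/9}) = \Theta(L^{19/9})$. For $x \le \Theta(L^{5/9})$ I would invoke the second algorithm: $g(x) = L^2/x + Lx^2$ is convex with its unique minimum at $x = \Theta(L^{1/3})$, so on the interval $[\Theta(1),\Theta(L^{5/9})]$ its maximum is attained at an endpoint, namely $\Theta(L^2)$ at the left endpoint and $\Theta(L^{19/9})$ at the right; since $19/9 > 2$, the larger is $\Theta(L^{19/9})$. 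Combining the two regimes, $\max_x g(x) = \Theta(\log^{19/9} n)$, which with the reduction above proves the claimed bound. (This is only an upper bound on the true complexity; as the paper's main results show, $\Delta$-coloring is in fact solvable in $O(\log n\log^* n)$ rounds, the point of the theorem being that $19/9$ is precisely what one extracts from \emph{combining only} these two prior results.)

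\textbf{Where the (minor) care is needed.} There is no substantive obstacle: the statement is elementary given Theorem~\ref{thm:deltaColGG}. The only points that demand a moment of attention are (i) verifying that every regime of $\Delta$ — constant, the intermediate scale $2^{\Theta((\log n)^{5/9})}$, and $\Delta$ polynomial in $n$ — is covered by one of the two algorithms and yields at most $\Otilde(\log^{19/9} n)$, and (ii) confirming that the $\log\log n$ and $\log\log\log n$ factors hidden inside $\Otilde(\cdot)$ are subpolynomial in $\log n$, so they neither shift the optimal exponent nor move the location $x = \Theta(L^{5/9})$ of the worst case.
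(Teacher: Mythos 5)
Your proposal is correct and takes essentially the same route as the paper: both split on whether $\log\Delta$ is above or below $\log^{5/9} n$, invoking the first algorithm of \cref{thm:deltaColGG} in the high-$\Delta$ regime and the second in the low-$\Delta$ regime, with the crossover balancing both to $\Otilde(\log^{19/9} n)$. Your write-up merely fleshes out the elementary optimization that the paper compresses into one line (and, as a side observation, the paper's proof contains a typo — it writes $\log\Delta \geq \log^{5/9} n$ for both cases where the second should clearly read $\log\Delta \leq \log^{5/9} n$).
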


\begin{proof}
    The algorithms for $\Delta$-coloring described in \cref{thm:deltaColGG} have complexity $\Otilde(\log^{8/3} n / \log \Delta)$ and $\Otilde(\log^2 n / \log \Delta + \log n \log^2 \Delta)$.

    By using the first algorithm when $\log \Delta \geq \log^{5/9} n$, and the second when $\log \Delta \geq \log^{5/9} n$, we get the claimed complexity of $\Otilde(\log^{19/9} n)$.
\end{proof}

When focusing on the dependence on $\Delta$, the prior algorithm of Ghaffari and Kuhn~\cite[Section 5.4 in arxiv version]{ghaffari2021deterministic}
achieved a complexity of $O(\log^2 \Delta \log^2 n)$ rounds.

\section{A Polylogarithmic CONGEST Algorithm for \texorpdfstring{$\Delta$}{Δ}-Coloring}
\label{sec:sota-congest}

In this section, we show how a simple combination of prior work yields an $O(\poly(\log n))$ algorithm for $\Delta$-coloring in the \CONGEST model.

\begin{theorem}[Combination of \cite{Awerbuch89}, \cite{panconesi95delta}, and \cite{ghaffari2021deterministic}]
    \label{thm:sota-det-congest}
    There exists an $O(\log^2 \Delta \log^2 n)$ deterministic algorithm for $\Delta$-coloring graphs of maximum degree $\Delta \geq 3$ in the \CONGEST model.
\end{theorem}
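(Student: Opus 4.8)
The plan is to observe that the classical \LOCAL\ three-step template for $\Delta$-coloring of \cite{panconesi95delta,ghaffari2021deterministic} uses only $O(\log n)$-bit messages once a few subroutines are instantiated correctly. The template is: (i) compute a set $S$ of ``leaders'' that are pairwise far apart in $G$ yet dominate $G$ within moderate distance; (ii) color every node outside the leaders' ``helper'' structures in order of decreasing distance from $S$, using only the colors $[\Delta]$; and (iii) complete the coloring on the helper structures, each of which is a node of degree $<\Delta$ or a degree-choosable component and hence can safely be colored last. The only \CONGEST-sensitive points are the power-graph simulation in~(i), the search for and transport of the helper structures in~(iii), and the $\deg+1$-list-coloring used as a black box in~(ii).

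For~(i), fix $m:=\Theta(\log_{\Delta-1}n)$ large enough that \cref{lem:DCCorexpand} (whose expansion case is impossible at this scale, since $(\Delta-1)^{\lfloor m/2\rfloor}>n$) guarantees, within distance $m$ of every node, a node of degree $<\Delta$ or a DCC of radius at most $m$. The identifiers are a $\poly(n)$-coloring of the power graph $G^{cm}$ for any constant $c$, so by \cref{lem:ruling-set} (in the form due to \cite{Awerbuch89}) a $(2,O(\log n))$-ruling set $S$ of $G^{cm}$ is computable in $O(\log n)$ rounds when $G^{cm}$ is the communication graph. Every round of that computation asks a node only for an OR-type or maximum-type aggregate over its $G^{cm}$-neighbourhood -- e.g.\ ``does some still-active power-neighbour have this identifier bit set'' -- which is obtained by propagating a single $O(\log n)$-bit running aggregate for $cm$ hops in $G$, i.e.\ in $O(m)$ \CONGEST\ rounds. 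Thus $S$ is computed in $O(m\log n)=O(\log^{2}n/\log\Delta)$ \CONGEST\ rounds; for $c$ large enough the members of $S$ are pairwise at distance $>2m$ in $G$ and every node lies within distance $O(m\log n)$ of $S$.

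For~(iii) and~(ii): each $s\in S$ locates in its distance-$m$ ball a node of degree $<\Delta$ or a DCC of radius at most $m$ (one exists by \cref{lem:DCCorexpand}), and by \cref{lem:sizeLDCC} -- a non-even-cycle DCC of diameter $O(m)$ contains an induced LDCC on $O(m)$ vertices, and an even-cycle DCC in the ball already has $O(m)$ vertices -- the chosen component may be taken to have $O(m)=O(\log_\Delta n)$ vertices. Since leaders are $>2m$ apart, the distance-$m$ balls, and hence the chosen structures, are pairwise vertex-disjoint, so the searches (and the gathering of each chosen structure into its leader) happen without congestion; by shrinking constants one may arrange that distinct chosen structures lie at distance $\ge 2$ in $G$. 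Let $U$ be their union. Now layer $V\setminus U$ by distance to $U$, giving the \emph{largest} layer index to the nodes nearest $U$: there are $h=O(\log^{2}n/\log\Delta)$ layers, and every node of $V\setminus U$ has a neighbour strictly closer to $U$ (hence in a higher layer, or in $U$), so it has at most $\Delta-1$ neighbours of $V\setminus U$ at its own layer or below; thus its up-degree in the sense of \cref{def:layered_graph} is at most $\Delta-1$ and the full palette $[\Delta]$ is a valid list. By \cref{lem:layered_graphs}, $G[V\setminus U]$ is colored with colors in $[\Delta]$ in $O(\log^{2}\Delta\cdot\log n+h\log^{3}\Delta)=O(\log^{2}\Delta\log^{2}n)$ rounds; this subroutine of \cite{ghaffari2021deterministic} is built from $\deg+1$-list-coloring with $O(\Delta)$ colors (\cref{lem:d1lc_local}) and so uses $O(\log n)$-bit messages. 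Finally the coloring is extended to $U$: a degree-$(<\Delta)$ node takes a free color, while for each chosen degree-choosable component $D$, once its outside neighbours are colored every $v\in D$ retains at least $\deg_D(v)$ available colors, so degree-choosability gives a completion, which the leader computes locally (as $|D|=O(\log_\Delta n)$ and $D$ lies in the leader's private ball) and broadcasts in $O(\log_\Delta n)$ rounds, all components in parallel. Summing the three steps yields $O(\log^{2}\Delta\log^{2}n)$ rounds.

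The main obstacle is the \CONGEST\ realization of the helper search in~(iii): in \LOCAL\ a leader simply reads off its entire $O(\log_\Delta n)$-radius ball, but that ball may contain $n^{\Omega(1)}$ vertices, so one must instead extract -- with bounded bandwidth -- a structure on only $O(\log_\Delta n)$ vertices whose degree-choosability can be certified, relying on \cref{lem:DCCorexpand,lem:sizeLDCC} for existence and on the disjointness of leaders' balls to keep the exploration congestion-free. Everything else reduces to noting that the ruling-set subroutine of \cite{Awerbuch89} and the layered-graph and list-coloring subroutines of \cite{ghaffari2021deterministic} are already $O(\log n)$-bandwidth algorithms.
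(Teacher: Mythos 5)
Your proposal correctly identifies the overall template (ruling-set leaders, layered coloring of the bulk, last-minute completion at the leaders), but it diverges from the paper's \CONGEST proof at the crucial last step, and that is precisely where there is a gap you yourself flag but do not close. You have leaders search for a degree-choosable component or a small LDCC inside their private $\Theta(\log_\Delta n)$-radius balls, layer $V\setminus U$ by distance to the union $U$ of these structures, and finally have each leader read its chosen component and compute a degree-choosable completion. The problem is that in \CONGEST a leader cannot ``read off'' its $\Theta(\log_\Delta n)$-ball --- that ball may contain $n^{\Omega(1)}$ vertices --- and \cref{lem:DCCorexpand} and \cref{lem:sizeLDCC} only guarantee that a small suitable structure \emph{exists}; they say nothing about how to \emph{locate} it with $O(\log n)$-bit messages. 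Your closing paragraph names this as ``the main obstacle,'' but then leans on the disjointness of leaders' balls as if that resolved it, when disjointness only removes inter-leader congestion, not the intra-ball bandwidth problem. As written this is a missing idea, not a detail.

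The paper sidesteps DCC discovery entirely. It computes a $(k,k\log n)$-ruling set of $G$ directly via \cref{lem:rulingsetCONGEST} (no power-graph simulation, though your aggregate-propagation argument for Awerbuch et al.\ is plausible), colors all non-ruling-set nodes with the dedicated \CONGEST layered-coloring routine \cref{lem:layered_graphs_congest} (you instead cite the \LOCAL version \cref{lem:layered_graphs} and assert it is bandwidth-bounded, which is avoidable sloppiness given that the \CONGEST lemma exists), and then completes the coloring at ruling-set nodes via the Panconesi--Srinivasan path-recoloring argument (\cref{lem:path-recoloring-congest} and \cref{cor:parallel-path-recoloring-congest}): from each uncolored ruling-set vertex one greedily walks an $O(\log_\Delta n)$-length augmenting path and recolors along it. That is trivially \CONGEST-friendly --- it is local path-following rather than topology collection --- and the pairwise separation of ruling-set nodes lets all paths be processed in parallel. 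This is the genuinely different ingredient that makes the \CONGEST bound go through without ever certifying degree-choosability under bounded bandwidth, and it is what your write-up is missing.
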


The theorem follows from these three results, all working in deterministic \CONGEST:
\begin{itemize}
    \item The algorithm for computing ruling sets by Awerbuch, Goldberg, Luby, and Plotkin~\cite{Awerbuch89}.
    \item A method for extending a partial $\Delta$-coloring with only a few well-spaced uncolored nodes to a full $\Delta$-coloring by Panconesi and Srinivasan~\cite{panconesi95delta}.
    \item The rounding-based algorithm for coloring layered graphs by Ghaffari and Kuhn~\cite{ghaffari2021deterministic}, which has a version working in \CONGEST.
\end{itemize}

We state them formally in the next three lemmas.

\begin{lemma}[Lemma 1 in \cite{Awerbuch89}]
    \label{lem:rulingsetCONGEST}
    There is an algorithm for computing a $(k,k\log n)$-ruling set in $O(k \log n)$ rounds of \CONGEST.
\end{lemma}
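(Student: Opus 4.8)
The plan is to give a simple bit‑by‑bit elimination procedure in the spirit of~\cite{Awerbuch89}, rather than reducing to a maximal independent set of $G^{k-1}$ (which would be correct but too slow deterministically). Assume identifiers lie in $\{1,\dots,n\}$, so each has $\ell := \ceil{\log n}$ bits, numbered from most to least significant. Initialize $S := V$ and perform $\ell$ elimination rounds; in round $i$, every node $v$ currently in $S$ whose $i$-th most significant bit is $0$ and that has, within distance $k$ in $G$, some node $u$ currently in $S$ whose $i$-th most significant bit is $1$, leaves $S$ and records a pointer to one such witness $u$. Output the surviving set $S$. Each round only requires every node to learn one bit --- whether there is a node at distance at most $k$ that is currently in $S$ and has the $i$-th bit set --- which is a distance-$k$ aggregation (an OR) of one-bit values and costs $k$ rounds with $O(1)$-bit messages; over all $\ell = O(\log n)$ rounds this is $O(k \log n)$ rounds of \CONGEST.

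To see that $S$ is a valid $(k,\cdot)$-ruling set I would first establish the separation property: any two distinct $u,v \in S$ satisfy $\dist_G(u,v) > k$. Suppose not, and let $i$ be the most significant bit position on which $\mathrm{ID}(u)$ and $\mathrm{ID}(v)$ differ, say $u$ has a $1$ and $v$ a $0$ there. Since a node whose $i$-th bit equals $1$ is never eliminated in round $i$, and since both $u$ and $v$ are in the final $S$ (hence still in $S$ at the start of round $i$), at the start of round $i$ we have $v \in S$ with $i$-th bit $0$ and $u \in S$ within distance $k$ with $i$-th bit $1$; thus $v$ is eliminated in round $i$, a contradiction. (In particular $S \neq \emptyset$, since the node of maximum identifier is never eliminated.)

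Next I would establish the covering property: every $v \notin S$ has a node of $S$ within distance $k\ell = O(k\log n)$. Follow the recorded pointers: if $v$ is eliminated in round $i$ with witness $u_1$, then $u_1 \in S$ at the start of round $i$ and, having $i$-th bit $1$, is not eliminated in round $i$; hence $u_1$ is either in the final $S$ or eliminated in some round $i_1 > i$ with its own witness $u_2$, and so on. The resulting chain $v \to u_1 \to u_2 \to \cdots$ has strictly increasing elimination-round indices, all lying in $\{1,\dots,\ell\}$, so it has at most $\ell$ hops; each hop has length at most $k$ in $G$ by construction, and the chain ends at a node of the final $S$, which is therefore within distance $k\ell$ of $v$.

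The step that most needs care is this covering argument, specifically the claim that elimination-round indices strictly increase along a pointer chain --- this rests precisely on the asymmetry built into the elimination rule (a node is eliminated only in a round where its own processed bit is $0$, whereas the witness it points to has that bit equal to $1$ and therefore survives that round). A secondary point to verify is that each elimination round is implementable with $O(1)$-bit messages, so that the procedure respects the $O(\log n)$-bit bandwidth bound of \CONGEST; this is immediate, since only a single bit needs to be propagated up to distance $k$ in each round.
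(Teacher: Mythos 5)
Your proof is correct. The paper itself proves Lemma~\ref{lem:rulingsetCONGEST} only by citation to \cite{Awerbuch89}, and your bit-by-bit elimination is exactly the classical Awerbuch--Goldberg--Luby--Plotkin construction from that reference: the separation argument (most-significant differing bit forces elimination), the covering argument (elimination-round indices strictly increase along the witness chain, giving at most $\ell=\ceil{\log n}$ hops of length $\leq k$ each), and the observation that each bit round is a distance-$k$ OR-aggregation implementable with $O(1)$-bit messages are all accurately reproduced, so the bound of $O(k\log n)$ \CONGEST rounds and the $(k,k\log n)$-ruling-set guarantee both follow. As a minor remark, your elimination test with threshold $k$ actually yields pairwise distance $>k$ (i.e., a $(k{+}1,k\log n)$-ruling set), which is harmlessly stronger than the stated $(k,\cdot)$ separation requirement.
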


\begin{lemma}[{\cite[Lemma 1 and Theorem 1]{panconesi95delta}}]
    \label{lem:path-recoloring-congest}
    Let $G$ a connected subgraph such that $\Delta \geq 3$, $G$ is not a clique, and $G-v$ is $\Delta$-colored. Then the $\Delta$-coloring can be extended to include $v$ by recoloring a path originating from $v$ of length at most $O(\log_\Delta n)$. The can be found in $O(\log_\Delta n)$ rounds of \CONGEST.
\end{lemma}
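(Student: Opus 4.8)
The plan is to dispose of the trivial case, then to construct the recoloring path by shifting colors along it, bounding its length by the quantitative Brooks-type dichotomy of \cref{lem:DCCorexpand}. If fewer than $\Delta$ colors occur on $N(v)$ — in particular if $\deg(v) < \Delta$ — give $v$ a missing color and stop, with a path of length $0$; so assume $\deg(v) = \Delta$ and that the $\Delta$ neighbors of $v$ realize all of $[\Delta]$. Call a colored vertex $u$ \emph{rigid for an incoming edge $\{p,u\}$} if $\deg(u) = \Delta$ and $\{\col(u)\} \cup \col(N(u)\setminus\{p\}) = [\Delta]$, and \emph{slack} otherwise. The recoloring path $v = u_0, u_1, \dots, u_\ell$ will be a simple path that walks from $v$ through rigid vertices until it first reaches a slack vertex $u_\ell$; the new coloring sets $\col(u_i) := \col(u_{i+1})$ (the old color of $u_{i+1}$) for $0 \le i < \ell$ and gives $u_\ell$ a color of $[\Delta]$ avoiding $\col(u_\ell)$ and $\col(N(u_\ell)\setminus\{u_{\ell-1}\})$, which exists precisely because $u_\ell$ is slack. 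This is proper almost for free: adjacent path vertices $u_{i-1}, u_i$ get $\col(u_i), \col(u_{i+1})$, which differ since $u_i \sim u_{i+1}$; and any non-path neighbor $w$ of $u_i$ keeps a color different from $u_i$'s new color $\col(u_{i+1})$, because at the rigid vertex $u_i$ the color $\col(u_{i+1})$ is carried only by $u_{i+1}$ among $N(u_i) \setminus \{u_{i-1}\}$. Note that since rigid vertices have degree $\Delta$, the walk has at least $\Delta - 1 \ge 2$ continuations at every step, so the shift-walks from $v$ form a branching exploration.

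To bound $\ell$: by \cref{lem:DCCorexpand}, within distance $r := 2\log_{\Delta-1} n + 2 = O(\log_\Delta n)$ of $v$ there is a vertex of degree $< \Delta$ or a degree-choosable component — otherwise that neighborhood would contain $(\Delta-1)^{\lfloor r/2\rfloor} > n$ vertices at distance exactly $r$. A degree-deficient vertex is slack; a degree-choosable component is the other ``escape'' — here the hypothesis that $G$ is not a clique is used, since a clique has neither nearby (cf.\ Brooks' theorem~\cite{brooks1941colouring}). Either nearby structure can be converted into a slack endpoint reachable by a valid shift-walk within distance $O(\log_\Delta n)$: in the degree-deficient case the branching shift-walk is steered toward the deficiency, and in the DCC case one transports $v$'s missing-color deficiency along a short path into the component and absorbs it using degree-choosability. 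Hence $\ell = O(\log_\Delta n)$.

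The \CONGEST implementation is a bounded-depth exploration: run a BFS from $v$ through rigid vertices for $O(\log_\Delta n)$ rounds — one round lets each vertex learn its neighbors' colors and hence decide, for the edge by which it was reached, whether it is rigid — let the first slack vertex reached (ties broken by identifier) be $u_\ell$, and have it send a single color backward along the discovered path, each vertex adopting the incoming color and forwarding the one it vacates; this is $O(\log_\Delta n)$ rounds of $O(\log n)$-bit messages, which is Theorem 1 of \cite{panconesi95delta}. The step I expect to be the main obstacle is exactly the conversion in the second paragraph — turning a nearby degree-deficient vertex or DCC into a \emph{short} valid recoloring path, rigorously and uniformly in $\Delta$ (the small case $\Delta = 3$, where the neighborhood of $v$ need not be tree-like and short odd cycles must be dealt with, is the delicate one); this is the Brooks-/Kempe-chain structural analysis carried out in \cite{panconesi95delta}, which the sketch above relies on.
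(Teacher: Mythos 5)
The paper does not prove this lemma — it is cited from Panconesi and Srinivasan and used as a black box — so there is no in-paper proof to compare against; I assess your sketch on its own terms. Your skeleton is the right one: classify colored vertices as rigid or slack relative to the incoming edge, shift colors along a chain of rigid vertices ending at a slack one, and implement the discovery and backward propagation by a bounded-depth BFS in \CONGEST. But the step you yourself flag as the ``main obstacle'' is not a detail — it is the content of the lemma. Your bound on $\ell$ goes through cleanly only in the degree-deficient branch of \cref{lem:DCCorexpand}: along a $G$-shortest path from $v$ to a vertex of degree $<\Delta$, either every interior vertex is rigid for the edge on which the path enters it, so the shift can follow it to the slack endpoint, or the path hits a slack vertex sooner. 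In the DCC branch you only gesture at ``transporting the missing-color deficiency into the component and absorbing it by degree-choosability.'' Degree-choosability speaks about hypothetical list colorings, whereas in the scenario at hand the vertices of the DCC are already $\Delta$-colored, and nothing forces any of them to be slack for the edge by which a rigid chain arrives; moreover, ``absorbing via degree-choosability'' would recolor the component wholesale, which is no longer a recoloring of a single path from $v$ as the statement promises. Turning degree-choosability into the existence of a short valid shift-chain is exactly the Kempe-chain case analysis (including the delicate $\Delta=3$, short-odd-cycle situations) that makes the original proof nontrivial, and it is not carried out here.

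A secondary gap: the properness argument tacitly assumes the chain is a chordless simple vertex path — you only consider the path-neighbors $u_{i-1}$ and $u_{i+1}$ of $u_i$. Since rigidity depends on the incoming edge, the natural exploration is over directed edges and does not automatically yield a simple vertex path; and even for a simple path, if $u_j\in N(u_i)$ with $|j-i|\ge 2$, the shifted colors $\col(u_{i+1})$ and $\col(u_{j+1})$ need not differ. This can be repaired (a BFS that stops at the first slack pair cannot produce such a chord: a $u_j\in N(u_i)\setminus\{u_{i-1}\}$ with $j\ge i+2$ would have been discovered already at depth $\le i+1$; the case $j\le i-2$ is symmetric), but as written it is asserted rather than shown.
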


While \cref{lem:path-recoloring-congest} only mentions the recoloring of a single vertex, it immediately implies the same result for a set of nodes with sufficient distance between its members. This is the heart of the parallel results in \cite{panconesi95delta}. Furthermore, the paths only need $O(\log_\Delta n)$ rounds to be recolored, as no symmetry breaking is needed since paths do no touch each other.

\begin{corollary}[Corollary of \cref{lem:path-recoloring-congest}]
    \label{cor:parallel-path-recoloring-congest}
    Let $G$ a connected subgraph such that $\Delta \geq 3$ and $G$ is not a clique. Let $G$ be partially $\Delta$-colored, and $S$ be its set of uncolored nodes. Assume that for all pairs of nodes $u,v \in S$, $\dist(u,v) \geq c \log_\Delta n$ for a sufficiently large constant $c$.
    Then the partial coloring of $G$ can be extended to a full $\Delta$-coloring in $O(\log_\Delta n)$ rounds of \CONGEST.
\end{corollary}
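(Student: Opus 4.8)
The plan is to invoke Lemma~\ref{lem:path-recoloring-congest} simultaneously at every uncolored node and argue that the resulting recolorings do not interfere, because $S$ is well-separated. First I would fix the universal constant $c_0$ for which the recoloring path produced by Lemma~\ref{lem:path-recoloring-congest} always has length at most $c_0 \log_\Delta n$, and then choose the constant $c$ in the statement so that $c > 2c_0 + 1$; I also assume $n$ is larger than a suitable absolute constant (so that $c_0 \log_\Delta n \ge 2$), since for smaller $n$ the claim is trivial. For a node $v\in S$ with $\deg_G(v) < \Delta$ there is nothing to do (it can be colored greedily), so the interesting case is $\deg_G(v) = \Delta$.

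Fix such a $v$. Rather than applying Lemma~\ref{lem:path-recoloring-congest} to all of $G$ (which has other uncolored nodes), I would apply it to the induced subgraph $G_v := G[B]$, where $B$ is the ball of all nodes at distance at most $c_0 \log_\Delta n$ from $v$. By the separation hypothesis, $B$ contains no uncolored node other than $v$, so $G_v - v$ inherits a proper $\Delta$-coloring from the given partial coloring; moreover $G_v$ is connected, $\deg_{G_v}(v) = \deg_G(v) = \Delta \ge 3$ (all neighbors of $v$ lie in $B$) so $\Delta(G_v)\ge 3$, and $G_v$ is not a clique --- if it were, then every vertex of $B$ would be within distance $1$ of $v$, so by connectivity $G = G_v$ would be a clique, contradicting the hypothesis. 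Hence Lemma~\ref{lem:path-recoloring-congest} applies to $G_v$ and, in $O(\log_\Delta n)$ rounds of \CONGEST, produces a path $P_v$ originating at $v$, of length at most $c_0 \log_\Delta |V(G_v)| \le c_0\log_\Delta n$, such that recoloring the nodes of $P_v$ extends the coloring to $v$. Since $P_v$ starts at $v$ and has length at most $c_0\log_\Delta n$, it is contained in $B$.

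Now I would run all these procedures in parallel. For distinct $u, v \in S$ we have $\dist_G(u,v) \ge c\log_\Delta n > 2c_0\log_\Delta n + 1$, so the two balls of radius $c_0\log_\Delta n$ around $u$ and $v$ are vertex-disjoint with no edge between them; in particular $P_u$ and $P_v$ are vertex-disjoint and non-adjacent. The recolorings can therefore all be applied at once: the color of a node is changed only if it lies on the unique path $P_v$ through it; after recoloring $P_v$, the colors on $V(G_v)$ form a proper $\Delta$-coloring by Lemma~\ref{lem:path-recoloring-congest}; colors outside $\bigcup_{v\in S} V(G_v)$ are untouched; and no edge runs between two distinct balls, so no monochromatic edge is created. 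After this single parallel step every node of $S$ is colored, yielding a full proper $\Delta$-coloring, and the total cost is $O(\log_\Delta n)$ \CONGEST rounds, since the only communication is the parallel invocation of Lemma~\ref{lem:path-recoloring-congest}, whose computation is confined to the disjoint balls.

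I expect the main obstacle to be precisely the bookkeeping of the previous paragraphs: reconciling the single-uncolored-vertex hypothesis of Lemma~\ref{lem:path-recoloring-congest} with the many-but-separated uncolored nodes of $S$ --- i.e., justifying that the recoloring procedure for $v$ genuinely examines and modifies only nodes within distance $O(\log_\Delta n)$ of $v$, so that it can be localized to $G_v$ and executed in parallel across $S$ without conflicts. Everything else (connectivity of the balls, the ``not a clique'' condition, the $\Delta(G_v)\ge 3$ condition, and the transfer of the length bound from $|V(G_v)|$ to $n$) is routine once $c$ is taken large enough relative to the constant hidden in Lemma~\ref{lem:path-recoloring-congest}.
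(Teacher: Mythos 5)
Your overall plan --- invoke Lemma~\ref{lem:path-recoloring-congest} once around each uncolored node and argue that the resulting recolorings cannot interfere because $S$ is well-separated --- is exactly what the paper's (deliberately brief) justification has in mind, and the checks you carry out ($G_v$ connected, $\deg_{G_v}(v)=\Delta\ge 3$, $G_v$ not a clique, $G_v-v$ properly colored, $c>2c_0+1$ for ball disjointness) are all in order. However, the specific way you formalize the localization has a genuine flaw. You apply the lemma to $G_v:=G[B]$ where $B$ is the ball of radius \emph{exactly} $c_0\log_\Delta n$, and when you argue the merged coloring is proper you only consider three kinds of edges: edges inside a single $G_v$ (handled by the lemma), edges between two distinct balls (there are none), and edges with both endpoints outside every ball (colors untouched). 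You never consider edges with one endpoint on the boundary sphere of some $B$ and the other endpoint outside all balls. A boundary node $w$ has strictly fewer neighbors in $G_v$ than in $G$, so from the point of view of the lemma running on $G_v$, $w$ has spurious ``free'' colors --- colors that are actually in use by a $G$-neighbor of $w$ lying just outside $B$. The recoloring cascade of Lemma~\ref{lem:path-recoloring-congest} is a degree-based argument and is under no obligation to avoid terminating at such a $w$ and assigning it one of those colors; if it does, the result is proper on $G_v$ but creates a monochromatic edge of $G$. Properness in $G[B]$ does not transfer to properness in $G$, precisely because restricting to $B$ alters the degree sequence at the boundary.

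The fix is small and does not change the spirit of your argument: take $B$ to be the ball of radius $c_0\log_\Delta n+1$ (and bump $c$ by a constant so the enlarged balls remain non-adjacent). The recoloring path still has length at most $c_0\log_\Delta\lvert V(G_v)\rvert \le c_0\log_\Delta n$, so every recolored node $w$ is at $G$-distance at most $c_0\log_\Delta n$ from $v$, hence strictly interior to $B$; then every $G$-neighbor of $w$ lies in $B$, $\deg_{G_v}(w)=\deg_G(w)$, and properness in $G_v$ really does imply properness in $G$ on all edges incident to recolored nodes. Alternatively --- and this is closer to what the paper actually gestures at, following \cite{panconesi95delta} --- one can skip the formal restriction to a subgraph entirely and argue that the path-recoloring procedure, run on the full graph $G$, only reads and writes within distance $O(\log_\Delta n)$ of $v$; by the separation hypothesis that neighborhood is fully colored except for $v$ itself, so the executions for distinct $v\in S$ operate on vertex-disjoint, non-adjacent regions of $G$ and commute. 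Either way your conclusion stands once this boundary issue is closed.
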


Recall that when talking about layered graphs (\cref{def:layered_graph}), $h$ refers to the number of layers, $\widehat{\delta}(v)$ to the number of neighbors that $v$ has in layers of equal or lesser index than its own, and $\widehat{\Delta}$ is a global bound on the maximum up degree.

\begin{lemma}[{\cite[Lemma 5.5 in arxiv version]{ghaffari2021deterministic}}]
\label{lem:layered_graphs_congest}
    Consider a layered graph where each node $v \in V$ receives a list of colors $L_v$ of size $|L(v)| \geq \widehat{\delta}(v)+1$ containing colors between $1$ and $C$. There exists a deterministic \CONGEST algorithm that list-colors the graph in $O(\log^2 C \cdot \log n + h \cdot \log^2 C \cdot \log \widehat{\Delta})$ rounds, using messages of at most $O(\log C + \log n)$ bits.
\end{lemma}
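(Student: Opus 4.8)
The plan is to reduce list-coloring the layered graph to $h$ successive ordinary $(\deg+1)$-list-coloring instances, one per layer, and then organize the deterministic \CONGEST machinery solving these instances so that its ``globally expensive'' ingredient (the source of the $\log n$ factor) is incurred only once rather than $h$ times.

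First I would process the layers in order of \emph{increasing} index. Suppose $V_1,\dots,V_{i-1}$ are already properly list-colored and consider $v\in V_i$. Since $\widehat{\delta}(v)=\card{N(v)\cap V_i}+\card{N(v)\cap\bigsqcup_{j<i}V_j}$, the number of already-colored (lower-layer) neighbors of $v$ equals $\widehat{\delta}(v)-\card{N(v)\cap V_i}$. Deleting their colors from $L(v)$ leaves $v$ a residual list of size at least $\card{L(v)}-\bigl(\widehat{\delta}(v)-\card{N(v)\cap V_i}\bigr)\ge\card{N(v)\cap V_i}+1$, which is exactly one more than the degree of $v$ in $G[V_i]$; neighbors of $v$ in higher layers are irrelevant at this point and will avoid $v$'s color when colored later. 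Hence coloring $V_i$ is a $(\deg+1)$-list-coloring instance on $G[V_i]$, a graph of maximum degree at most $\widehat{\Delta}$ with colors from $[C]$. Running any correct $(\deg+1)$-list-coloring routine on $G[V_1],G[V_2],\dots$ in this order therefore yields a valid list-coloring of the whole layered graph, and the only thing to control is the round complexity.

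The crux — and the main obstacle — is to keep the $\log n$-type factor from being paid $h$ times. Here I would open up the rounding-based deterministic $(\deg+1)$-list-coloring algorithm of Ghaffari and Kuhn rather than invoke it as a black box. That algorithm derandomizes, via the method of conditional expectations with \emph{rounded} pessimistic estimators (so that all messages carry only $O(\log C+\log n)$-bit fractional scores and partial color counts), a small number of iterations of ``each uncolored node tries a uniformly random color from its residual list''. Its running time decomposes into (i) a layer-independent preprocessing component — computing the low-diameter clustering / hierarchical defective-coloring structure used to localize the derandomization — costing $O(\log^2 C\cdot\log n)$ rounds on the whole graph, and (ii) a finishing component that, given that structure, list-colors any fixed target subgraph in $O(\log^2 C\cdot\log\widehat{\Delta})$ rounds, the $\log\widehat{\Delta}$ accounting for the $O(\log\widehat{\Delta})$ successive degree/excess halvings, each realized by $O(\log^2 C)$ rounds of rounded derandomization. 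I would compute component (i) once, for the union graph $G=\bigsqcup_i V_i$, and then run component (ii) separately for $i=1,\dots,h$; since both a low-diameter clustering and a bounded-defect decomposition restrict to any induced subgraph, the structure built on $G$ is immediately usable on each $G[V_i]$. Summing gives $O(\log^2 C\cdot\log n+h\cdot\log^2 C\cdot\log\widehat{\Delta})$ rounds.

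It then remains to check the bookkeeping that I expect to be routine: that the restriction of the preprocessing structure to $G[V_i]$ still satisfies the diameter/defect guarantees used by the finishing component (true since both notions are monotone under taking induced subgraphs); that each per-layer pass communicates only over $G[V_i]\subseteq G$, so that running the $h$ passes sequentially introduces no extra congestion beyond a constant factor per round; and that every message sent throughout — rounded fractional color scores, partial counts of how many neighbors are still competing for a color, and $O(\log n)$-bit identifiers — fits in $O(\log C+\log n)$ bits, as claimed. Combining these observations with the round counts above yields the lemma.
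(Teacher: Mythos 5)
First, a framing point: the paper does not prove this lemma at all --- it is imported verbatim as \cite[Lemma 5.5 in arxiv version]{ghaffari2021deterministic}, so there is no in-paper proof to match. Your first paragraph (the reduction of each layer to a $(\deg+1)$-list-coloring instance on $G[V_i]$, processing layers in order of increasing index so that the already-colored lower-layer neighbors plus the same-layer degree total exactly $\widehat{\delta}(v)\le \card{L(v)}-1$) is correct and is the routine part; it is in fact the direction that matches the paper's definition of up degree.

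The genuine gap is in the second paragraph, which is exactly where the content of the lemma lives. The whole point of the bound $O(\log^2 C\cdot\log n + h\cdot\log^2 C\cdot\log\widehat{\Delta})$ is that the $\log n$ factor is \emph{not} paid once per layer, and your argument for this rests on an asserted decomposition of the Ghaffari--Kuhn algorithm into ``(i) a layer-independent preprocessing (low-diameter clustering / hierarchical defective coloring) costing $O(\log^2 C\log n)$'' and ``(ii) a per-subgraph finishing step costing $O(\log^2 C\log\widehat{\Delta})$.'' That decomposition is not established, and it mischaracterizes the algorithm: Ghaffari--Kuhn's construction is expressly \emph{without} network decomposition or low-diameter clustering, and the $\log n$ factor in their $O(\log^2\Delta\cdot\log n)$ bound arises because each derandomized random-color-trial iteration only shrinks a global potential (initially $\mathrm{poly}(n)$) by a constant factor, so $\Theta(\log n)$ iterations are needed to finish --- it is not a one-time reusable structure that ``restricts to induced subgraphs.'' Consequently your plan of ``compute (i) once, run (ii) per layer'' does not follow from anything you have argued; if one simply runs the list-coloring routine as a black box on each $G[V_i]$, the cost is $O(h\cdot\log^2 C\cdot\log n)$. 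Obtaining the stated bound requires a different amortization (in Ghaffari--Kuhn, roughly: per layer one runs only $O(\log\widehat{\Delta})$ derandomized iterations, establishing enough palette slack for the leftover nodes, and a single global $O(\log^2 C\log n)$-round phase finishes all residual nodes across layers at once), and your proposal supplies neither that argument nor any substitute for it. The CONGEST-specific claims (message sizes, $O(\log^2 C)$ per rounding step) are likewise asserted rather than derived. As written, the proposal proves only the weaker $O(h\cdot\log^2 C\cdot\log n)$ bound plus a correct reduction, not the lemma.
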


\begin{proof}[Proof of \cref{thm:sota-det-congest}]
    We follow a similar approach as in prior work (see, e.g., \cite[Section 5.4 in arxiv version]{ghaffari2021deterministic}). First, we compute an $(k,k\log n)$ ruling set, where $k \in \Theta(\log_\Delta n)$ where the $\Theta(.)$ hides a sufficiently large constant. This is done in $O(\log^2 n / \log \Delta)$ by \cref{lem:rulingsetCONGEST}. We partition nodes of the graph into $\Theta(\log^2 n / \log \Delta)$ layers according to their distance to the ruling set. Using \cref{lem:layered_graphs_congest}, we color the entire graph except for nodes in the ruling set in $O(\log^2 n \log^2 \Delta)$ rounds. Finally, each node in the ruling set finds a path of length $O(\log_\Delta n)$ that it can use to color itself as guaranteed by \cref{cor:parallel-path-recoloring-congest}. Finding the path, and changing the colors of its nodes as needed, is done in $O(\log_\Delta n)$ rounds.
\end{proof}

We suspect that a combination of the works mentioned in this appendix with the shattering procedure of Ghaffari, Hirvonen, Kuhn, and Maus~\cite{GHKM_dc21} that we use in our results for randomized LOCAL (\cref{sec:randomized}) could similarly yield a randomized algorithm of complexity $O(\poly(\log \log n))$.

\section{Extraction of Nice LDCCs}
\label{sec:nldcc-proof}

This appendix contains the full proof of \cref{lem:sizeLDCC}, which we only sketched in \cref{sec:nice-ldccs}. We restate its statement and the definition of nice LDCCs here.

\begin{figure}[ht]
    \centering
    \includegraphics[width=0.2\linewidth,page=1]{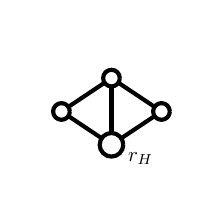}\includegraphics[width=0.2\linewidth,page=2]{ldcc_examples.pdf}\includegraphics[width=0.2\linewidth,page=3]{ldcc_examples.pdf}\includegraphics[width=0.2\linewidth,page=4]{ldcc_examples.pdf}\caption{Examples of nice LDCCs. Each of them is 2-connected, has two nodes of degree $3$, while all other nodes have degree $2$.}
    \label{fig:ldcc-examples}
\end{figure}

\defNLDCC*

\niceLDCClemma*

\begin{proof}
    We show the following, in order, about an LDCC $H$ of strong diameter $k$: that if $H$ has a clique of size $\geq 3$ as induced subgraph, it also contains a nice LDCC $H'$ of size at most $2k+2$; that if $H$ has a cycle as induced subgraph, it also contains a nice LDCC $H'$ of size at most $4k$; that every LDCC has a cycle as induced subgraph.
    The result follows: for an LDCC $H$, we consider the smallest cycle $C$ that is an induced subgraph of $H$; if $C$ has size $3$, the first argument about cliques applies; if $C$ has size $4$ or more, the second argument about larger cycles applies.

    \subparagraph*{Cliques can be extended to a nice LDCC}
    Let $K$ be a clique of maximum size $\geq 3$ in $H$.
    $H$, being an LDCC, is not a clique, so $V(H) \not \subseteq V(K)$. $H$ is also connected, implying there exists some $v \in V(H)\setminus V(K)$ s.t.\ $N(v) \cap V(K) \neq \emptyset$, and by maximality of the clique $K$, $N(v) \cap V(K) \neq V(K)$ for each such $v$.
    Consider the neighbors of $K$, $N(K) \setminus V(K)$.

    If there exists $v \in N(K) \setminus V(K)$ s.t.\ $\card{N(v) \cap V(K)} \geq 2$, then the set $U' = \set{v,u,u',u''}$ where $u,u' \in N(v) \cap V(K)$ and $u'' \in V(K) \setminus N(v)$ induces a nice LDCC of size $4$.
Otherwise, $\card{N(v) \cap V(K)} =1$ for all $v \in N(K) \setminus V(K)$: let us denote by $u_v$ the unique vertex in $N(v) \cap V(K)$ for each $v \in N(K) \setminus V(K)$.
    $\card{N(K) \setminus V(K)} \geq 2$ by the $2$-connectivity of $H$, as if $\card{N(K) \setminus V(K)}$ contained a single vertex $v$, removing $u_vv$ would disconnect $H$.

    For each $v \in N(K) \setminus V(K)$, $v' \in N(K) \setminus (V(K)\cup \set{v})$, let $P_{v,v'}$ be a shortest path from $v$ to $v'$ in $H \setminus \set{u_v}$, guaranteed to exist by $H$ being $2$-connected.
    Let $v,v'$ be two nodes generating a path $P_{v,v'}$ of minimal size for all possible pairs $v,v'$. 
    Let $t$ be its length and $w_0 = v,w_1,\ldots,w_t=v'$ be its nodes. By minimality of the path, $P_{v,v'} \cap (V(K) \cup N(K)) = \set{w_0,w_t}$.
    Let $u \in V(K) \setminus \set{u_v,u_{v'}}$. The set $U' = \set{u,u_v,u_{v'}} \cup V(P_{v,v'})$ induces a nice LDCC of size $t+4$. 

    Finally, $t \leq 2k-2$, meaning the LDCC has size at most $2k+2$. Indeed, consider the node $v'' = w_{\ceil{t/2}}$ in the middle of the path $P_{v,v'}$. This node cannot be at distance $< 1+\ceil{t/2}$ from $u_v$ in the graph $H$ by minimality of the path $P_{v,v'}$. Since $H$ has strong diameter $k$, $1+\ceil{t/2} \leq k$, hence the bound on the size of our nice LDCC in terms of $k$.

    \subparagraph*{Bigger cycles $C$ can also be extended to a nice LDCC}
    Consider now that the LDCC $H$ contains no triangle, but does contain an induced cycle. Let $C$ be such an induced cycle of minimum size $s \geq 4$. As in our argument for when $H$ contains a triangle, consider the neighbors of $C$, $N(C) \setminus V(C)$. This set is not empty since $H$ is connected and not a cycle.

    Suppose a node $v \in N(C) \setminus V(C)$ has multiple neighbors in $C$. By minimality of the cycle $C$, this is only possible with $C$ of size $4$ and $v$ having $2$ neighbors at opposite ends of $C$. In that case, $U' = V(C) \cup \set{v}$ induces a nice LDCC.
We now assume that each node $v \in N(C) \setminus V(C)$ has a single neighbor $u_v \in V(C)$. $\card{N(C) \setminus V(C)} \geq 2$ by the $2$-connectivity of $H$, since if it contained a single vertex $v$, removing $u_v$ would disconnect $H$.

    For each $v \in N(C) \setminus V(C)$, $v' \in N(C) \setminus (V(C)\cup \set{v})$, let $P_{v,v'}$ be a shortest path from $v$ to $v'$ in $H \setminus \set{u_v}$, guaranteed to exist by $H$ being $2$-connected, and $P'_{u_v,u_{v'}}$ be a shortest path from $u_v$ to $u_{v'}$ on $C$.
    Consider the cycles $C'_{v,v'} = G[V(P_{v,v'}) \cup P'_{u_v,u_{v'}}]$
induced by those two paths.
    Let $v,v'$ be two nodes generating a cycle $C'_{v,v'}$ of minimal size for all possible pairs $v,v'$. With $t$ the length of $P_{v,v'}$, the set $U' = V(C) \cup P_{v,v'}$ induces a nice LDCC of size $s + t+1$.

    We now bound the size of this nice LDCC in terms of $k$. 
    Let $s'\geq 1$ the length of $P'_{u_v,u_{v'}}$.
    By minimality of the cycle $C$ and definition of the cycle $C'_{v,v'}$, $s = \card{V(C)} \leq \card{V(C'_{v,v'})} = t+s'+2$.

    Let $w_0=v,w_1,\ldots,w_t = v'$ be the nodes of $P_{v,v'}$ and $w'_0=u_v,w'_1,\ldots,w'_{s'} = u_{v'}$ that of $P'_{u_v,u_{v'}}$. For each $(i,j) \in \set{0,\ldots,s'} \times \set{0,\ldots,t}$, $w'_i$ and $w_j$ are at distance $1+\min(i+j,s'+t-i-j)$ in $C'_{v,v'}$. This is the length of a shortest path between $w'_i$ and $w_j$ in $H$ by minimality of $C'_{v,v'}$. Set $i = \floor{s'/2}$ and $j = \floor{t/2}$. Since $H$ has diameter $k$ we get that $1+\min( \floor{s'/2} + \ceil{t/2} ,\ceil{s'/2} + \floor{t/2}) \leq k$. Thus $s' + t \leq 2k-1$. Therefore, the size of our nice LDCC satisfies $s+t+1 \leq 2(t+s')+3-s' \leq 4k$.

    \subparagraph*{An LDCC contains an induced cycle.}

    Consider $3$ connected nodes $u,u',u''$ in $H$. If they form a triangle, we are done.
    Otherwise, let them be arranged such that $\set{uu',u'u''}\subseteq E(G)$ and $uu'' \not \in E(G)$. Take a shortest path $P$ from $u$ to $u''$ in $H \setminus \set{u'}$, known to exist by the $2$-connectivity of $H$. Let $w_0 = u,w_1,\ldots,w_t=u''$ be the nodes of $P$. If there exists no edge between $u'$ and $w_i$ with $0<i<t$, $\set{u'}\cup V(P)$ induces a cycle. Otherwise, let $w_i$ be the node of the path $P$ of minimal index s.t.\ $u'w_i \in E(G)$. Then $\set{u'}\cup \set{w_j:0 \leq j \leq i}$ induces a cycle.
\end{proof}

\end{document}